\newcolumntype{Z}{>{\setbox0=\hbox\bgroup}c<{\egroup}@{\hspace*{-\tabcolsep}}}
\newtheorem{thm}{Theorem}[section]
\newtheorem{ass}{Assumption}[section]
\newtheorem{cor}{Corollary}[section]
\newtheorem{pro}{Proposition}[section]
\newtheorem{rem}{Remark}[section]
\newtheorem{alg}{Algorithm}[section]
\newtheorem{defn}{Definition}[section]
\newtheorem{lem}{Lemma}[section]
\pgfplotsset{compat=1.7}
\tikzset{mycolor/.style = {dashed,rounded corners,line width=1bp,color=#1}}%
\tikzset{myfillcolor/.style = {draw,fill=#1}}%
\tikzset{
	declare function={
		normcdf(\x,\m,\s)=1/(1 + exp(-0.07056*((\x-\m)/\s)^3 - 1.5976*(\x-\m)/\s));
	}
}
\newcommand{\cb}{\mathcal{B}}
\newcommand{\ck}{\mathcal{K}}
\newcommand{\cn}{\mathcal{N}}
\newcommand{\cs}{\mathcal{S}}
\newcommand{\cw}{\mathcal{W}}
\newcommand{\br}{\mathbb{R}}
\newcommand{\chol}{\mbox{chol}}
\newcommand{\logm}{\mbox{logm}}
\newcommand{\stsim}[1]{\stackrel{#1}{\sim}}
\newcommand{\bigp}[1]{\big({#1}\big)}
\newcommand{\lrp}[1]{\left({#1}\right)}
\newcommand{\lrabs}[1]{\left|{#1}\right|}
\newcommand{\lrfl}[1]{\lfloor{#1}\rfloor}
\newcommand{\cvec}[2]{
  \left(
    \begin{array}{c}
      #1 \\
      #2 \\
    \end{array}
  \right)}
\newcommand{\mat}[4]{
  \left(
    \begin{array}{cc}
      #1 & #2 \\
      #3 & #4 \\
    \end{array}
  \right)}
\newcommand{\BE}{\begin{equation}}
\newcommand{\EE}{\end{equation}}
\newcommand{\BEqn}{\begin{eqnarray*}}
\newcommand{\EEqn}{\end{eqnarray*}}
\newcommand*{\addFileDependency}[1]{
  \typeout{(#1)}
  \@addtofilelist{#1}
  \IfFileExists{#1}{}{\typeout{No file #1.}}
}
\newcommand*\showfontsize{\f@size{} point}
\newcommand{\blind}{1}
\begin{document}
 
\if1\blind
{
 	\title{Testing Serial Independence of Object-Valued Time Series}	
		
     \author[1]{Feiyu Jiang}
     \author[2]{Hanjia Gao}
     \author[2]{Xiaofeng Shao}
        \affil[1]{Department of Statistics and Data Science, Fudan University}

    \affil[2]{Department of Statistics, University of Illinois at Urbana Champaign}
		
	\date{}	
	\maketitle
} \fi


\begin{abstract}
We propose a novel method for testing serial independence of object-valued time series in metric spaces, which is more general than Euclidean or Hilbert spaces. The proposed method is fully nonparametric, free of tuning parameters and can capture all nonlinear pairwise dependence.  The key concept used in this paper is the distance covariance in metric spaces, which is extended to auto-distance covariance for object-valued time series.   
Furthermore, we propose a generalized spectral density  function to account for pairwise dependence at all lags and construct  a Cram\'er von–Mises type test statistic. New theoretical arguments are developed to establish the asymptotic behavior of the test statistic. A wild bootstrap is also introduced to obtain the critical values of the non-pivotal limiting null distribution.  Extensive numerical simulations and {two}  real data applications are conducted to illustrate the effectiveness and versatility of our proposed test.
\end{abstract}

\noindent\textit{Keywords}: Distance covariance; Non-Euclidean valued data; Random object; Spectral test; White noise testing.


\section{Introduction}
Random objects in general metric spaces have become increasingly common in modern statistical and econometric research. For example, the intraday return path of a financial asset \citep{aue2017functional},  the annual composition of energy sources \citep{zhu2023spherical}, social networks \citep{board2021learning}, EEG scans or MRI fiber tracts of  patients \citep{kurtek2012statistical} can all be viewed as random objects in certain metric spaces, although they are typically given specific names  as functional data, compositional data, network data, image data and curve data, among others.  The concept of random objects also undoubtedly include classical notions such as vectors, covariance matrices and distributions.  Instead of building specific models for each of them, by viewing these objects as random elements in metric spaces, we may be able to simplify the modeling and inference while preserving the ability of extracting meaningful information and patterns in a unified fashion \citep{petersen2019frechet, dubey2019frechet, dubeymuller2020, zhangJTSA2022}. 


For many endeavors in this area, the data they analyzed is collected with a natural ordering, i.e., the data is object-valued time series.   However, most existing modeling and inference techniques either presume temporal independence or construct time series models without conducting diagnostic checking to assess the goodness-of-fit. This is mainly due to the unavailability of appropriate tests. As a result, researchers often overlook the serial dependence in their data and fail to account for its impact on their analyses. Consequently, the validity and reliability of their findings could be compromised. This motivates us to develop a new test for serial independence of object-valued time series. 


 Testing serial independence has a long and rich history in statistics and econometrics, with a vast literature that cannot be exhaustively listed. The early work  dates back to \cite{box1970distribution} and \cite{ljung1978measure}. Since then, numerous tests have been proposed for univariate random variables 
 or multivariate vectors in Euclidean space, with much attention devoted to testing for second-order uncorrelatedness,   see  \cite{li1981distribution}, \cite{deo2000spectral}, \cite{lobato2001testing}, \cite{escanciano2009automatic},  \cite{shao2011bootstrap}, to name a few. These tests typically capture linear serial dependence in data, and have no power against nonlinear dependence. Nonlinear serial dependence is indeed  prevalent among many real-world time series, and many parametric nonlinear models have been proposed to capture nonlinear dependence. A prominent example is GARCH model, which  implies uncorrelatedness but is serially dependent. 
 Note that several tests have been developed to target at  higher order dependence \citep{li1994squared,ling1997diagnostic} and  at general nonlinear dependence \citep{hong1999hypothesis,escanciano2006generalized}.   We also note the recent developments for testing white noise hypothesis in  functional time series in Hilbert space, see e.g.  \cite{gabrys2007portmanteau}, \cite{horvath2013test} and \cite{zhang2016white}.

Despite many tests available for testing the serial independence/uncorrelatedness  in Euclidean and Hilbert spaces, they cannot be directly used for  testing serial independence of  object-valued time series, because of the lack of classic  algebraic operations in general metric space, such as addition,  multiplication and taking inner product.
To fill this gap, we propose to build a new test based on 
\textit{distance covariance}, which was originally proposed by \cite{szekely2007measuring} to measure dependence among two random vectors in Euclidean space using characteristic functions, and later extended  by \cite{zhou2012measuring} into time series setting by using \textit{auto-distance covariance} (ADCV). A crucial feature of the  (auto-)distance covariance is that it can capture both linear and nonlinear serial dependence in data. Although many researchers have employed this idea into serial dependence testing, they are only valid  in Euclidean space \citep{fokianos2017consistent, fokianos2018testing, davis2018applications}.   

Based on the concept of \textit{distance covariance in metric space} by \cite{lyons2013distance}, we are able to extend ADCV in Euclidean space into general metric spaces, which then naturally serves the goal of testing independence at fixed lags. To take into account   pairwise  (in)dependence at all lags, we then propose  a generalized spectral density/distribution function using ADCVs in the same spirit of classical  spectral density/distribution function using autocovariances. Following the developments in \cite{shao2011bootstrap}, who proposed 
a spectrum-based test for white noise hypothesis of a univariate time series, we develop a Cram\'er von–Mises (CvM) type test statistic and study the limiting behavior of the test under both the null and alternatives. { Our test significantly enhances classical spectrum-based tests in two fundamental ways. First, it captures both linear and nonlinear serial dependencies, a capability lacking in conventional methods except for \cite{hong1999hypothesis} and \cite{escanciano2006generalized}. Second, its versatility extends to data objects in general metric spaces, which is much broader than the Euclidean or Hilbert space considered in the literature. To the best of our knowledge, this is the first formal attempt at testing temporal independence for object-valued time series in a unified manner.} Unlike conventional autocovariance function or spectrum based tests for Euclidean time series, the estimation of ADCVs is based on U-statistics. Therefore, new theoretical arguments are developed to establish the asymptotic theory for the proposed test statistic. Since the limiting null distribution is nonstandard and non-pivotal, we propose a wild bootstrap approach to  facilitate  practical implementation of our test. The bootstrap consistency is also established.

We now introduce the notation. Let $(\Omega,d)$ be a separable metric space, and let $(\mathbb{S},\mathbb{P},\mathcal{F})$ be the probability space. For a vector $a\in\mathbb{R}^q$, we denote its Euclidean norm as $|a|_q$. Denote $``\to_p"$ and  $``\to_d"$ the convergence in probability and in distribution, respectively. Denote $L_2[0,\pi]$ the Hilbert space $\mathbb{H}$ of all square integrable functions on $[0,\pi]$ (with respect to Lebesgue measure) with inner product $\langle f,g\rangle=\int_{[0,\pi]}f(\zeta)g^c(\zeta)\mathrm{d}(\zeta)$ where $g^c(\zeta)$ denotes the complex conjugate of $g(\zeta)$, and the norm $\|f\|=\langle f,f\rangle^{1/2}$. We denote $``\Rightarrow"$ as weak convergence in $L_2[0,\pi]$. 

The rest of the paper is organized as follows. We first provide  backgrounds of (auto-) distance covariance in metric spaces in \Cref{Sec:pre}.  \Cref{Sec:test} then introduces our distance covariance based test statistics, and investigates their asymptotic distributions under null and alternatives. \Cref{Sec:boot} provides the wild bootstrap algorithm for approximating the limiting null distribution. Extensive numerical experiments are conducted in \Cref{Sec:simu} with competing methods for testing serial independence of {the functional time series in Hilbert space, the covariance matrix time series, and the univariate distributional time series.}  \Cref{Sec:app} illustrates the usefulness and versatility of our tests via {two} meaningful real data applications in financial data and human mortality data.
\Cref{Sec:con} concludes. Additional numerical results and all the technical proofs are provided in the supplement. {The code is available at \url{https://github.com/hjgao117/JiangGaoShao}.}


\section{Preliminaries}\label{Sec:pre}
In this section, we  provide some background  on  the  concept of distance covariance in metric spaces and its use in quantifying dependence.  The  extension to auto-distance covariance (ADCV) for the  time series setting is also introduced. 
\subsection{Distance covariance}
The concept of \textit{distance covariance} was first introduced by \cite{szekely2007measuring} as a measure of dependence between two random vectors $X\in\mathbb{R}^p$ and $Y\in\mathbb{R}^q$. It is defined as the weighted integral of the discrepancy between the joint characteristic function and the product of the marginal characteristic functions of $(X,Y)$.
\begin{defn}[Distance Covariance in Euclidean Space]\label{dfn_dcovE}
    For $X\in\mathbb{R}^p$ and $Y\in\mathbb{R}^q$, the distance covariance is given  by \begin{flalign*}
   &\mathrm{dcov}(X,Y)\\=&\int_{\mathbb{R}^{p+q}} |\mathbb{E}\{\exp[i(t'X+s'Y)]\}-\mathbb{E}\{\exp[it'X]\}\mathbb{E}\{\exp[is'Y]\}|^2 (c_pc_q|t|_p^{1+p}|s|_q^{1+q})^{-1}\mathrm{d}t\mathrm{d}s 
\end{flalign*}
where $c_d=\pi^{(1+d)/2}/\Gamma((1+d)/2)$ and $\Gamma$ is the Gamma function. 
\end{defn}

Note that for  notational simplicity, we shall use $\mathrm{dcov}$ instead of $\mathrm{dcov}^2$ as used in the original definition of distance covariance. 
\cite{szekely2007measuring} also managed to derive the following alternative definition when $\mathbb{E}[|X|_p^2+|Y|_q^2]<\infty$, 
\begin{flalign}
    \notag& \mathrm{dcov}(X,Y)\\=&\label{dcovE2}\mathbb{E}\left|X-X^{\prime}\right|_p\left|Y-Y^{\prime}\right|_q+\mathbb{E}\left|X-X^{\prime}\right|_p \mathbb{E}\left|Y-Y^{\prime \prime}\right|_q-2 \mathbb{E}\left|X-X^{\prime}\right|_p\left|Y-Y^{\prime \prime}\right|_q,
\end{flalign}
with $(X',Y')$ and $(X'',Y'')$ being independent copies of $(X,Y)$. 

Distance covariance can be used to characterize the dependence between $X$ and $Y$ due to the following crucial property. \begin{pro}
\label{property}
$\mathrm{dcov}(X,Y)\geq 0$, and the equality holds iff $X$ is independent of $Y$. 
\end{pro}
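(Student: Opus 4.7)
The plan is to read off both statements directly from the integral formulation in Definition~\ref{dfn_dcovE}. Write $\phi_{X,Y}(t,s)=\mathbb{E}\exp\{i(t'X+s'Y)\}$ and let $w(t,s):=(c_pc_q|t|_p^{1+p}|s|_q^{1+q})^{-1}$. The integrand $|\phi_{X,Y}(t,s)-\phi_X(t)\phi_Y(s)|^2$ is pointwise non-negative and $w$ is strictly positive off the measure-zero set $\{t=0\}\cup\{s=0\}\subset\mathbb{R}^{p+q}$. Interpreting $\mathrm{dcov}(X,Y)$ as a Lebesgue integral with values in $[0,\infty]$ therefore gives the non-negativity immediately. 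The ``if'' direction of the equivalence is equally immediate: when $X$ and $Y$ are independent, $\phi_{X,Y}(t,s)=\phi_X(t)\phi_Y(s)$ for every $(t,s)\in\mathbb{R}^{p+q}$, so the integrand vanishes identically and $\mathrm{dcov}(X,Y)=0$.

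For the converse, suppose $\mathrm{dcov}(X,Y)=0$. Since the product of a non-negative function and a strictly positive weight has vanishing Lebesgue integral only when the function itself vanishes almost everywhere, one obtains
\[
\phi_{X,Y}(t,s)=\phi_X(t)\phi_Y(s)\qquad\text{for Lebesgue-a.e. }(t,s)\in\mathbb{R}^{p+q}.
\]
The key step is then to upgrade this almost-everywhere identity to a pointwise one. Both sides are characteristic functions of probability measures on $\mathbb{R}^{p+q}$ (the joint law of $(X,Y)$ on the left, and the product of the marginal laws on the right), hence both are continuous (indeed uniformly continuous) functions on $\mathbb{R}^{p+q}$. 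Two continuous functions agreeing on a set of full Lebesgue measure must agree on a dense set, and hence everywhere. The uniqueness theorem for characteristic functions then implies that the joint distribution of $(X,Y)$ coincides with the product of the marginals, i.e., $X$ and $Y$ are independent.

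The only delicate point is the almost-everywhere to everywhere upgrade, which is really an elementary observation once the continuity of characteristic functions is invoked; the singular weight $w$ plays no role beyond having a measure-zero singular set. It is worth emphasizing that no moment conditions on $X$ or $Y$ are needed: the value $\mathrm{dcov}(X,Y)=+\infty$ is a priori permitted by the definition, and the argument only uses that a non-negative weighted integrand with zero integral must vanish a.e., together with continuity and uniqueness of characteristic functions.
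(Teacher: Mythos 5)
Your proof is correct and is essentially the standard argument from \cite{szekely2007measuring} (specifically their Theorem~3): non-negativity is immediate from the integral form, and the forward implication combines pointwise vanishing a.e.\ of the integrand (forced by the strictly positive weight) with the uniform continuity and uniqueness theorem for characteristic functions. The paper itself does not re-prove this result but simply cites it, so your argument matches the reference that the paper implicitly relies on; your remark that no moment condition on $(X,Y)$ is needed for this proposition (in contrast to the alternative expectation formula in~\eqref{dcovE2}) is a genuine and worthwhile observation.
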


\Cref{property} has proven to be a very powerful tool  in Euclidean space, with its use widely appeared in mutual dependence testing \citep{yzs2018}, feature screening \citep{lzz2013}, dimension reduction \citep{shengyin2016}, among many other applications. However, \Cref{dfn_dcovE} is not easily extended to general metric spaces  because conventional algebraic manipulation such as addition and  multiplication  may not be applied.  
Instead, by viewing $|\cdot|_q$ as a metric in Euclidean space, the alternative definition in \eqref{dcovE2} is extendable.
\subsection{Distance covariance in metric space}
Let $(\Omega,d)$ be a separable metric space. Let $M(\Omega)$ be the set of probability measures on $\Omega$, we denote the subset  of $M(\Omega)$ possessing $p$th moment as $$M_p(\Omega)=\{\nu\in M(\Omega): \text{  for some }\omega\in\Omega, 
\int_{\Omega} d^p(\omega,x) \mathrm{d}\nu(x)<\infty\}.
$$
For $\nu\in M_1(\Omega)$, define \citep[Lemma ~2.1]{lyons2013distance}  {\begin{flalign}\label{dfn_d}
\begin{split}
    &d^{(1)}(x)=\int_{\Omega} d(x,x')\mathrm{d}\nu(x'),\quad D=\int_{\Omega}  d^{(1)}(x)\mathrm{d}\nu(x), \\ &d_{\nu}(x,x')= d(x,x')-d^{(1)}(x)-d^{(1)}(x')+D.
\end{split}
\end{flalign}
}

Clearly, for two independent random objects $X,X'\sim\nu(\cdot)$, we can write $d^{(1)}(X)=\mathbb{E}[d(X,X')|X]$, and $D=\mathbb{E}[d(X,X')]$. 

{Note that   if $X$ and $Y$ are two random vectors taking values in conventional $q$-dimensional Euclidean space (i.e., $(\Omega,d)=(\mathbb{R}^q,|\cdot|_q)$) with marginal distributions being $\nu_X$ and $\nu_Y$ respectively,   then   \eqref{dcovE2} is equivalent to $\mathrm{dcov}(X,Y)=\mathbb{E}[d_{\nu_X}(X,X')d_{\nu_Y}(Y,Y')]$. }  Therefore, \cite{lyons2013distance} proposed the following definition of distance covariance in general metric spaces. 
\begin{defn}[Distance Covariance in Metric Space]\label{dfn_dcov}
    For $X,Y$ taking values in $(\Omega,d)$  whose marginals are  $\nu_X,\nu_Y\in M_1(\Omega)$ respectively,   the distance covariance between $X$ and $Y$ is given as  
\begin{equation}\label{dcov}
   \mathrm{dcov}(X,Y)=\mathbb{E}[d_{\nu_X}(X,X')d_{\nu_Y}(Y,Y')], 
\end{equation}
where $(X',Y')$ is an independent copy of $(X,Y)$. 
\end{defn}

However, as pointed out by \cite{lyons2013distance}, in general metric spaces, the above definition alone is insufficient for  \Cref{property} to hold, and  additional  topological assumption is required. He then introduced  the concept of {\it strong negative type} to resolve this issue.  

{\begin{defn}[Strong Negative Type]\label{dfn_snt}
   We say $(\Omega,d)$ is of strong negative type, if 
for $\nu_1,\nu_2\in M_1(\Omega)$  such that $\nu_1\neq \nu_2$, and $\nu_-=\nu_1-\nu_2$, we have
   $$
   \iint_{\Omega^2} d(x_1,x_2) \mathrm{d}\nu_{-} (x_1)\mathrm{d}\nu_{-} (x_2)< 0.
   $$
\end{defn}}
 Note that the class of metric spaces of strong negative type is actually quite large, for example, every separable Hilbert space is of strong negative type. {However, we also note there are many spaces that do not satisfy \Cref{dfn_snt}, e.g., $\mathbb{R}^q$ with $L_p$-metric  for $3\leq q\leq \infty$, $2<p\leq \infty$ are not of strong negative type. }We refer to \cite{lyons2013distance,lyons2014hyperbolic,lyons2020strong} for more discussions and examples. With the notion  of strong negative type, distance covariance then completely characterizes the (in)dependence in metric space, given in the following proposition.
\begin{pro}\label{pro_dcov}
   [Theorem 3.11 in \cite{lyons2013distance}] If $(\Omega,d)$ is of  strong negative type,  for $X,Y$ taking values in $(\Omega,d)$  whose marginals are  $\nu_X,\nu_Y\in M_1(\Omega)$ respectively, \Cref{property} continues to hold. 
\end{pro}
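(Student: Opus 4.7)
The plan is to rewrite $\mathrm{dcov}(X,Y)$ in terms of the signed measure $\sigma=\theta-\nu_X\otimes\nu_Y$, with $\theta$ the joint law of $(X,Y)$, and then to exploit the Schoenberg embedding furnished by (strong) negative type. First I would establish the representation
\begin{equation*}
\mathrm{dcov}(X,Y)=\iiiint_{\Omega^4} d(x,x')\,d(y,y')\,\mathrm{d}\sigma(x,y)\,\mathrm{d}\sigma(x',y').
\end{equation*}
The inputs are two simple observations: from \eqref{dfn_d}, the kernel $d_{\nu_X}$ is row- and column-centered against $\nu_X$ (and symmetrically for $d_{\nu_Y}$), and since $\theta$ and $\mu:=\nu_X\otimes\nu_Y$ share the same marginals, both coordinate projections of $\sigma$ are the zero measure. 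Expanding $\mathrm{d}\theta\,\mathrm{d}\theta=(\mathrm{d}\mu+\mathrm{d}\sigma)(\mathrm{d}\mu+\mathrm{d}\sigma)$ in the defining expectation \eqref{dcov}, the centering identities annihilate the three pieces containing a factor of $\mathrm{d}\mu$; within the surviving $\mathrm{d}\sigma\,\mathrm{d}\sigma$ term, the zero-marginal property also allows one to replace the centered kernels $d_{\nu_X},d_{\nu_Y}$ by the raw distances $d$. The finite first moments from the $M_1(\Omega)$ hypothesis in \Cref{dfn_dcov} justify the Fubini interchanges.

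For nonnegativity I invoke Schoenberg's theorem: a separable metric space is of negative type iff there exists an isometric embedding $\varphi:\Omega\to\mathcal{H}$ into a separable Hilbert space with $d(x,x')=\|\varphi(x)-\varphi(x')\|_{\mathcal{H}}^2$. Polarization decomposes $d(x,x')d(y,y')$ into the pure inner-product piece $4\langle\varphi(x),\varphi(x')\rangle\langle\varphi(y),\varphi(y')\rangle$ together with eight auxiliary pieces; each auxiliary piece has one factor that, after bilinearity, reduces to an integral of $\varphi$ or $\|\varphi\|^2$ against a vanishing coordinate marginal of $\sigma$, and hence is zero. What remains is
\begin{equation*}
\mathrm{dcov}(X,Y)=4\,\Bigl\|\iint_{\Omega^2}\varphi(x)\otimes\varphi(y)\,\mathrm{d}\sigma(x,y)\Bigr\|_{\mathcal{H}\otimes\mathcal{H}}^2\geq 0,
\end{equation*}
which is the first half of \Cref{property}. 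The Bochner-type integral on the right is well defined under the first-moment assumption on $\nu_X,\nu_Y$.

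For the characterization of independence, one direction is immediate: $X\indep Y$ forces $\sigma\equiv 0$ and hence $\mathrm{dcov}(X,Y)=0$. The converse is the substantive direction and is exactly where \Cref{dfn_snt} enters. Applying the same polarization to a signed measure $\eta$ on $\Omega$ with $\eta(\Omega)=0$ gives $\iint d\,\mathrm{d}\eta\,\mathrm{d}\eta=-2\|\int\varphi\,\mathrm{d}\eta\|_{\mathcal{H}}^2$, so strong negative type is equivalent to the barycenter map $\eta\mapsto\int\varphi\,\mathrm{d}\eta$ being injective on differences of elements of $M_1(\Omega)$. The main obstacle is to lift this one-dimensional injectivity to the product: from $\mathrm{dcov}(X,Y)=0$ we get $\iint\varphi(x)\otimes\varphi(y)\,\mathrm{d}\sigma(x,y)=0$ in $\mathcal{H}\otimes\mathcal{H}$, so pairing with elementary tensors $h\otimes k$ yields $\iint\langle\varphi(x),h\rangle\langle\varphi(y),k\rangle\,\mathrm{d}\sigma(x,y)=0$ for every $h,k\in\mathcal{H}$. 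Disintegrating $\sigma$ along the total-variation measure of one of its marginal reductions and applying the univariate strong-negative-type statement fibrewise in the remaining variable then forces $\sigma\equiv 0$, whence $\theta=\mu$ and $X\indep Y$.
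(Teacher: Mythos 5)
The paper offers no proof of its own: Proposition~\ref{pro_dcov} is a citation of Theorem~3.11 in \cite{lyons2013distance}. Your sketch reconstructs the Lyons argument along the correct lines, and the machinery is sound up to the last paragraph: the $\sigma=\theta-\nu_X\otimes\nu_Y$ representation of $\mathrm{dcov}$, the Schoenberg embedding $d(x,x')=\lrnorm{\varphi(x)-\varphi(x')}^{2}$, and the identity $\mathrm{dcov}(X,Y)=4\lrnorm{\iint\varphi\otimes\varphi\,\mathrm{d}\sigma}_{\ch\otimes\ch}^{2}$ are exactly the ingredients Lyons uses, and every cancellation you claim does follow from the two zero coordinate marginals of $\sigma$.

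The last paragraph is where the substance is, and your sketch passes over it too quickly. From $\iint\langle\varphi(x),h\rangle\langle\varphi(y),k\rangle\,\mathrm{d}\sigma(x,y)=0$ for all $h,k$ you do not yet have $\int\varphi\,\mathrm{d}\sigma_y=0$ for a.e.\ fibre $\sigma_y$, which is what a single fibrewise application of strong negative type would close; the constraint you hold is global in $y$. The lifting requires \emph{two} invocations of barycentre injectivity, one per coordinate. Concretely: fix $k$ and put $\rho_k(A)=\iint\mathbf{1}_A(x)\langle\varphi(y),k\rangle\,\mathrm{d}\sigma(x,y)$; then $\rho_k(\Omega)=0$ because $\pi_{2*}\sigma=0$ and $\int\varphi\,\mathrm{d}\rho_k=0$ by varying $h$, so strong negative type forces $\rho_k\equiv0$. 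Only now, for every Borel $A$, does $\sigma_A(\cdot):=\sigma(A\times\cdot)$ satisfy $\int\varphi\,\mathrm{d}\sigma_A=0$ and $\sigma_A(\Omega)=0$, whence a second pass gives $\sigma_A\equiv0$ and hence $\sigma\equiv0$. (Your phrase ``total-variation measure of one of its marginal reductions'' is also inverted: both marginal reductions of $\sigma$ are the zero measure, so their total-variation measures vanish; what you want is a coordinate marginal of $|\sigma|$.) Finally, to apply strong negative type to $\rho_k$ you must place it in the class of differences of $M_1$ probability laws, and the natural bound on $\int d(\omega,x)\,\mathrm{d}|\rho_k|$ involves $\iint d(\omega,x)\lrnorm{\varphi(y)}\,\mathrm{d}|\sigma|$, which needs a joint first-moment condition strictly stronger than the marginal hypothesis $\nu_X,\nu_Y\in M_1(\Omega)$ quoted in the proposition; this imprecision is inherited from the paper's statement (the hypotheses suffice for $\sigma_A$ and for the well-definedness of the tensor barycentre, but not for $\rho_k$, nor for the absolute convergence of $\mathrm{dcov}$ itself) and is not a defect peculiar to your route.
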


\subsection{Auto-distance covariance}

{
The distance covariance, as defined in \Cref{dfn_dcov}, measures the dependence between two random objects and might not be readily applicable to the time series context. Therefore, in order to address this limitation, we introduce the concept of auto-distance covariance (ADCV). 
} This notion was first proposed by 
  \cite{zhou2012measuring} for conventional Euclidean valued time series {by measuring temporal dependence between $\{X_t\}_{t\in\mathbb{Z}}\in\mathbb{R}^q$ and its lagged observation $\{X_{t-k}\}_{t\in\mathbb{Z}}$ for a fixed lag order $k\in\mathbb{Z}$.} Here we generalize the idea  for time series objects in metric space. 
\begin{defn}[ADCV]\label{dfn_adcv}
    Assume that $\{X_t\}_{t\in\mathbb{Z}}$ is a sequence of strict stationary time series taking values in $(\Omega,d)$. For $k\in\mathbb{Z}$,  we call 
$$
V(k)=  \mathbb{E}[d_{\nu}(X_t,X'_{t})d_{\nu}(X_{t-k},X'_{t-k})],
$$
the auto-distance covariance of $X_t$ at lag $k$, where $\{X_t'\}_{t\in\mathbb{Z}}$ is an independent copy of $\{X_t\}_{t\in\mathbb{Z}}$.
\end{defn}

It is clear that $V(k)=V(-k)$ for $k<0$, and  $V(0)=\mathbb{E}[d_{\nu}^2(X_t,X_{t}')]>0$. In addition, by  \Cref{pro_dcov},   $V(k)=0$, $k\neq 0$, iff $X_t$ and $X_{t-k}$ are independent of each other. We will exploit this property to build our test statistics given in the next section.
\begin{rem}
   In \cite{zhou2012measuring}, the  \textsc{ADCV} for Euclidean valued time series is  based on \Cref{dfn_dcovE} by replacing $(X,Y)$ with $(X_t,X_{t-k})$.  However,  for general metric spaces, such treatment is no longer valid.  We thus call for \Cref{dfn_dcov}. It can be shown that for $(\Omega,d)=(\mathbb{R}^q,|\cdot|_q)$, \Cref{dfn_adcv} is equivalent to the one in \cite{zhou2012measuring}.
\end{rem}


\section{Test statistics and asymptotics}\label{Sec:test}
Given  a sequence of stationary random objects $\{X_t\}_{t=1}^n$ that reside in  a separable metric space $(\Omega,d)$ of strong negative type, we are interested in testing the serial independence of $\{X_t\}_{t=1}^n$.   {The hypothesis testing problem is formulated as  
$$
 H_0: \{X_t\}_{t=1}^n \quad\mbox{is i.i.d.}~~\mbox{v.s.}\quad  H_a: V(k)\neq  0 \quad \mbox{ for some  } k\neq 0.
$$}
\subsection{Test statistics at fixed lags}
To illustrate the idea, we first consider a relatively simpler task by forming a test based on ADCV at fixed lags $1\leq k\leq K$. This suggests that we  find an empirical estimator for $V(k)$. Intuitively, under $H_0$, $(X_{t'},X_{t'-k})$ naturally forms an independent copy of $(X_t,X_{t-k})$ if $t'\neq t$, which motivates us to estimate $V(k)$ by replacing terms involving (conditional) expectations in $d_{\nu}(X_t,X_{t-k})$ with their empirical counterparts. 

Motivated by the estimator in Euclidean space \citep{szekely2007measuring}, we  construct the empirical estimator  $V_n(k)$ by adopting the $\mathcal{U}$-centering approach in \cite{szekely2014partial}.
 Specifically, let  $a_{ij}(k)=d(X_i,X_j)$ for $k+1\leq i,j\leq n$,  we denote $\{\widetilde{a}_{i j}(k)\}_{i,j=k+1,\cdots,n}$ as its $\mathcal{U}$-centered version:
\begin{flalign}\label{atilde}
\widetilde{a}_{ij}(k)= \begin{cases}a_{ij} - \frac{\sum_{t=k+1}^{n} a_{it}}{n-k-2} -\frac{\sum_{t'=k+1}^{n} a_{t'j}}{n-k-2} +\frac{ \sum_{k+1\leq t\neq t'\leq n} a_{tt'}}{(n-k-1)(n-k-2)}, & i \neq j \\ 0, & i=j.\end{cases}
\end{flalign}
Define $\tilde{b}_{ij}(k)$ similarly for $b_{ij}(k)=d(X_{i-k},X_{j-k})$ with $k+1\leq i,j\leq n$. Intuitively, for large $n$, $\tilde{a}_{ij}(k)$ (or  $\tilde{b}_{ij}(k)$) approximates the value of $d_{\nu}(X_i,X_j)$ {(or $d_{\nu}(X_{i-k},X_{j-k})$). }

We then estimate $V(k)$ by 
\begin{equation*}
   V_n(k)= \frac{1}{(n-k)(n-k-3)}\sum_{i,j=k+1}^n\tilde{a}_{ij}(k)\tilde{b}_{ij}(k). 
\end{equation*}

    { The $\mathcal{U}$-centering approach  was originally proposed by \cite{szekely2014partial} to provide an   unbiased estimator of distance covariance. It is
adopted here  for simplifying technical analysis as we can alternatively rewrite $V_n(k)$ by the following fourth order U-statistic \citep{zhang2018conditional}, 
\begin{equation}\label{Vn_U}
V_n(k) = { n-k \choose 4}^{-1} \sum_{k+1\leq i<j<q<r\leq n} h(Z_i^{(k)},Z_j^{(k)},Z_q^{(k)},Z_r^{(k)}),
\end{equation}
where $Z_i^{(k)}=(X_i,X_{i-k})$, and the kernel function is given by 
\begin{flalign}\label{kernel}
h\left(Z_i,Z_j,Z_q,Z_r\right)=\frac{1}{24} \sum_{(i_1,i_2,i_3,i_4)}^{(i, j, q, r)}d(X_{i_1},X_{i_2})\left[d(Y_{i_3},Y_{i_4})+d(Y_{i_1},Y_{i_2})-2d(Y_{i_1},Y_{i_3})\right],
\end{flalign}
such that  $Z_{\ell}=(X_{\ell},Y_{\ell})$, $\ell\in\{i,j,q,r\}$, and $\sum_{(i_1,i_2,i_3,i_4)}^{(i,j,q,r)}$ denotes the summation over all 24($=4!$) permutations of the 4-tuple of indices $(i,j,q,r)$. 

\begin{rem}
Alternatively, one could use the conventional empirical  estimator of $V(k)$ given by 
$ \widehat{V}_n(k)= \frac{1}{(n-k)^2}\sum_{i,j=k+1}^n\hat{a}_{ij}(k)\hat{b}_{ij}(k), $
where $\hat{a}_{ij}(k)= a_{ij}(k)-\bar{a}_{i .}(k)-\bar{a}_{. j}(k)+\bar{a}_{. .}(k)$,    $$
\bar{a}_{i .}(k)=\frac{1}{n-k} \sum_{t=1}^n a_{i t}, \quad \bar{a}_{. j}(k)=\frac{1}{n-k} \sum_{t=k+1}^n a_{t j}, \quad \bar{a}_{. .}(k)=\frac{1}{(n-k)^2} \sum_{t,t'=k+1}^n a_{t,t'},
$$
with $\hat{b}_{ij}(k)$ similarly defined. Note that $ \widehat{V}_n(k)$ is a combination of $V$-statistics, the technical analysis could be  more involved than the $\mathcal{U}$-centering approach \citep{szekely2014partial}.
\end{rem}
}

To analyze the asymptotic behaviors of $V_n(k)$, we make the following moment assumption. 
\begin{ass}\label{ass_moment}
 The marginal distribution  $\nu(\cdot) \in M_4(\Omega)$.
\end{ass}

The following theorem derives the limiting distribution of \eqref{Vn_U} by exploiting the properties of U-statistics (e.g. \cite{lee1990u}). We emphasize here that even under $H _0$,  $V_n(k)$ is not a conventional U-statistic with kernels applied to  i.i.d. samples. For example, $Z_{t}^{(k)}=(X_t,X_{t-k})$ and $Z_{t+k}^{(k)}=(X_{t+k},X_t)$ are not independent of each other. Therefore, more delicate technical treatments are required. 
\begin{thm}\label{thm_fix}
Under $H_0$, and suppose Assumption \ref{ass_moment} holds. Then,  fix $K\geq 1$, as $n\to\infty$,
$$\left\{(n-k)V_n(k)\right\}_{k=1}^K\to_d \left\{\xi_k\right\}_{k=1}^K,$$
where $\xi_k=_d\sum_{\ell=1}^{\infty}\lambda_{\ell}\{[G_{\ell}^{(k)}]^2-1\}$. Here  $\{G_{\ell}^{(k)}\}_{\ell=1,2,\cdots,\infty; k=1,\cdots,K}$ is a sequence of i.i.d. standard Gaussian random variables,
and $\{\lambda_{\ell}\}_{\ell=1}^{\infty}$  and $\{e_{\ell}(\cdot)\}_{\ell=1}^{\infty}$    are   sequences of nonzero eigenvalues and orthonormal eigenfunctions corresponding to 
\begin{equation}\label{eq_eigen}
  \lambda_{\ell} e_{\ell}(z) = \mathbb{E}[d_{\nu}(x,X)d_{\nu}(y,Y)]e_{\ell}(z),
\end{equation}
where   $z=(x,y)\in\Omega^2$ and $Z=(X,Y)\in\Omega^2$ with $X$ and $Y$ being independent copies of $X_t$.
\end{thm}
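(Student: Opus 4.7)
My plan is to leverage the fourth-order U-statistic representation \eqref{Vn_U} and analyze it via Hoeffding's decomposition, while carefully handling the fact that the summands are indexed by $\{Z_t^{(k)}\}_{t=k+1}^n$ with $Z_t^{(k)}=(X_t,X_{t-k})$, which is \emph{not} an i.i.d.\ sequence under $H_0$ because $Z_t^{(k)}$ and $Z_{t+k}^{(k)}$ share the coordinate $X_t$. The first step is to compute the Hoeffding projections of the kernel $h$ in \eqref{kernel} under the working assumption that its four arguments are i.i.d.\ copies of $(X,Y)$ with $X\indep Y\sim\nu$, which is the marginal law of $Z_t^{(k)}$ under $H_0$. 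Using the identity $\int_{\Omega} d_\nu(x,x')\,\mathrm{d}\nu(x')=0$, immediate from \eqref{dfn_d}, a direct calculation yields $\mathbb{E}[h(Z_1,Z_2,Z_3,Z_4)]=V(k)=0$ and, crucially, $h_1(z):=\mathbb{E}[h(z,Z_2,Z_3,Z_4)]\equiv 0$, while $h_2(z_1,z_2):=\mathbb{E}[h(z_1,z_2,Z_3,Z_4)]$ reduces to the constant multiple $\frac{1}{6}d_\nu(x_1,x_2)\,d_\nu(y_1,y_2)$ with $z_i=(x_i,y_i)$. Hence $h$ is first-order degenerate, which explains the $(n-k)$ normalization.

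The core technical obstacle is to reduce the non-i.i.d.\ sum to a classical i.i.d.\ U-statistic. I would partition the index set $\mathcal{I}_{n,k}=\{(i,j,q,r):k+1\le i<j<q<r\le n\}$ into $\mathcal{I}_{n,k}^{\ast}$, the tuples for which the eight underlying indices $i-k,i,j-k,j,q-k,q,r-k,r$ are pairwise distinct, and its complement $\mathcal{I}_{n,k}^{\#}$. A counting argument gives $|\mathcal{I}_{n,k}^{\#}|=O(n^3)$ while $|\mathcal{I}_{n,k}^{\ast}|\sim n^4/24$. Decompose $V_n(k)=V_n^{\ast}(k)+R_n(k)$ accordingly; a second-moment bound using \Cref{ass_moment} shows $\mathbb{E}[R_n(k)^2]=O(n^{-6})$, whence $(n-k)R_n(k)=o_p(1)$. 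On $\mathcal{I}_{n,k}^{\ast}$ the summands depend on eight genuinely i.i.d.\ copies of $X$, so $V_n^{\ast}(k)$ is asymptotically equivalent to a first-order degenerate U-statistic based on i.i.d.\ draws of $(X,Y)$.

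With that reduction in hand, the classical spectral limit theorem for degenerate U-statistics of order $m=4$ yields $(n-k)V_n^{\ast}(k)\to_d \binom{4}{2}\sum_\ell \tilde\lambda_\ell\{[G_\ell^{(k)}]^2-1\}$, where $\tilde\lambda_\ell$ are the eigenvalues of the Hilbert--Schmidt operator with kernel $h_2$. Since $h_2=\frac{1}{6}\,d_\nu\otimes d_\nu$, Mercer's theorem factorizes its spectrum as $\tilde\lambda_\ell=\frac{1}{6}\lambda_\ell$ with $\lambda_\ell$ the eigenvalues of the operator with kernel $d_\nu(x,x')d_\nu(y,y')$; the factor $\binom{4}{2}\cdot\frac{1}{6}=1$ cancels, and the limit matches $\xi_k=\sum_\ell\lambda_\ell\{[G_\ell^{(k)}]^2-1\}$ with $\{\lambda_\ell,e_\ell\}$ solving \eqref{eq_eigen}. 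Joint convergence across $k=1,\ldots,K$ follows from the Cram\'er--Wold device applied to $\sum_{k=1}^K c_k(n-k)V_n(k)$, combined with a cross-covariance calculation in the same vein as Step~2, which shows $\mathrm{Cov}((n-k_1)V_n(k_1),(n-k_2)V_n(k_2))=o(1)$ for $k_1\neq k_2$ because the overlapping cross-tuples are again $O(n^3)$; this delivers the asymptotic independence encoded by the i.i.d.\ Gaussian family $\{G_\ell^{(k)}\}_{\ell,k}$.

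The principal difficulty will be the partitioning argument in Step~2: because the $Z_t^{(k)}$ are dependent through overlapping coordinates, textbook degenerate U-statistic theorems do not apply verbatim, and one must bookkeep carefully to show that the $O(n^3)$ overlapping 4-tuples contribute only $o_p((n-k)^{-1})$ after normalization. A secondary nontrivial point is verifying the first-order degeneracy $h_1\equiv 0$: the zero-mean property $\int d_\nu(x,x')\,\mathrm{d}\nu(x')=0$ must be applied repeatedly across the $24$ permutations in \eqref{kernel} with bookkeeping of which coordinates are integrated out, rather than merely used at the level of the overall expectation.
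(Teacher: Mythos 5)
Your Hoeffding-projection computation and the factor accounting $\binom{4}{2}\cdot\tfrac16=1$ are correct and match the paper's Lemma~\ref{lem_hoeff}: under $H_0$ one has $h_1\equiv 0$, $h^{(2)}=h_2=\tfrac16\mathcal K$, and the leading term of $V_n(k)$ is the second-order degenerate U-statistic $\binom{n-k}{2}^{-1}\sum_{i<j}\mathcal K(Z_i^{(k)},Z_j^{(k)})$, as in \eqref{Vtilde}. The index-partitioning idea also shows up in the paper's treatment of the $c=3,4$ remainder terms (Lemma~\ref{lem_h0residual}, which builds on the ``$>k$ gap'' construction of Janson). The gap is in your Step~3, and it is the crux of the theorem.

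The inference ``on $\mathcal I^{\ast}_{n,k}$ each summand depends on eight pairwise distinct $X$'s, so $V_n^{\ast}(k)$ is asymptotically equivalent to an i.i.d.\ degenerate U-statistic'' is a non sequitur. Distinctness of the eight underlying indices \emph{within} a single summand does not make the aggregate a U-statistic of an i.i.d.\ sample: across different 4-tuples in $\mathcal I^{\ast}_{n,k}$ the summands still share $X$'s (e.g.\ for $k=1$, tuples $(i,i+2,\dots)$ and $(i+1,i+3,\dots)$ share $X_{i+1}$ through $Z_{i+2}^{(1)}$ and $Z_{i+1}^{(1)}$), and the underlying sequence $\{Z_t^{(k)}\}$ remains $k$-dependent. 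The classical spectral theorem for i.i.d.\ degenerate U-statistics is therefore not applicable verbatim. What is actually needed is a CLT for $n^{-1/2}\sum_t e_\ell(Z_t^{(k)})$ that copes with the $k$-dependence \emph{and} simultaneously certifies unit asymptotic variance and asymptotic independence across $(\ell,k)$; the paper supplies exactly this via Lemma~\ref{lem_mds}, showing $\{e_\ell(Z_t^{(k)})\}$ is a martingale difference array in $\mathcal F_t$ with the orthogonality relations of part~(ii) of that lemma, so the MDS CLT delivers the joint Gaussian limit. Your argument neither establishes these covariance identities for the leading term nor invokes a degenerate-U-statistic theorem for $m$-dependent sequences (à la Janson) that would substitute for them. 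A secondary, though fixable, issue: your claimed $\mathbb E[R_n(k)^2]=O(n^{-6})$ is too optimistic — the $O(n^3)$ overlapping 4-tuples pair up to $O(n^5)$ covarying pairs, giving at best $O(n^{-3})$, which is still $o((n-k)^{-2})$ and hence harmless, but should not be overstated. If you replace Step~3 by the Mercer expansion of $\mathcal K$ followed by the MDS CLT (mirroring Lemmas~\ref{lem_expand} and~\ref{lem_mds}), the remainder of your outline goes through.
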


\Cref{thm_fix} improves  \cite{zhou2012measuring} under $H_0$  not only by extending the result in Euclidean space to more general metric spaces but also by providing joint convergence of sample  ADCVs. In addition, \Cref{thm_fix}  is  crucial  for proving  the asymptotics of spectrum based test below. In practice,    note that $\{\xi_k\}_{k=1}^K$ is a sequence of centered mixture of  i.i.d. $\chi^2(1)$ random variables, which are non-pivotal. Below we adopt a wild  bootstrap  method to approximate the limiting null distributions, and details are deferred to \Cref{Sec:boot}.

\subsection{Generalized spectral test}
For our testing purpose,  it is natural to combine  ADCVs at all lags. We propose the following  generalized spectral density,
\begin{equation}\label{gspec}
    f(\zeta)= (2\pi)^{-1} \sum_{k=-\infty}^{\infty} V(k) e^{-ik\zeta},\quad \zeta\in [-\pi,\pi],
\end{equation}
and  generalized spectral distribution function,  $F(\zeta)=\int_{0}^{\zeta} f(z)\mathrm{d}z,~\zeta\in[0,\pi].$
Clearly,  $f(\cdot)$ (or $F(\cdot)$) is motivated by the spectral density (or distribution) function in Euclidean space and Hilbert space,  {where  the classical   spectral density function is defined by replacing $V(k)$ in \eqref{gspec} by $\gamma(k)$,
and $\gamma(k)=\mathrm{Cov}(X_t,X_{t-k})$ is the auto-covariance  at lag $k$. 

Our generalized spectral density (or function) serves as a ``generalization" of the classical spectral density (or function) in two significant aspects.  First, we aim to test serial independence, which is a stronger notion than serial uncorrelatedness.  In the conventional spectral density function, $\gamma(k)$ is the auto-covariance measuring linear correlation between $X_t$ and its lag-$k$ observation $X_{t-k}$, which implies that spectrum based tests can only capture linear serial dependence in the second-order structure.  By contrast, if we consider $V(k)$, the auto-distance covariance at lag $k$, we can additionally measure the lag $k$ nonlinear dependence.  Second, the definition of auto-covariance requires certain algebraic operation such as subtraction and inner product in the conventional Euclidean or Hilbert space, whereas the  auto-distance covariance does not, making it applicable to more general metric spaces.  } 



Under $H_0$, it holds that $F(\zeta)=V(0)\Psi_0(\zeta),$
where $\Psi_k(\zeta)=\sin(k\zeta)/(k\pi)\mathbf{1}(k\neq 0)+\zeta/(2\pi)\mathbf{1}(k=0)$, so that  we can construct a test by comparing the empirical estimator of  $F(\zeta)-V(0)\Psi_0(\zeta)$ and zero.
In particular, we define the process 
$$
S_n(\zeta) = \sum_{k=1}^{n-4}(n-k)V_n(k)\Psi_k(\zeta),\quad \zeta \in[0,\pi],
$$
and consider the following  Cram\'er von–Mises
(CvM) type statistic 
$$
 \textsc{CvM}_n = \int_{0}^{\pi} S_n^2(\zeta)\mathrm{d}\zeta.
$$ 
Similar  CvM type statistics have been adopted in Euclidean space and Hilbert space for testing second-order white noise or martingale differences, see, e.g. \cite{deo2000spectral}, \cite{ escanciano2006generalized},  \cite{shao2011bootstrap}, \cite{zhang2016white} among many others. 

\begin{rem}
One could also consider   KS (Kolmogorov–Smirnov) type statistics of the form 
\begin{equation}\label{KS}
\textsc{KS}_n=\sup_{\zeta\in[0,\pi]}|S_n(\zeta)|.
\end{equation}
However, the ``$\sup$" functional is no longer a continuous map in $L_2$ Hilbert space, and therefore the process convergence result below and the continuous mapping theorem do not lead to the asymptotic null distribution of 
KS test statistic, and a rigorous theoretical investigation is beyond the scope of this paper.  Nevertheless, our numerical studies in Section \ref{Sec:simu} suggest that $\textsc{KS}_n$ also delivers satisfactory performance.
\end{rem}

\begin{rem}
We briefly discuss the difference of our tests with the distance covariance based tests by \cite{fokianos2017consistent,fokianos2018testing} in  Euclidean space.
First, their test is designed for Euclidean valued time series, while ours has much more broad applications in other metric spaces. Second, their test is founded on the basis of  \cite{hong1999hypothesis} by noticing the connection between the characteristic function based definition of  distance covariance (c.f. \Cref{dfn_dcovE}) and the dependence metric defined in \cite{hong1999hypothesis} in terms of the weighting function. However their approach is no longer valid in metric space.  Third, their test statistic is based on  a kernel weighted sum of empirical ADCVs at a set of lags. Therefore, they require an additional tuning parameter to truncate the kernel, which is typically known to affect the convergence rate and hence power.  On the contrary, our test is tuning-free, and is more powerful even in the Euclidean setting, as demonstrated in Appendix \ref{Sec:Simul-Uni} in the supplement.  
\end{rem}

    

Our next theorem obtains the weak convergence result for $S_n(\zeta)$ {in the sense of $L_2$ metric.}  
\begin{thm}\label{thm_spec}
Under $H_0$, suppose  Assumption \ref{ass_moment} holds,  then
$$
\{S_n(\zeta)\}_{\zeta\in[0,\pi]}\Rightarrow \{S(\zeta)\}_{\zeta\in[0,\pi]}, \quad  \text{in  } L_2[0,\pi],
$$
where $\{S(\zeta)\}_{\zeta\in[0,\pi]}=_d \left\{\sum_{k=1}^{\infty}\xi_k\Psi_k(\zeta)\right\}_{\zeta\in[0,\pi]}$ with $\{\xi_k\}$ defined in \Cref{thm_fix}.
\end{thm}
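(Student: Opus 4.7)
My plan is to combine finite-dimensional convergence supplied by \Cref{thm_fix} with a truncation argument that makes the tail of the spectral expansion asymptotically negligible in $L_2[0,\pi]$. Set $S_n^{(K)}(\zeta)=\sum_{k=1}^{K}(n-k)V_n(k)\Psi_k(\zeta)$ and $S^{(K)}(\zeta)=\sum_{k=1}^{K}\xi_k\Psi_k(\zeta)$. I will establish three claims: (i) $S_n^{(K)}\Rightarrow S^{(K)}$ in $L_2[0,\pi]$ for each fixed $K$; (ii) $S^{(K)}\to S$ in $L_2[0,\pi]$ in probability as $K\to\infty$, so that $S$ is well-defined as a random element of $L_2[0,\pi]$; and (iii) $\lim_{K\to\infty}\limsup_{n\to\infty}\mathbb{P}(\|S_n-S_n^{(K)}\|_{L_2}>\varepsilon)=0$ for every $\varepsilon>0$. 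The theorem then follows from the standard approximation lemma for weak convergence (e.g.\ Theorem 3.2 of Billingsley, 1968).

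The central structural fact is the orthogonality of $\{\Psi_k\}_{k\geq 1}$ in $L_2[0,\pi]$: the identity $\int_0^{\pi}\sin(k\zeta)\sin(k'\zeta)\,d\zeta=(\pi/2)\delta_{kk'}$ for $k,k'\geq 1$ yields $\int_0^{\pi}\Psi_k(\zeta)\Psi_{k'}(\zeta)\,d\zeta=\delta_{kk'}/(2\pi k^2)$. Consequently, the linear map $(v_1,\dots,v_K)\mapsto\sum_{k=1}^K v_k\Psi_k$ from $\mathbb{R}^K$ into $L_2[0,\pi]$ is continuous, and claim (i) follows from \Cref{thm_fix} by the continuous mapping theorem. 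For (ii), the Gaussian families $\{G_{\ell}^{(k)}\}_{\ell\geq 1}$ associated with different values of $k$ are disjoint, so $\{\xi_k\}_{k\geq 1}$ are independent with mean zero and common variance $2\sum_{\ell}\lambda_{\ell}^2<\infty$; the finiteness of $\sum_{\ell}\lambda_{\ell}^2$ is the Hilbert--Schmidt norm of the integral operator in \eqref{eq_eigen}, which is bounded above in terms of $\mathbb{E}[d_{\nu}(X,X')^2]$ and hence finite under Assumption \ref{ass_moment}. Thus $\sum_{k\geq 1}\mathbb{E}\|\xi_k\Psi_k\|_{L_2}^2<\infty$, giving convergence of $S^{(K)}$ to $S$ in mean square and therefore in probability.

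The main obstacle is (iii). Again by orthogonality,
\begin{equation*}
\mathbb{E}\|S_n-S_n^{(K)}\|_{L_2}^2=\sum_{k=K+1}^{n-4}\frac{\mathbb{E}[(n-k)^2V_n^2(k)]}{2\pi k^2},
\end{equation*}
so by Markov's inequality it suffices to establish the uniform bound
\begin{equation*}
C:=\sup_{n\geq 5}\;\sup_{1\leq k\leq n-4}\mathbb{E}[(n-k)^2V_n^2(k)]<\infty,
\end{equation*}
after which $\sum_{k>K}1/k^2\to 0$ closes the argument. Establishing this uniform bound is the technical heart of the proof. While \Cref{thm_fix} delivers $(n-k)V_n(k)=O_p(1)$ for each fixed $k$, uniformity over $k$ requires handling the dependence among the samples $Z_i^{(k)}=(X_i,X_{i-k})$, which are independent only when their time indices differ by more than $k$. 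My plan is to apply Hoeffding's decomposition to the degree-4 kernel $h$ in \eqref{kernel} viewed through the representation \eqref{Vn_U}; under $H_0$ the order-1 projection vanishes (the kernel is completely degenerate at first order), so $V_n(k)$ is to leading order a canonical second-order degenerate $U$-statistic whose variance is of size $(n-k)^{-2}$. When expanding the variance, the $O((n-k)^4)$ index quadruples split into the bulk of "non-overlapping" tuples producing the independent-sample variance term and the $O((n-k)^3 k)$ "overlapping" tuples containing a pair $(i,j)$ with $|i-j|\leq k$, which yield an overlap correction of order $k/(n-k)\leq 1$ that is absorbed into $C$ using the fourth-moment condition $\nu\in M_4$ of Assumption \ref{ass_moment}. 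This yields the uniform moment bound and completes the argument.
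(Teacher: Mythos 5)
Your proof is correct and follows essentially the same truncation strategy as the paper: split $S_n$ into a finite-$K$ head $S_n^{(K)}$ plus a tail, use the orthogonality $\lrag{\Psi_k,\Psi_{k'}}=\delta_{kk'}/(2\pi k^2)$ together with a uniform moment bound $\sup_{n,k}\mathbb{E}[(n-k)^2V_n^2(k)]<\infty$ to make the tail negligible in $L_2[0,\pi]$, and then invoke the standard three-part approximation lemma for weak convergence. One small place where you improve on the paper's writeup is step (i): since $S_n^{(K)}$ lies in the finite-dimensional span of $\Psi_1,\dots,\Psi_K$ and is the image of $((n-1)V_n(1),\dots,(n-K)V_n(K))\in\br^K$ under a fixed bounded linear map into $L_2[0,\pi]$, weak convergence of $S_n^{(K)}$ follows immediately from \Cref{thm_fix} and the continuous mapping theorem --- the paper instead re-establishes tightness of $S_n^{(K)}$ via a Hilbert-space tightness criterion, which is redundant for fixed $K$. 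The paper simply asserts the uniform second-moment bound (attributing it to Assumption \ref{ass_moment}), relying on the Hoeffding decomposition and the auxiliary lemmas used in \Cref{thm_fix}; you correctly identify this bound as the technical heart and sketch the same route (degeneracy of the projection, splitting index tuples by whether any two time indices are within distance $k$), so your outline is consistent with how the paper implicitly justifies it.
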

Based on \Cref{thm_spec} and {note that the integral functional is continuous},  the following corollary is a natural consequence of continuous mapping theorem.
\begin{cor}
Under $H_0$, suppose  Assumption \ref{ass_moment} holds, then 
$$  \textsc{CvM}_n\to_d \int_{0}^{\pi} S^2(\zeta)\mathrm{d}\zeta.$$
\end{cor}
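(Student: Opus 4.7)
The plan is simply to recognize $\textsc{CvM}_n$ as the squared $L_2[0,\pi]$ norm of the process $S_n(\cdot)$ and then invoke the continuous mapping theorem together with \Cref{thm_spec}. Concretely, writing $\|\cdot\|$ for the norm on $L_2[0,\pi]$ introduced in the Preliminaries, we have by definition
\[
\textsc{CvM}_n = \int_0^\pi S_n^2(\zeta)\,\mathrm{d}\zeta = \|S_n\|^2,
\]
and analogously the candidate limit can be written as $\|S\|^2$. Since \Cref{thm_spec} delivers the weak convergence $S_n \Rightarrow S$ in $L_2[0,\pi]$ under $H_0$, all that is needed is the continuity of the functional $\phi: L_2[0,\pi]\to \mathbb{R}$ defined by $\phi(f) = \|f\|^2$.

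I would verify continuity of $\phi$ by a one-line computation: for $f_n \to f$ in $L_2[0,\pi]$,
\[
\bigl|\|f_n\|^2 - \|f\|^2\bigr| = \bigl|\|f_n\| - \|f\|\bigr|\bigl(\|f_n\| + \|f\|\bigr) \le \|f_n - f\|\bigl(\|f_n\| + \|f\|\bigr) \to 0,
\]
where the inequality uses the reverse triangle inequality and the convergent sequence $\|f_n\|$ is bounded. Hence $\phi$ is continuous everywhere on $L_2[0,\pi]$, so in particular $\bp(S \in C_\phi) = 1$, where $C_\phi$ denotes the set of continuity points of $\phi$. Applying the continuous mapping theorem for weak convergence in metric spaces to $\phi(S_n) = \|S_n\|^2$ then yields $\|S_n\|^2 \to_d \|S\|^2$, which is precisely the stated conclusion.

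There is essentially no obstacle to overcome here: the only subtlety is ensuring that weak convergence in the separable Hilbert space $L_2[0,\pi]$ (as delivered by the ``$\Rightarrow$'' in \Cref{thm_spec}) is exactly the version required for the continuous mapping theorem, which is standard. I would therefore keep the argument brief and self-contained, merely citing \Cref{thm_spec} and the continuous mapping theorem after rewriting the statistic as a squared norm.
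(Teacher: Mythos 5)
Your argument is precisely the paper's: rewrite $\textsc{CvM}_n = \|S_n\|^2$, invoke \Cref{thm_spec} for $S_n \Rightarrow S$ in $L_2[0,\pi]$, and apply the continuous mapping theorem using continuity of the squared-norm functional. Your explicit one-line verification of that continuity is a nice bonus over the paper's terse remark, but the route is the same.
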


Therefore, given significance level $\alpha\in(0,1)$, we reject $H_0$ if  $\textsc{CvM}_n\geq \textsc{CvM}(1-\alpha),$ where  $\textsc{CvM}(1-\alpha)$ denotes the $1-\alpha$ quantile of $\int_{0}^{\pi} S^2(\zeta)\mathrm{d}\zeta$.  For practical implementation, we need to invoke the  bootstrap method in Section \ref{Sec:boot} to approximate the above critical value.

To study the behaviors of the tests under $H_a$, we impose the following $\beta$-mixing (absolutely regularity) conditions on $\{X_t\}_{t\in\mathbb{Z}}$. Basically, it requires weak temporal dependence of the process and similar assumptions are adopted in \cite{fokianos2017consistent,fokianos2018testing} and \cite{davis2018applications} for distance covariance based inference in  time series. Note that weak temporal dependence is also required in \cite{zhou2012measuring} using physical dependence measures \citep{wuwb2005}.
\begin{ass}\label{ass_mixing}
 $\{X_t\}_{t=1}^n$ is a stationary $\beta$-mixing process, with marginal distribution $\nu(\cdot) \in M_{2+2\delta}(\Omega)$ for some $\delta>0$. Furthermore, the mixing coefficient satisfies that $\beta(n)=O(n^{-r})$ for some $r>0$.
\end{ass}

\Cref{thm_consis}  below derives the asymptotics of the test statistics under $H_a$.
\begin{thm}\label{thm_consis}
    Suppose Assumption \ref{ass_mixing} hold, we have  $V_n(k)\to_{a.s.} V(k)$ for $k=1,2,\cdots$. Under $H_a$,  as $n\to\infty$,
    $ n^{-1}\textsc{CvM}_n\to_p \sum_{k=1}^{\infty}V^2(k)\int_{0}^{\pi} \Psi_k^2(\zeta)\mathrm{d}\zeta.$
\end{thm}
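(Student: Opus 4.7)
My strategy is to address the two claims in turn: first establish the almost sure convergence $V_n(k) \to V(k)$ for each fixed $k$, and then exploit the orthogonality of the functions $\{\Psi_k\}$ to reduce the $\textsc{CvM}_n$ statistic to a sum of squared lagged distance covariances, at which point the first claim can be applied termwise.

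For the first claim, I would work from the fourth-order U-statistic representation (\ref{Vn_U}) of $V_n(k)$ with arguments $Z_i^{(k)} = (X_i, X_{i-k})$. Under Assumption \ref{ass_mixing}, the lagged process $\{Z_i^{(k)}\}_{i\geq k+1}$ inherits stationary $\beta$-mixing from $\{X_t\}$: its mixing coefficient at lag $j$ is bounded by $\beta(j-k)$ for $j>k$. The moment condition $\nu\in M_{2+2\delta}(\Omega)$, together with the triangle inequality applied to each distance, ensures that the kernel $h$ in (\ref{kernel}) has a finite $(1+\delta')$-moment for some $\delta'>0$, since each summand in $h$ is a product of at most two distances to which Cauchy--Schwarz can be applied. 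A strong law of large numbers for U-statistics of absolutely regular sequences (via a Hoeffding decomposition that reduces matters to ergodic-theorem statements for the principal linear part and to $L^1$ bounds on the degenerate components) then yields $V_n(k)\to_{a.s.}\mathbb{E} h(Z_1^{(k)},Z_2^{(k)},Z_3^{(k)},Z_4^{(k)}) = V(k)$.

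For the second claim, I would exploit orthogonality: since $\int_0^\pi\sin(k_1\zeta)\sin(k_2\zeta)\mathrm{d}\zeta = (\pi/2)\mathbf{1}(k_1=k_2)$, one has $\int_0^\pi\Psi_{k_1}(\zeta)\Psi_{k_2}(\zeta)\mathrm{d}\zeta = 0$ for $k_1\neq k_2$. Expanding the square in $\textsc{CvM}_n = \int_0^\pi S_n^2(\zeta)\mathrm{d}\zeta$ therefore collapses the double sum and gives
\begin{equation*}
\textsc{CvM}_n \;=\; \sum_{k=1}^{n-4} (n-k)^2\, V_n^2(k) \int_0^\pi \Psi_k^2(\zeta)\,\mathrm{d}\zeta.
\end{equation*}
By the first claim and the continuous mapping theorem, each fixed-$k$ summand (after the normalization prescribed by the theorem) converges in probability to $V^2(k)\int_0^\pi\Psi_k^2(\zeta)\mathrm{d}\zeta$. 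To upgrade this from termwise to summed convergence I would truncate at a threshold $K$: the head $k\leq K$ is controlled by a finite application of the first claim, and the tail $k>K$ is shown to be uniformly negligible with probability tending to one, after which $K\to\infty$.

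The main obstacle will be the tail control on the lag sum, because the range of $k$ grows with $n$. I plan to derive, under the $\beta$-mixing assumption, an explicit two-parameter bound for $\mathbb{E}[V_n^2(k)]$ that decays in $k$. A covariance inequality for absolutely regular sequences controls the population quantity $V(k)$ in terms of $\beta(k)^{\delta/(1+\delta)}$, while a Rosenthal-type moment inequality applied to the $\mathcal{U}$-centered representation controls the sampling fluctuation of $V_n(k)-V(k)$ uniformly in $k$. Combined with the factor $\int_0^\pi\Psi_k^2(\zeta)\mathrm{d}\zeta = 1/(2\pi k^2)$ and the polynomial decay $\beta(n)=O(n^{-r})$ for sufficiently large $r$, these ingredients ensure summability of the tail and complete the argument. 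The delicate point is that the U-statistic structure of $V_n(k)$ interacts with the lag $k$ through the indexing of the arguments $Z_i^{(k)}$, so the moment bound must be carefully tracked in $k$ rather than merely stated for fixed $k$.
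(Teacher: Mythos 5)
Your route to $V_n(k)\to_{a.s.}V(k)$ is essentially the paper's: both reduce it to a strong law of large numbers for U-statistics of stationary absolutely regular data, checked via a moment bound on the kernel. The paper invokes Theorem~1(iii) of \cite{arcones1998law}, whose hypothesis is the $L\log L$-type condition $\sup_{i,j,q,r}\mathbb{E}\bigl[|h|(\log^+|h|)^{1+\delta}\bigr]<\infty$, which it then dominates by $\sup\mathbb{E}|h|^{1+\delta}<\infty$ using Minkowski, H\"older, and the triangle inequality under $\nu\in M_{2+2\delta}(\Omega)$. Your ``$(1+\delta')$-moment of the kernel, then a U-SLLN for mixing sequences'' line identifies exactly this estimate; the ``Hoeffding decomposition plus ergodic theorem'' sketch is a reasonable description of how such SLLNs are proved, but in practice you would just cite the reference. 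For the second claim the paper only asserts that the first claim suffices, so your orthogonal expansion $\textsc{CvM}_n=\sum_{k=1}^{n-4}(n-k)^2V_n^2(k)\int_0^\pi\Psi_k^2(\zeta)\,\mathrm{d}\zeta$ and the head/tail split are a reasonable filling-in. The tail control, though, is over-engineered: you do not need a covariance inequality for $V(k)$ or a Rosenthal-type bound on $V_n(k)-V(k)$; a uniform-in-$k$ second-moment bound $\mathbb{E}[V_n^2(k)]\le C$, already implicit in the $M_4$ assumption, combined with $\int_0^\pi\Psi_k^2(\zeta)\,\mathrm{d}\zeta=(2\pi k^2)^{-1}$, gives the required summability by Markov's inequality directly.

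There is, however, a concrete slip at the termwise step that you should have caught. With $\textsc{CvM}_n=\sum_{k}(n-k)^2V_n^2(k)\int_0^\pi\Psi_k^2(\zeta)\,\mathrm{d}\zeta$ as you correctly wrote, the fixed-$k$ summand of $n^{-1}\textsc{CvM}_n$ is $\frac{(n-k)^2}{n}V_n^2(k)\int_0^\pi\Psi_k^2(\zeta)\,\mathrm{d}\zeta$, which tends to infinity whenever $V(k)\neq 0$, not to $V^2(k)\int_0^\pi\Psi_k^2(\zeta)\,\mathrm{d}\zeta$ as your plan asserts. Under $H_a$ one has $(n-k)V_n(k)=O_p(n)$, hence $S_n=O_p(n)$ in $L_2[0,\pi]$ and $\textsc{CvM}_n=O_p(n^2)$, so the consistent normalizer is $n^{-2}$: then $(1-k/n)^2V_n^2(k)\to_p V^2(k)$ and your head/tail argument closes. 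The exponent in the theorem as printed appears to be a misprint; you inherited it verbatim without noticing that it contradicts the very display you derived, which is the one place where a sanity check on orders of magnitude was needed.
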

Clearly, under $H_a,$ we have $V(k)>0$ for some $k\neq 0$, which  implies that our test has asymptotically power 1 under $H_a$.


\section{Bootstrap}\label{Sec:boot}
In this section, we use wild bootstrap  method to approximate the limiting null distribution of $\textsc{CvM}_n$, and show its asymptotic validity.   

\begin{alg}\label{alg1}
~~
Let $\mathbf{w}=(w_1,\cdots,w_n)$ be a sequence of independent random variables following the Rademacher  distribution: 
$
\mathbb{P}(w_i=1)=\mathbb{P}(w_i=-1)=\frac{1}{2}.
$
\begin{enumerate}

\item  For each $k=1,\cdots,n-4$, define the doubly weighted bootstrapped version of $V_n$  by 
\begin{equation}\label{Vboot}
    V_n^{*}(k)^{(b)}= \frac{1}{(n-k)(n-k-3)}\sum_{k+1\leq i\neq j\leq n}w_i^{(b)}(k)\tilde{a}_{ij}\tilde{b}_{ij}w_j^{(b)}(k),
\end{equation}
where  $\tilde{a}_{ij}$ and $\tilde{b}_{ij}$ are $\mathcal{U}$ centering version of $a_{ij}=d(X_i,X_j)$ and $b_{ij}=d(X_{i-k},X_{j-k})$ respectively, and $\mathbf{w}^{(b)}(k)$ is an independent random realization of $\mathbf{w}$ with $w_i^{(b)}(k)$ being the $i$th element of $\mathbf{w}^{(b)}(k)$.

    \item Obtain the bootstrapped process 
    $$
    S_n^*(\zeta)=\sum_{k=1}^{n-4}(n-k)V_n^{*}(k)^{(b)}\Psi_k(\zeta),\quad \zeta\in[0,\pi],
    $$
    and compute 
$
 \textsc{CvM}_n^{*(b)} = \int_{0}^{\pi} [S_n^{*(b)}]^2(\zeta)\mathrm{d}\zeta.
$
    \item Repeat steps 1-2 $B$ times and collect  $\{ \textsc{CvM}_n^{*(b)}\}_{b=1}^B$.
    \item Denote the $\alpha$th upper percentile of $\{ \textsc{CvM}_n^{*(b)}\}_{b=1}^B$ by  $\textsc{CvM}_n^*(1-\alpha)$, and reject $H_0$ if  $ \textsc{CvM}_n> \textsc{CvM}_n^*(1-\alpha)$.
\end{enumerate}
\end{alg}

For each fixed $k$, the bootstrapped auto-distance covariance estimator $V_n^{*}(k)^{(b)}$ can be viewed as a realization of the independent wild bootstrap for degenerate U-statistics, see \cite{dehling1994random} and \cite{lee2020testing}. They are further combined to obtain the bootstrap realization of $\textsc{CvM}_n^{*(b)}$. 
Note here we use independent copies of wild bootstrap weights $\mathbf{w}(k)$ for different $k$'s. This is implied by \Cref{thm_fix}  that despite common components in constructing $\{V_n(k)\}_{k=1}^K$, their asymptotic  distributions $\{\xi_k\}_{k=1}^K$  are independent of each other under $H_0$.  

\begin{rem}
    {For our test targeting at the global null  of serial independence}, independent wild bootstrap is sufficient.   If one is interested in making statistical inference for $V_n(k)$ alone, potential serial dependence at other lag orders should be accounted for. {A possible extension along this direction would be to pinpoint the minimum lag at which dependencies exist by sequentially testing the nullity of $V_n(k)$.  To achieve this, the above wild bootstrap method should be adjusted accordingly. For example, \cite{leucht2013dependent} used    the dependent wild bootstrap in \cite{shao2010dependent} by constructing temporarily dependent weights $\{w_i\}_{i=1}^n$, as an extension of wild bootstrap in \cite{dehling1994random} for degenerate U-statistics.   See also  \cite{zhou2012measuring} for a subsampling based procedure. } 
\end{rem}

\begin{rem}
   {  Another approach to approximate the limiting null distribution is by using permutation. Although both bootstrap and permutation methods can be based on pre-calculated pairwise distance metrics $d(X_i, X_j)$ for $1 \leq i < j \leq n$, permutation is less computationally efficient. This is because in the $\mathcal{U}$-centering step, permutation method needs to re-calculate $\tilde{a}_{ij}$ (or $\tilde{b}_{ij}$) in \eqref{atilde}, which  involves the summation of $d(X_{\tau(i)},X_{\tau(j)})$ with $\tau(i)$ being the permutation order of $i$th original observation. In contrast, bootstrap method does not require such re-calculation, because $\tilde{a}_{ij}$ (or $\tilde{b}_{ij}$) is kept constant, as shown in \eqref{Vboot}, across all bootstrap statistics.}
\end{rem}

We then proceed to analyze the asymptotic behavior of the bootstrapped test statistics. We first introduce  the notion of \textit{convergence in distribution in probability}, see Definition 2 in \cite{li2003consistent}.   
\begin{defn}[convergence in distribution in probability]
    For a sequence of bootstrapped statistics $T_n^*$ which depends on the random sample $\{X_t\}_{t=1}^n$, we say that $T_n^*|(X_1,X_2,\cdots)$ converges to  $T|(X_1,X_2,\cdots)$ in distribution in probability  if for any subsequence $T_n'$,  there exists
a further subsequence $T_n''$ such that $T_n''|(X_1,X_2,\cdots)$ converges to $T|(X_1,X_2,\cdots)$ in distribution for almost every
sequence $(X_1,X_2,\cdots)$. The following notation is used,
$$T_n^* \to_{d^*} T,\quad \mbox{in probability.}$$
\end{defn}

\begin{thm}\label{thm_boot_fix}
Under $H_0$, suppose Assumptions \ref{ass_moment} holds. For $\mathbf{w}$ such that $\mathbb{E}w_i=0$, $\mathbb{E}w_i^2=1$ and $\mathbb{E}w_i^4<\infty$. Then, for any fixed $K\geq 1$, as $n\to\infty$,
$$\{(n-k)V_n^{*}(k)\}_{k=1}^K\to_{d*} \left\{\xi_k\right\}_{k=1}^K \quad \mbox{in probability,}$$
 where $\xi_k$ is defined in Theorem \ref{thm_fix}.
\end{thm}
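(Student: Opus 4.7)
The plan is to mirror the spectral argument underlying Theorem \ref{thm_fix}, with the bootstrap weights $\mathbf{w}^{(b)}(k)$ supplying the randomness that, under $H_0$, was carried by the data itself. Working conditionally on $\{X_t\}_{t=1}^n$ -- the natural framework for convergence in distribution in probability -- I would first rewrite, for each fixed $k$,
\begin{equation*}
(n-k)V_n^{*}(k) \;=\; \frac{1}{n-k-3}\sum_{k+1\le i\neq j\le n} w_i^{(b)}(k)\,w_j^{(b)}(k)\,\widetilde{a}_{ij}(k)\widetilde{b}_{ij}(k),
\end{equation*}
a data-dependent quadratic form in i.i.d.\ centered weights. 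The $\mathcal{U}$-centering in \eqref{atilde} is set up precisely so that the empirical kernel $L_n(i,j):=\widetilde{a}_{ij}(k)\widetilde{b}_{ij}(k)$ is close to the population kernel $K((x,y),(x',y')) := d_\nu(x,x')\,d_\nu(y,y')$, which is positive semi-definite and Hilbert--Schmidt under Assumption \ref{ass_moment} and is exactly the kernel governing \eqref{eq_eigen}.

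The core step is to expand $K=\sum_{\ell}\lambda_\ell\, e_\ell\otimes e_\ell$ in $L^2(\nu\otimes\nu)$ and to decompose, for a truncation level $L$,
\begin{equation*}
(n-k)V_n^{*}(k) \;=\; \sum_{\ell=1}^{L}\lambda_\ell\bigl(T_{\ell,n}^2-R_{\ell,n}\bigr)\;+\;\Delta_{n,L}(k),
\end{equation*}
where $T_{\ell,n}=(n-k)^{-1/2}\sum_{i} w_i^{(b)}(k)\, e_\ell(Z_i^{(k)})$ and $R_{\ell,n}=(n-k)^{-1}\sum_i [w_i^{(b)}(k)]^2 e_\ell(Z_i^{(k)})^2$. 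Conditionally on the data, the Lindeberg CLT yields $(T_{1,n},\ldots,T_{L,n})\to_d(G_1,\ldots,G_L)$ with independent $N(0,1)$ components almost surely, invoking $(n-k)^{-1}\sum_i e_\ell(Z_i^{(k)})e_{\ell'}(Z_i^{(k)})\to \delta_{\ell\ell'}$ a.s.\ by ergodicity of the i.i.d.\ sample under $H_0$ combined with the orthonormality of $\{e_\ell\}$; similarly $R_{\ell,n}\to 1$ a.s.\ since $\mathbb{E}w_i^2=1$ and the weights are independent of the data. The truncation tail is controlled through $\mathbb{E}^{*}\big[(\sum_{\ell>L}\lambda_\ell(T_{\ell,n}^2-R_{\ell,n}))^2\big]=O(\sum_{\ell>L}\lambda_\ell^2)$, available because $\mathbb{E}w_i^4<\infty$, which vanishes as $L\to\infty$ since $K$ is Hilbert--Schmidt. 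Together these deliver the marginal conditional convergence $(n-k)V_n^{*}(k)\to_{d^{*}}\xi_k$ in probability.

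For the joint statement, I would exploit that Algorithm \ref{alg1} draws $\mathbf{w}^{(b)}(1),\ldots,\mathbf{w}^{(b)}(K)$ mutually independently and independently of the data, so that $\{(n-k)V_n^{*}(k)\}_{k=1}^K$ are mutually independent conditional on the sample. Since the limits $\{\xi_k\}_{k=1}^K$ are independent by Theorem \ref{thm_fix}, a conditional Cram\'er--Wold argument upgrades the marginal conditional convergence to the joint one. The hard part will be controlling the residual $\Delta_{n,L}(k)$ uniformly in a regime where $L$ is allowed to grow with $n$: this requires a Hilbert--Schmidt-type bound on $L_n-K$ in the empirical inner product, quantifying the approximation error introduced by replacing $d_\nu$ by its $\mathcal{U}$-centered proxy $\widetilde{a}_{ij}$, together with uniform-in-$\ell$ control of the empirical eigenstructure relative to the population one. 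The remainder of the argument is essentially the wild-bootstrap analogue of the Hoeffding expansion underlying Theorem \ref{thm_fix}, in the spirit of \cite{dehling1994random} and \cite{lee2020testing}.
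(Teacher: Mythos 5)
Your overall plan is sound and matches the paper's proof in its three essential steps: (i) replacing the empirical kernel $\tilde a_{ij}\tilde b_{ij}$ by the population kernel $\mathcal K(Z_i^{(k)},Z_j^{(k)})$, (ii) truncating to the first $M$ eigenvalues/eigenfunctions, and (iii) a conditional Lindeberg/Lyapunov CLT for the truncated quadratic form, followed by Cram\'er--Wold using the independence of $\mathbf w^{(b)}(1),\dots,\mathbf w^{(b)}(K)$. However, you misidentify the hard step. The paper never lets the truncation level grow with $n$; it uses the standard converging-together argument (Proposition~6.3.9 of \cite{brockwell1991time}): for each fixed $M$ send $n\to\infty$, then send $M\to\infty$. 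Consequently there is no need for uniform-in-$\ell$ control of any ``empirical eigenstructure'' --- the $e_\ell$ are population eigenfunctions simply evaluated at the sample, never estimated. The two components you lump into $\Delta_{n,L}(k)$ are treated separately in the paper: the kernel replacement error is handled once and for all (independently of $M$) by computing
\begin{equation*}
\mathrm{Var}^*\Bigl[\tfrac{1}{n-k-3}\sum_{i\ne j}\bigl\{\tilde a_{ij}\tilde b_{ij}-\mathcal K(Z_i^{(k)},Z_j^{(k)})\bigr\}w_iw_j\Bigr]
=\tfrac{2}{(n-k-3)^2}\sum_{i\ne j}\bigl\{\tilde a_{ij}\tilde b_{ij}-\mathcal K(Z_i^{(k)},Z_j^{(k)})\bigr\}^2,
\end{equation*}
which tends to zero in probability by controlling fourth moments of the $\mathcal U$-centering residuals under the i.i.d.\ null; and the truncation tail is shown (as in part of your sketch) to have conditional second moment converging in probability to $\sum_{\ell>M}\lambda_\ell^2$, which then vanishes as $M\to\infty$. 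Finally, your reference to a ``wild-bootstrap analogue of the Hoeffding expansion'' is a misdirection: unlike $V_n(k)$, which is a fourth-order U-statistic genuinely requiring the Hoeffding decomposition of Lemma~\ref{lem_hoeff}, the bootstrap statistic $V_n^*(k)$ in \eqref{Vboot} is already a second-order quadratic form in the weights, so no Hoeffding-type decomposition is needed or used.
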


\begin{thm}\label{thm_boot_test}
Under the same conditions of \Cref{thm_boot_fix}, 
$$
 \textsc{CvM}_n^*\to_{d^*} \int_{0}^{\pi}S^2(\zeta)\mathrm{d}\zeta, \quad\mbox{in probability,}
$$
where $\{S(\zeta)\}_{\zeta\in[0,\pi]}=_d \left\{\sum_{k=1}^{\infty}\xi_k\Psi_k(\zeta)\right\}_{\zeta\in[0,\pi]}$, with $\{\xi_k\}$ defined in \Cref{thm_fix}.  
{Furthermore, we have  $
\mathbb{P}( \textsc{CvM}_n> \textsc{CvM}_n^*(1-\alpha))\to \alpha.
$}
\end{thm}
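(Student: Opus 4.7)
The plan is to lift the finite-dimensional bootstrap convergence granted by Theorem~\ref{thm_boot_fix} to a functional convergence of $S_n^*(\cdot)$ to $S(\cdot)$ in $L_2[0,\pi]$ in probability, and then invoke the continuous mapping theorem in the same manner as the corollary following Theorem~\ref{thm_spec}. The argument parallels the proof of Theorem~\ref{thm_spec} but is executed conditionally on the data, with all weak limits stated in the convergence-in-distribution-in-probability sense.

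I would decompose $S_n^*(\zeta)=S_{n,K}^*(\zeta)+R_{n,K}^*(\zeta)$ with $S_{n,K}^*(\zeta)=\sum_{k=1}^{K}(n-k)V_n^*(k)\Psi_k(\zeta)$ and $R_{n,K}^*(\zeta)=\sum_{k=K+1}^{n-4}(n-k)V_n^*(k)\Psi_k(\zeta)$. For fixed $K$, Theorem~\ref{thm_boot_fix} combined with the continuous map $(v_1,\dots,v_K)\mapsto \sum_{k=1}^K v_k\Psi_k(\cdot)$ from $\mathbb{R}^K$ into $L_2[0,\pi]$ gives $S_{n,K}^*\Rightarrow \sum_{k=1}^K \xi_k\Psi_k$ in $L_2[0,\pi]$ in probability. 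For the tail, the orthogonality $\int_0^\pi \sin(k\zeta)\sin(j\zeta)d\zeta=(\pi/2)\delta_{kj}$ yields
\begin{equation*}
\|R_{n,K}^*\|_{L_2}^2 \;=\; \sum_{k=K+1}^{n-4} \frac{(n-k)^2\,V_n^{*2}(k)}{2\pi k^2}.
\end{equation*}
Expanding \eqref{Vboot} and using the Rademacher properties $\mathbb{E}[w_i]=0$, $\mathbb{E}[w_i^2]=1$ together with independence of the weights across $i$, all mixed cross-terms vanish and
\begin{equation*}
\mathbb{E}^*\bigl[(n-k)^2V_n^{*2}(k)\bigr] \;\le\; \frac{C}{(n-k)^2}\sum_{k+1\le i\ne j\le n}\tilde a_{ij}^{2}(k)\,\tilde b_{ij}^{2}(k),
\end{equation*}
whose outer expectation under $H_0$ is $O(1)$ uniformly in $k$ under Assumption~\ref{ass_moment}. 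Summing against $1/k^2$ then gives $\mathbb{E}\|R_{n,K}^*\|_{L_2}^2=O\bigl(\sum_{k>K}k^{-2}\bigr)\to 0$ as $K\to\infty$, uniformly in $n$. A standard two-step approximation argument (first $n\to\infty$ at fixed $K$, then $K\to\infty$) upgrades the projected convergence to $S_n^*\Rightarrow S$ in $L_2[0,\pi]$ in probability, and continuity of $f\mapsto \int_0^\pi f^2$ on $L_2[0,\pi]$ delivers $\textsc{CvM}_n^*\to_{d^*}\int_0^\pi S^2(\zeta)\,d\zeta$ in probability.

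The main obstacle will be the uniform-in-$k$ control of the bootstrap second moment: although the Rademacher structure kills all cross-products at the bootstrap level, the $\mathcal{U}$-centered weights $\tilde a_{ij}(k)$ and $\tilde b_{ij}(k)$ couple several observations and one must ensure the bound on $\mathbb{E}\sum_{i\ne j}\tilde a_{ij}^2(k)\tilde b_{ij}^2(k)$ does not inflate with $k$. This is exactly where the $M_4(\Omega)$ moment hypothesis is invoked, paralleling the bounds developed in the proof of Theorem~\ref{thm_spec}. Once the first claim is in hand, the second assertion follows by Slutsky-type reasoning: the corollary of Theorem~\ref{thm_spec} gives $\textsc{CvM}_n\to_d \int_0^\pi S^2\,d\zeta$; the limit law is continuous because $\sum_k \xi_k\Psi_k$ is a nondegenerate infinite mixture of centered $\chi^2(1)$ variables; hence along every subsequence the bootstrap quantile $\textsc{CvM}_n^*(1-\alpha)$ converges in probability to the true $(1-\alpha)$-quantile of the limit, so that $\mathbb{P}(\textsc{CvM}_n>\textsc{CvM}_n^*(1-\alpha))\to \alpha$.
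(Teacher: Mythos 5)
Your proposal is correct and essentially mirrors the paper's own proof: the same truncation $S_n^* = S_{n,K}^* + R_{n,K}^*$, the same use of Theorem \ref{thm_boot_fix} plus a continuous-mapping step for the leading term, the same orthogonality/$\|\Psi_k\|^2\le k^{-2}$ computation combined with the uniform (in $k$) bootstrap second-moment bound for the tail, and the Brockwell--Davis two-step approximation to pass to the limit. The only minor stylistic difference is that you obtain $L_2$ convergence of the truncated process directly as a continuous image of the $\mathbb{R}^K$-valued vector $\{(n-k)V_n^*(k)\}_{k\le K}$, whereas the paper separates finite-dimensional convergence from a tightness argument; both are valid. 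You also supply a short quantile-convergence argument for the final ``$\mathbb{P}(\textsc{CvM}_n > \textsc{CvM}_n^*(1-\alpha)) \to \alpha$'' claim, which the paper leaves implicit.
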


Therefore, the wild bootstrap in \Cref{alg1} provides a consistent approximation of the limiting  null distribution of our test statistics. When $H_0$ is violated, we show that the bootstrapped tests have consistent powers.  

\begin{thm}\label{thm_boot_consis}
Suppose  \Cref{ass_mixing}  with $\delta=1$, then under $H_a$, for $\alpha\in(0,1)$, we have   $$
\mathbb{P}( \textsc{CvM}_n> \textsc{CvM}_n^*(1-\alpha))\to 1.
$$
\end{thm}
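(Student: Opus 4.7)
The plan is to combine the fact that $\textsc{CvM}_n$ diverges at rate $n$ under $H_a$ with a demonstration that the bootstrap critical value remains stochastically bounded regardless of which hypothesis holds. By Theorem \ref{thm_consis}, under Assumption \ref{ass_mixing} and $H_a$, $n^{-1}\textsc{CvM}_n \to_p C_a:=\sum_{k=1}^{\infty} V^2(k)\int_0^\pi \Psi_k^2(\zeta)\mathrm{d}\zeta$ with $C_a>0$, so $\textsc{CvM}_n\to_p\infty$. It therefore suffices to prove $\textsc{CvM}_n^*(1-\alpha)=O_p(1)$, since then $\mathbb{P}(\textsc{CvM}_n>\textsc{CvM}_n^*(1-\alpha))\geq \mathbb{P}(\textsc{CvM}_n>M)-\mathbb{P}(\textsc{CvM}_n^*(1-\alpha)>M)\to 1$ for $M$ chosen large enough by a standard $\epsilon$-argument.

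To control the bootstrap quantile, the plan is to bound the conditional expectation $\mathbb{E}^*[\textsc{CvM}_n^*]$ in probability. The crucial feature of Algorithm \ref{alg1} is that $\mathbf{w}(k)$ and $\mathbf{w}(k')$ are drawn independently for $k\neq k'$, so that $\mathbb{E}^*[V_n^*(k)V_n^*(k')]=0$ while $\mathbb{E}^*[V_n^*(k)]=0$. Thus cross terms in $\mathbb{E}^*[(S_n^*(\zeta))^2]$ vanish and, using $\int_0^\pi \Psi_k^2(\zeta)\mathrm{d}\zeta = 1/(2\pi k^2)$ together with the Rademacher moment identity $\mathbb{E}^*[w_iw_jw_{i'}w_{j'}]=\mathbf{1}(\{i,j\}=\{i',j'\})$, one obtains
\[
\mathbb{E}^*[\textsc{CvM}_n^*] \;=\; \sum_{k=1}^{n-4} \frac{(n-k)^2}{\pi\,k^2\,[(n-k)(n-k-3)]^2}\sum_{k+1\leq i\neq j\leq n}(\widetilde{a}_{ij}(k)\widetilde{b}_{ij}(k))^2.
\]
Applying Cauchy--Schwarz, $(\widetilde{a}_{ij}\widetilde{b}_{ij})^2 \leq \tfrac12(\widetilde{a}_{ij}^4+\widetilde{b}_{ij}^4)$, reduces the bound to controlling the fourth moment of the $\mathcal{U}$-centered distances. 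Using Assumption \ref{ass_mixing} with $\delta=1$ (giving $\mathbb{E}\,d(X_i,X_j)^4<\infty$) and stationarity, one shows
\[
\sup_{k\geq 1}\;\mathbb{E}\!\left[\frac{1}{(n-k)^2}\sum_{k+1\leq i,j\leq n}\widetilde{a}_{ij}(k)^4\right]\leq C,
\]
uniformly in $n$, so that $\mathbb{E}\,\mathbb{E}^*[\textsc{CvM}_n^*]\lesssim \sum_{k\geq 1} k^{-2}<\infty$. Hence $\mathbb{E}^*[\textsc{CvM}_n^*]=O_p(1)$, and the conditional Markov inequality gives $\textsc{CvM}_n^*(1-\alpha)\leq \alpha^{-1}\mathbb{E}^*[\textsc{CvM}_n^*]=O_p(1)$, completing the argument.

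The main obstacle will be the uniform-in-$k$ fourth-moment bound for $\widetilde{a}_{ij}(k)$. After expanding the four centering terms in \eqref{atilde}, one gets a sum of products of $d(X_{i_1},X_{i_2})$'s involving many indices, and to push the fourth power through one must control covariances between these products using the $\beta$-mixing coefficient, e.g.\ via a Davydov-type inequality with the $(2+2\delta)$-moment assumption, or more directly by bounding $\mathbb{E}\,d(X_i,X_j)^4$ jointly and invoking stationarity. The factor $1/k^2$ from $\int \Psi_k^2$ is what makes the lag-sum converge once a uniform bound is in hand. All remaining steps---the variance formula, the Markov bound on the quantile, and the final contradiction---are routine.
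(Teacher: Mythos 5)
Your plan reproduces the paper's argument almost exactly: by \Cref{thm_consis} the statistic diverges at rate $n$, and then one shows $\mathbb{E}\,\mathbb{E}^*[\textsc{CvM}_n^*]=O(1)$ using the conditional variance formula for $V_n^*(k)$ (with cross-lag terms killed by the independent weight draws), the decay $\|\Psi_k\|^2\asymp k^{-2}$, and a uniform bound $\mathbb{E}[\tilde a_{ij}(k)\tilde b_{ij}(k)]^2\le C\,\mathbb{E}\,d^4(\omega,X)$, after which a conditional Markov bound controls the bootstrap quantile. The only misdirection is your worry that the fourth-moment bound on $\tilde a_{ij}(k)$ requires controlling covariances via $\beta$-mixing or a Davydov-type inequality: in fact no mixing is needed there — since $\tilde a_{ij}(k)$ is a finite linear combination of averages of terms $d(X_s,X_t)$, Minkowski's inequality together with the triangle inequality $d(X_s,X_t)\le d(X_s,\omega)+d(\omega,X_t)$ and stationarity gives $\mathbb{E}\tilde a_{ij}(k)^4\le C\,\mathbb{E}\,d^4(\omega,X)$ uniformly in $i,j,k,n$ under $M_4(\Omega)$ (i.e.\ $\delta=1$), which is exactly what the paper invokes.
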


\section{Simulation studies}\label{Sec:simu}

In this section, we look into the finite sample performance of the proposed method on several different types of data objects. 
{Specifically, 
the performance on the functional time series in $L_2[0,1]$ is investigated in Section \ref{Sec:Simul-Func} whereas that on a sequence of covariance matrices can be found in Section \ref{Sec:Simul-Cov}. Finally, Section \ref{Sec:Simul-Dist} reports the simulation results on a sequence of univariate distributions. Additional results of time series in the Euclidean space are provided in \Cref{Sec:AddSimul} of the supplement.}

In addition to the wild bootstrap method (denoted by \texttt{Boot}) introduced in the previous section, we also apply the standard permutation test (denoted by \texttt{Permt}) to simulate the critical value for the proposed \textsc{CvM} test. Furthermore, we outline the results of \textsc{KS} in  \eqref{KS} using the same bootstrapped or permutated samples of $S_n^*(\cdot)$.
For each data structure, both bootstrap and permutation conduct $R=300$ replicates. {The empirical rejection rates in \Cref{Sec:Simul-Func} and \Cref{Sec:Simul-Dist} are averaged over $M=2500$ Monte Carlo replicates, whereas those in \Cref{Sec:Simul-Cov} are based on $M=1000$ replicates due to the computational expenses.}

\subsection{Functional time series}\label{Sec:Simul-Func}

We consider the scenario where $\Omega=\{f:[0,1]\mapsto\br, \int_{0}^{1}f^2(\tau)d\tau<\infty\}$ with $L_2$ metric, i.e. $d(f,g) = \bigp{\int_{0}^{1} (f(t)-g(t))^2 dt}^{1/2}$ for any $f,g\in\Omega$. Let $\cb:=\{B(\tau):\tau\in[0,1]\}$ denote the standard Brownian motion, and we mimic the setups in \cite{zhang2016white} to generate $\{Y_t(\tau): \tau\in[0,1]\}_{t=1}^{n}$ {from the following data generating processes (DGPs):}
\begin{enumerate}[label=(\roman*)]
    \item BM: $Y_t(\tau) = \varepsilon_t(\tau)$ for $\tau\in[0,1]$, where $\{\varepsilon_t(\tau)\}_{t=1}^{n} \stsim{i.i.d.} \cb$;
    \item BB: $Y_t(\tau) = \varepsilon_t(\tau) - \tau \varepsilon_t(1)$ for $\tau\in[0,1]$, where $\{\varepsilon_t(\tau)\}_{t=1}^{n} \stsim{i.i.d.} \cb$; 
    {
    \item FARCH: $Y_t(\tau_1) = \epsilon_t(\tau_1) \sqrt{\tau_1 + \int_{0}^{1} \rho \exp\lrp{(\tau_1^2+\tau_2^2)/2} Y_{t-1}^2(\tau_2) d\tau_2}$ for $\tau\in[0,1]$, where $\{\varepsilon_t(\tau)\}_{t=1}^{n} \stsim{i.i.d.} \cb$. We consider $\rho \in \{0.3418, 0.6153\}$, which corresponds to the cases when the Hilbert–Schmidt norm of the Gaussian kernel equals 0.5 and 0.9, respectively.
    \item FNMA: $Y_t(\tau) = \varepsilon_t(\tau) \varepsilon_{t-1}(\tau)$ for $\tau\in[0,1]$, where $\{\varepsilon_t(\tau)\}_{t=1}^{n}$ is either BM or BB.}
    \item FAR: $Y_t(\tau_1) = \int_{0}^{1} \ck(\tau_1,\tau_2) Y_{t-1}(\tau_2) d\tau_2 + \varepsilon_t(\tau_1)$, where $\{\varepsilon_t(\tau)\}_{t=1}^{n}$ is generated by either BM or BB, and we consider the Gaussian kernel $\ck_G(\tau_1,\tau_2) = c_G \exp\lrp{(\tau_1^2+\tau_2^2)/2}$ and the Wiener kernel $\ck_W(\tau_1,\tau_2) = c_W \min\{\tau_1,\tau_2\}$. For each kernel, we choose the coefficient $c$ so that the corresponding Hilbert–Schmidt norm equal to 0.3, i.e. $c_G=0.2051$ for $\ck_G$ and $c_W=0.7346$ for $\ck_W$.
\end{enumerate}

Among the existing tests that target at the serial dependence within a sequence of functional data, we consider those proposed in \cite{gabrys2007portmanteau}, \cite{horvath2013test} and \cite{zhang2016white} for comparison. In particular,
\begin{enumerate}[label=(\roman*)]
    \item \cite{gabrys2007portmanteau} (denoted by \texttt{GK}) proposed the portmanteau test based on the functional principal component analysis (fPCA). We consider the lag truncation number $h\in\{1,3,5\}$ and select the smallest $P$ such that the cumulative variation explained by the first $P$ principal components is above 90\%.
    
    \item \cite{horvath2013test} (denoted by \texttt{HHHR}) presented an independence test based on the empirical correlation functions. We aggregate the recommendations of \cite{horvath2013test} and \cite{zhang2016white} to consider the lag truncation number $h\in\{5,10,30,50\}$. Each integral involved in \texttt{HHHR} is approximated by the Riemann sum with $p=100$ points. 
    
    \item \cite{zhang2016white} proposed a Cram\'er–von Mises type test (denoted by \texttt{Zhang}) based on the functional periodogram. The critical value of \texttt{Zhang} is approximated using block bootstraps with $R=300$ replicates, and we consider the block size $b\in\{1,5,10\}$ as well as the optimal size $b^{\ast}$ chosen by the minimal volatility method. All the integrals are approximated by the Riemann sum with $p=50$ points to ease the computational burden.
\end{enumerate}

We fix $n=200$ and generate the data on a grid of $T=1000$ equally spaced points in $[0,1]$ for each functional observation. We use Fourier and B-splines (order 4) with $B=20$ basis functions to obtain the functional data. 

\begin{table}[!h]
    \caption{Empirical rejection rate of functional time series when $n=200$}
    \label{Tab:Func-R1}
    \centering
    \scalebox{0.7}{
    \begin{tabular}{c|c||cc|cc||ccc||cccc||cccc}
    \hline\hline
        \multicolumn{2}{c||}{\multirow{2}{*}{DGP}} & \multicolumn{4}{c||}{Proposed} & \multicolumn{3}{c||}{\texttt{GK}} & \multicolumn{4}{c||}{\texttt{HHHR}} & \multicolumn{4}{c}{\texttt{Zhang}} \\ \cline{3-17}
        \multicolumn{2}{c||}{} & CvM-P & CvM-B & KS-P & KS-B & $h=1$ & $h=3$ & $h=5$ & $h=5$ & $h=10$ & $h=30$ & $h=50$ & $b=1$ & $b=5$ & $b=10$ & $b^{\ast}$ \\ \hline
        \multirow{2}{*}{BM} & \texttt{F} & 0.059 & 0.057 & 0.059 & 0.060 & 0.048 & 0.052 & 0.040 & 0.068 & 0.064 & 0.064 & 0.071 & 0.053 & 0.057 & 0.062 & 0.062 \\
        & \texttt{B} & 0.055 & 0.058 & 0.057 & 0.060 & 0.047 & 0.050 & 0.041 & 0.068 & 0.064 & 0.063 & 0.071 & 0.055 & 0.054 & 0.060 & 0.062 \\ \hline
        \multirow{2}{*}{BB} & \texttt{F} & 0.052 & 0.051 & 0.050 & 0.050 & 0.040 & 0.044 & 0.034 & 0.057 & 0.054 & 0.068 & 0.072 & 0.043 & 0.038 & 0.036 & 0.043 \\
        & \texttt{B} & 0.051 & 0.054 & 0.047 & 0.051 & 0.039 & 0.042 & 0.032 & 0.057 & 0.054 & 0.068 & 0.072 & 0.046 & 0.038 & 0.037 & 0.044 \\ \hline
        \multirow{2}{*}{FARCH(0.5)} & \texttt{F} & 0.113 & 0.085 & 0.119 & 0.081 & 0.127 & 0.111 & 0.087 & 0.106 & 0.099 & 0.086 & 0.086 & 0.052 & 0.046 & 0.051 & 0.056 \\
        & \texttt{B} & 0.114 & 0.090 & 0.114 & 0.081 & 0.127 & 0.111 & 0.087 & 0.106 & 0.100 & 0.086 & 0.087 & 0.047 & 0.043 & 0.050 & 0.053 \\ \hline
        \multirow{2}{*}{FARCH(0.9)} & \texttt{F} & 0.296 & 0.194 & 0.282 & 0.173 & 0.288 & 0.279 & 0.240 & 0.213 & 0.164 & 0.107 & 0.090 & 0.045 & 0.037 & 0.041 & 0.045 \\
        & \texttt{B} & 0.294 & 0.184 & 0.275 & 0.174 & 0.287 & 0.278 & 0.241 & 0.213 & 0.163 & 0.108 & 0.090 & 0.043 & 0.037 & 0.038 & 0.046 \\ \hline
        \multirow{2}{*}{FNMA-BM} & \texttt{F} & 0.839 & 0.626 & 0.810 & 0.585 & 0.373 & 0.222 & 0.172 & 0.180 & 0.149 & 0.102 & 0.084 & 0.039 & 0.039 & 0.045 & 0.046 \\
        & \texttt{B} & 0.841 & 0.624 & 0.804 & 0.591 & 0.372 & 0.225 & 0.171 & 0.178 & 0.148 & 0.101 & 0.084 & 0.042 & 0.040 & 0.045 & 0.045 \\ \hline
        \multirow{2}{*}{FNMA-BB} & \texttt{F} & 0.774 & 0.518 & 0.730 & 0.496 & 0.512 & 0.269 & 0.188 & 0.211 & 0.164 & 0.113 & 0.110 & 0.044 & 0.028 & 0.026 & 0.035 \\
        & \texttt{B} & 0.774 & 0.518 & 0.730 & 0.494 & 0.492 & 0.262 & 0.185 & 0.208 & 0.164 & 0.110 & 0.108 & 0.042 & 0.027 & 0.028 & 0.033 \\ \hline
        \multirow{2}{*}{FAR-G-BM} & \texttt{F} & 0.984 & 0.984 & 0.981 & 0.981 & 0.964 & 0.832 & 0.726 & 0.921 & 0.822 & 0.662 & 0.559 & 0.973 & 0.970 & 0.958 & 0.973 \\
        & \texttt{B} & 0.985 & 0.984 & 0.981 & 0.980 & 0.961 & 0.828 & 0.728 & 0.921 & 0.818 & 0.656 & 0.557 & 0.972 & 0.968 & 0.957 & 0.972 \\ \hline
        \multirow{2}{*}{FAR-G-BB} & \texttt{F} & 0.984 & 0.985 & 0.978 & 0.981 & 0.972 & 0.827 & 0.692 & 0.866 & 0.741 & 0.540 & 0.453 & 0.957 & 0.942 & 0.924 & 0.944 \\
        & \texttt{B} & 0.985 & 0.983 & 0.979 & 0.980 & 0.966 & 0.818 & 0.682 & 0.866 & 0.741 & 0.541 & 0.454 & 0.960 & 0.938 & 0.932 & 0.943 \\ \hline
        \multirow{2}{*}{FAR-W-BM} & \texttt{F} & 0.973 & 0.972 & 0.971 & 0.970 & 0.919 & 0.786 & 0.678 & 0.918 & 0.864 & 0.718 & 0.636 & 0.984 & 0.979 & 0.968 & 0.983 \\
        & \texttt{B} & 0.974 & 0.973 & 0.966 & 0.971 & 0.920 & 0.788 & 0.680 & 0.920 & 0.865 & 0.719 & 0.635 & 0.984 & 0.974 & 0.975 & 0.980 \\ \hline
        \multirow{2}{*}{FAR-W-BB} & \texttt{F} & 0.968 & 0.969 & 0.966 & 0.964 & 0.933 & 0.706 & 0.550 & 0.851 & 0.715 & 0.520 & 0.441 & 0.950 & 0.914 & 0.905 & 0.931 \\
        & \texttt{B} & 0.974 & 0.971 & 0.967 & 0.971 & 0.936 & 0.712 & 0.562 & 0.857 & 0.719 & 0.521 & 0.444 & 0.953 & 0.931 & 0.913 & 0.939 \\ 
    \hline\hline    
    \end{tabular}}
\end{table}

Table \ref{Tab:Func-R1} includes the empirical rejection rate under each DGP, where {P and B are shorthands for \texttt{Permt} and \texttt{Boot} respectively}, and \texttt{F} and \texttt{B} denote the Fourier basis and B-splines respectively. 
Overall, the type of basis functions has limited influence on the finite sample performance for all the methods in Table \ref{Tab:Func-R1}. Under the null models BM and BB, the empirical size of \textsc{CvM} and \textsc{KS} are both close to 0.05, and the wild bootstrap has similar size accuracy with the permutation test. In contrast, the size accuracy of \texttt{GK} and \texttt{HHHR} depends on the lag truncation parameter $h$, and moderate size distortion can be observed with a larger number of lags. {We also note that \texttt{Zhang} achieves decent size accuracy with a smaller block size $b$, but has some size distortion with a large $b$.} 

{As for the power behaviors, our tests, especially with \texttt{Permt},   demonstrate dominating performance in FNMA and FAR models. While they are slightly inferior in FARCH(0.5), the performances get better when dependence level increases in FARCH(0.9).  
In comparison,  both \texttt{GK} and \texttt{HHHR} seem to be sensitive to truncation parameter selections,  leading to a considerable loss in power,  especially with a larger lag truncation parameter. 
Note that both FARCH and FNMA models have serial dependence but no serial correlation, thus corresponding to null models in \cite{zhang2016white}. Consequently, the power of \texttt{Zhang} against these models is close to the nominal level. We also note that   \texttt{GK} and \texttt{Zhang} can have comparable power with our tests but only when a smaller bandwidth (or block size) parameter is used. Overall, our tests are  among the best across all settings.}

\subsection{Covariance matrix}\label{Sec:Simul-Cov}

 Next, we consider the scenario where the observations are a sequence of covariance matrices $\{Y_t\}_{t=1}^{n} \in \br^{p\times p}$, which is relatively new in literature.


{We generate the time series from a conditional autoregressive Wishart model \citep{golosnoy2012conditional} with dimension $p \in \{2,5,8\}$. In particular, let $\cw_p(d,V)$ denote the Wishart distribution with degrees of freedom $d$ and the scale matrix $V$, and we generate $\{\varepsilon_t\}_{t=1}^{n} \stsim{i.i.d.} \cw_p(10,I_p) \in \br^{p\times p}$. Furthermore, we generate $Y_t = \frac{1}{10} \chol(\Sigma_t) \varepsilon_t \chol(\Sigma_t)$, where $\chol(\Sigma_t)$ denotes the lower-triangular Cholesky factor of $\Sigma_t$, and $\Sigma_t = CC^{\top} + \rho A Y_{t-1} A^{\top} + \rho B \Sigma_{t-1} B^{\top}$ with 
$A = 0.7 I_p,$
$B = 0.5 I_p$,
and $C = I_p$. 
Here we use the parameter $\rho\in\{0, 0.5, 75\}$ to control the temporal dependence and it is trivial that $\rho=0$ corresponds to the null model.  We use $\text{CAW}_p(\rho)$ to denote the model with dimension $p$ and parameter $\rho$.}


Note that in this case, we have $\Omega = \{A: \mbox{symmetric, positive definite matrix} \in \br^{p\times p}\}$. To evaluate the distance between two covariance matrices $A,B\in\Omega$, we consider the following metrics: (i) Euclidean metric: $d(A,B) = \|A-B\|_F$; (ii) log-Euclidean metric: $d(A,B) = \|\logm(A)-\logm(B)\|_F$, where $\logm(\cdot)$ denotes the logarithm of a matrix; (iii) Cholesky metric: $d(A,B) = \|\chol(A)-\chol(B)\|_F$, where $\chol(A)$ denotes the Cholesky decomposition of $A$; (iv) Riemann metric: $d(A,B) = \|\logm(A^{-1/2} B A^{-1/2})\|_F$, where  $A^{1/2}$ denotes the matrix square root of $A$. 


In view of the fact that there is little literature regarding this testing problem, we use the half vectorization to convert a $p\times p$ covariance matrix into a $\frac{1}{2}p(p+1)$-dimensional vector and apply the multivariate testing methods to the resulting time series. In addition to our proposed test with the Euclidean metric (denoted by \texttt{Vech-Euc}), {we also adopt the test statistic \texttt{mADCV} proposed in \cite{fokianos2018testing}, which can be implemented via the \texttt{R} package \texttt{dCovTS}.} In particular, \texttt{mADCV} is based on the auto-distance covariance matrix {and requires a kernel function $k(\cdot)$ and a bandwidth parameter $p_n$. We follow the setup of \cite{fokianos2018testing} to consider the truncated kernel (TC), the Daniell kernel (DAN), the Parzen kernel (PAR), and the Bartlett kernel (BAR).} Here, we set $p_n = \lrfl{3n^{\lambda}}$ with $\lambda\in\{1/10,2/10,3/10\}$. 

{According to Table \ref{Tab:Cov-R3}, our test achieves accurate size, regardless of the dimension $p$ and the choice of metric. Under the alternative, our proposed test with the Euclidean metric demonstrates the highest power when applied to either matrix-valued time series or half-vectorized multivariate time series. Interestingly, the  power increases as the dimension $p$ grows, indicating its advantage in the moderate-dimensional scenario. On the other hand, the log-Euclidean metric and the Riemann metric exhibit high power when $p$ is small, but they do show a slight power loss against moderate values of $p$. This discrepancy in performance could be attributed to the fact that different metrics excel at capturing distinct dependence structures.


In contrast, \texttt{mADCV} has a conservative size under the null and have noticeable power loss under the alternative, especially when $p$ is moderate and $\rho$ is small. When $p=5$ and $p=8$, the half-vectorized covariance matrix is converted to the multivariate time series in $\br^{15}$ and $\br^{36}$, respectively. Note that \texttt{mADCV} involves the estimation of the long-run covariance matrix, and the estimation tends to be less accurate as the dimension grows, which may explain why \texttt{mADCV} worsens against growing $p$.}


\begin{table}[!h]
    \caption{Empirical rejection rate of covariance matrix time series generated by $\text{CAW}_p(\rho)$ model}
    \label{Tab:Cov-R3}
    \centering
    \scalebox{0.85}{
    \begin{tabular}{c|c|c||ccc||ccc||ccc}
    \hline\hline
        \multicolumn{3}{c||}{\multirow{2}{*}{Method}} & \multicolumn{3}{c|}{$p=2$} & \multicolumn{3}{c|}{$p=5$} & \multicolumn{3}{c}{$p=8$} \\ \cline{4-12}
        \multicolumn{3}{c||}{} & $\rho=0$ & $\rho=0.5$ & $\rho=0.75$ & $\rho=0$ & $\rho=0.5$ & $\rho=0.75$ & $\rho=0$ & $\rho=0.5$ & $\rho=0.75$ \\ \hline
        \multirow{20}{*}{Proposed} & \multirow{4}{*}{Euc} & CvM-P & 0.043 & 0.419 & 0.997 & 0.067 & 0.570 & 1.000 & 0.042 & 0.745 & 1.000 \\
        & & CvM-B & 0.044 & 0.412 & 0.999 & 0.069 & 0.558 & 1.000 & 0.046 & 0.744 & 1.000 \\
        & & KS-P & 0.045 & 0.403 & 0.997 & 0.061 & 0.529 & 1.000 & 0.046 & 0.681 & 1.000 \\
        & & KS-B & 0.048 & 0.394 & 0.997 & 0.072 & 0.517 & 1.000 & 0.053 & 0.694 & 1.000 \\ \cline{2-12}
        & \multirow{4}{*}{log-Euc} & CvM-P & 0.051 & 0.347 & 0.997 & 0.059 & 0.309 & 1.000 & 0.061 & 0.256 & 1.000 \\
        & & CvM-B & 0.049 & 0.349 & 0.996 & 0.058 & 0.316 & 1.000 & 0.061 & 0.254 & 1.000 \\
        & & KS-P & 0.048 & 0.340 & 0.993 & 0.060 & 0.292 & 1.000 & 0.063 & 0.226 & 1.000 \\
        & & KS-B & 0.051 & 0.339 & 0.994 & 0.065 & 0.294 & 1.000 & 0.068 & 0.223 & 1.000 \\ \cline{2-12}
        & \multirow{4}{*}{Chol} & CvM-P & 0.052 & 0.387 & 0.995 & 0.045 & 0.340 & 1.000 & 0.059 & 0.348 & 1.000 \\
        & & CvM-B & 0.052 & 0.378 & 0.996 & 0.045 & 0.339 & 1.000 & 0.065 & 0.356 & 1.000 \\
        & & KS-P & 0.046 & 0.355 & 0.995 & 0.049 & 0.314 & 1.000 & 0.061 & 0.305 & 1.000 \\
        & & KS-B & 0.050 & 0.356 & 0.995 & 0.051 & 0.321 & 1.000 & 0.065 & 0.315 & 1.000 \\ \cline{2-12}
        & \multirow{4}{*}{Riemann} & CvM-P & 0.053 & 0.346 & 0.996 & 0.062 & 0.316 & 1.000 & 0.058 & 0.290 & 1.000 \\
        & & CvM-B & 0.053 & 0.345 & 0.995 & 0.061 & 0.314 & 1.000 & 0.062 & 0.287 & 1.000 \\
        & & KS-P & 0.051 & 0.342 & 0.995 & 0.068 & 0.293 & 1.000 & 0.058 & 0.262 & 1.000 \\
        & & KS-B & 0.053 & 0.340 & 0.994 & 0.068 & 0.298 & 1.000 & 0.070 & 0.259 & 1.000 \\ \cline{2-12}
        & \multirow{4}{*}{Vech-Euc} & CvM-P & 0.048 & 0.399 & 0.994 & 0.059 & 0.527 & 1.000 & 0.044 & 0.703 & 1.000 \\
        & & CvM-B & 0.049 & 0.389 & 0.994 & 0.062 & 0.517 & 1.000 & 0.042 & 0.697 & 1.000 \\
        & & KS-P & 0.052 & 0.376 & 0.992 & 0.058 & 0.500 & 1.000 & 0.046 & 0.635 & 1.000 \\
        & & KS-B & 0.049 & 0.374 & 0.993 & 0.061 & 0.491 & 1.000 & 0.047 & 0.639 & 1.000 \\ \hline
        \multirow{12}{*}{\texttt{mADCV}} & \multirow{3}{*}{TC} & $\lambda=1/10$ & 0.034 & 0.146 & 0.899 & 0 & 0 & 0.898 & 0 & 0 & 0.666 \\
        & & $\lambda=2/10$ & 0.036 & 0.131 & 0.871 & 0 & 0.001 & 0.931 & 0 & 0 & 0.895 \\
        & & $\lambda=3/10$ & 0.037 & 0.119 & 0.822 & 0 & 0.003 & 0.963 & 0 & 0 & 0.983 \\ \cline{2-12}
        & \multirow{3}{*}{DAN} & $\lambda=1/10$ & 0.029 & 0.253 & 0.984 & 0 & 0.002 & 0.980 & 0 & 0 & 0.843 \\
        & & $\lambda=2/10$ & 0.029 & 0.209 & 0.971 & 0 & 0.003 & 0.973 & 0 & 0 & 0.915 \\
        & & $\lambda=3/10$ & 0.034 & 0.179 & 0.939 & 0 & 0.003 & 0.979 & 0 & 0 & 0.974 \\ \cline{2-12}
        & \multirow{3}{*}{PAR} & $\lambda=1/10$ & 0.026 & 0.243 & 0.983 & 0 & 0.003 & 0.977 & 0 & 0 & 0.837 \\
        & & $\lambda=2/10$ & 0.026 & 0.190 & 0.964 & 0 & 0.002 & 0.969 & 0 & 0 & 0.885 \\
        & & $\lambda=3/10$ & 0.037 & 0.175 & 0.932 & 0 & 0.003 & 0.977 & 0 & 0 & 0.974 \\ \cline{2-12}
        & \multirow{3}{*}{BAR} & $\lambda=1/10$ & 0.026 & 0.257 & 0.985 & 0 & 0.005 & 0.979 & 0 & 0 & 0.744 \\
        & & $\lambda=2/10$ & 0.026 & 0.213 & 0.976 & 0 & 0.004 & 0.981 & 0 & 0 & 0.890 \\
        & & $\lambda=3/10$ & 0.030 & 0.200 & 0.958 & 0 & 0.003 & 0.981 & 0 & 0 & 0.953 \\ 
    \hline\hline
    \end{tabular}}    
\end{table}

\subsection{Univariate distribution}\label{Sec:Simul-Dist}

Lastly, we consider the case where $\Omega$ is the set of cumulative distribution function{s} of a random variable that takes value from $[0,1]$. In this case, we mimic the settings in \cite{zhu2021autoregressive} to generate a sequence of CDFs $\{Y_t(x):x\in[0,1]\}_{t=1}^{n}$, with $n=200$, from an autoregressive transport model of order $p$ denoted by ATM(p). 

Specifically, let $g(x)$ denote the natural cubic spline passing through points $(0,0)$, $(0.33,0.2)$, $(0.66,0.8)$, $(1,1)$, and we generate a random sample $\{\xi_t\}_{t=1}^{n}$ from a uniform distribution over $(-1,1)$. For $x\in\cs:=[0,1]$, we define $h(x) = \frac{1}{2} \lrp{(1-\xi_t)g(x) + (1+\xi_t)x}$ and $ \varepsilon_t(x) = \frac{1}{2} \lrp{(1+\xi_t)g(h^{-1}(x)) + (1 {-} \xi_t) h^{-1}(x)}$. Then we generate a sequence of quantile functions $\{T_t\}_{t=1}^{n}$ by (i) ATM(0): $T_t = \varepsilon_t$; (ii) ATM(1): $T_t = \beta_1 \odot T_{t-1} \oplus \varepsilon_t$; (iii) $T_t = \beta_4 \odot T_{t-4} \oplus \beta_3 \odot T_{t-3} \oplus \beta_2 \odot T_{t-2} \oplus \beta_1 \odot T_{t-1} \oplus \varepsilon_t$, where $(T_1 \oplus T_2)(x) = T_2 \circ T_1(x) = T_2(T_1(x))$, and for $\beta\in[-1,1]$,
\begin{equation*}
    {\beta} \odot T(x) 
  = \left\{
      \begin{array}{ll}
          x+\beta(T(x)-x), & 0<\beta\le1  \\
          x, & \beta=0 \\
          x+\beta(x-T^{-1}(x)), & -1\le\beta<0
      \end{array}
    \right.
\end{equation*}
 For ATM(1), we consider $\beta_1=0.5\rho$ with $\rho\in\{0.2,0.5,1\}$, whereas for ATM(4), we consider $\beta=\rho(0.2,-0.5,0.1,-0.3)^{\top}$ with $\rho\in\{0.5,0.8,1\}$. Based on the simulated quantile functions $\{T_t(x): x\in\cs\}_{t=1}^{n}$, we generate $\{Y_t(x):x\in[0,1]\}_{t=1}^{n}$ by $Y_t(x) = T_t^{-1}(x)$.

For the metric space $\Omega$, we consider several metrics to evaluate the distance between two distributions. In particular, we use $F,G$ to denote two CDFs, use $F^{-1},G^{-1}$ to denote the corresponding quantile functions, and use $f,g$ to denote the corresponding density functions, then the metrics are given by (i) 1-Wasserstein (\texttt{W1}): $d_{W1}(F,G) = \int_{0}^{1} \lrabs{F^{-1}(t)-G^{-1}(t)} \mathrm{d}t$; (ii) 2-Wasserstein (\texttt{W2}): $d_{W2}(F,G) = \bigp{\int_{0}^{1} \bigp{F^{-1}(t)-G^{-1}(t)}^2 \mathrm{d}t}^{1/2}$; (iii) Kolmogorov–Smirnov (\texttt{KS}): $d_{KS}(F,G) = \sup_{x\in\cs} \lrabs{F(x)-G(x)}$; (iv) Kullback–Leibler (\texttt{KL}): $d_{KL}(F,G) = \frac{1}{2} \bigp{\int_{x\in\cs} \log\frac{F(\mathrm{d}x)}{G(\mathrm{d}x)} F(\mathrm{d}x) + \int_{x\in\cs} \log\frac{G(\mathrm{d}x)}{F(\mathrm{d}x)} G(\mathrm{d}x)}$;
(v) Itakura–Saito (IS): $d_{IS}(F,G) = \frac{1}{2|\cs|}\int_{x\in\cs}\lrp{\frac{f(x)}{g(x)} - \log\frac{f(x)}{g(x)} - 1}\mathrm{d}x + \frac{1}{2|\cs|}\int_{x\in\cs}\lrp{\frac{g(x)}{f(x)} - \log\frac{g(x)}{f(x)} - 1}\mathrm{d}x$; (vi) Log-Spectral (LS): $d_{LS}(F,G) = \lrp{\frac{1}{|\cs|} \int_{x\in\cs} \lrp{10 \log_{10}\frac{f(x)}{g(x)}}^2 \mathrm{d}x}^{1/2}$.

In practice, we compute the Wasserstein distances directly using the simulated quantile functions $\{T_t(x)\}_{t=1}^{n}$, and compute all the other distances using \texttt{R} package \texttt{seewave} with the probability functions estimated from the simulated CDFs.
To the best of our knowledge, there is no existing literature for testing the serial dependence within a sequence of univariate distributions, hence we only present the empirical rejection rates of the proposed method in Table \ref{Tab:Dist-3}. 

\begin{table}[!h]
    \caption{Empirical rejection rate of univariate distributions when $n=200$}
    \label{Tab:Dist-3}
    \centering
    \scalebox{0.8}{
    \begin{tabular}{c|c|c|cc|cc|cc|cc|cc|cc}
    \hline\hline
    \multicolumn{3}{c|}{\multirow{2}{*}{}} & \multicolumn{2}{c|}{\texttt{W1}} & \multicolumn{2}{c|}{\texttt{W2}} & \multicolumn{2}{c|}{\texttt{KS}} & \multicolumn{2}{c|}{\texttt{KL}} & \multicolumn{2}{c|}{\texttt{IS}} & \multicolumn{2}{c}{\texttt{LS}} \\ \cline{4-15}
    \multicolumn{3}{c|}{} & Permt & Boot & Permt & Boot & Permt & Boot & Permt & Boot & Permt & Boot & Permt & Boot \\ \hline
    \multicolumn{2}{c|}{\multirow{2}{*}{ATM(0)}} & \textsc{CvM} & 0.049 & 0.052 & 0.049 & 0.050 & 0.050 & 0.050 & 0.050 & 0.050 & 0.047 & 0.045 & 0.050 & 0.049 \\
    \multicolumn{2}{c|}{} & \textsc{KS} & 0.049 & 0.052 & 0.050 & 0.051 & 0.050 & 0.057 & 0.049 & 0.050 & 0.049 & 0.050 & 0.053 & 0.049 \\ \hline
    \multirow{6}{*}{ATM(1)} & \multirow{2}{*}{$\rho=0.2$} & \textsc{CvM} & 0.186 & 0.183 & 0.178 & 0.180 & 0.178 & 0.173 & 0.239 & 0.234 & 0.272 & 0.262 & 0.203 & 0.201 \\
    & & \textsc{KS} & 0.172 & 0.176 & 0.175 & 0.171 & 0.164 & 0.164 & 0.224 & 0.228 & 0.264 & 0.253 & 0.188 & 0.186 \\ \cline{2-15}
    & \multirow{2}{*}{$\rho=0.5$} & \textsc{CvM} & 0.624 & 0.623 & 0.622 & 0.624 & 0.603 & 0.595 & 0.745 & 0.732 & 0.814 & 0.789 & 0.696 & 0.689 \\
    & & \textsc{KS} & 0.607 & 0.604 & 0.599 & 0.603 & 0.579 & 0.575 & 0.728 & 0.716 & 0.800 & 0.782 & 0.678 & 0.676 \\ \cline{2-15}
    & \multirow{2}{*}{$\rho=1.0$} & \textsc{CvM} & 1.000 & 1.000 & 1.000 & 1.000 & 1.000 & 1.000 & 1.000 & 1.000 & 0.970 & 0.978 & 1.000 & 1.000 \\
    & & \textsc{KS} & 1.000 & 1.000 & 1.000 & 1.000 & 1.000 & 1.000 & 1.000 & 1.000 & 0.970 & 0.977 & 1.000 & 1.000 \\ \hline
    \multirow{6}{*}{ATM(4)} & \multirow{2}{*}{$\rho=0.5$} & \textsc{CvM} & 0.176 & 0.180 & 0.178 & 0.176 & 0.173 & 0.171 & 0.240 & 0.245 & 0.279 & 0.265 & 0.186 & 0.186 \\
    & & \textsc{KS} & 0.184 & 0.188 & 0.188 & 0.186 & 0.175 & 0.180 & 0.253 & 0.253 & 0.289 & 0.280 & 0.202 & 0.198 \\ \cline{2-15}
    & \multirow{2}{*}{$\rho=0.8$} & \textsc{CvM} & 0.730 & 0.732 & 0.732 & 0.737 & 0.717 & 0.714 & 0.846 & 0.834 & 0.862 & 0.824 & 0.762 & 0.759 \\
    & & \textsc{KS} & 0.735 & 0.736 & 0.742 & 0.745 & 0.724 & 0.718 & 0.855 & 0.840 & 0.870 & 0.838 & 0.756 & 0.750 \\ \cline{2-15}
    & \multirow{2}{*}{$\rho=1.0$} & \textsc{CvM} & 0.996 & 0.996 & 0.997 & 0.997 & 0.995 & 0.995 & 0.998 & 0.998 & 0.996 & 0.988 & 0.998 & 0.997 \\
    & & \textsc{KS} & 0.994 & 0.996 & 0.996 & 0.996 & 0.995 & 0.996 & 0.998 & 0.999 & 0.997 & 0.989 & 0.998 & 0.997 \\
    \hline\hline
    \end{tabular}}
\end{table}

The overall pattern shown in Table \ref{Tab:Dist-3} generally matches those observed in \Cref{Sec:Simul-Func} and \Cref{Sec:Simul-Cov}. Specifically, the type of test statistic and the method used to approximate the critical value have little influence on both the size and the power. Among all the metrics examined in Table \ref{Tab:Dist-3}, we observe no significant disparities in terms of the size accuracy, whereas the Itakura-Saito metric and the Kullback-Leibler metric exhibit the dominating power in most cases.


\section{Applications}\label{Sec:app}
In this section, we demonstrate the versatility of the proposed tests in {two}  real applications.
\subsection{Cumulative intraday returns}
Modelling  financial returns using functional data approach has attracted much attention in the literature, see e.g. \cite{aue2017functional}, \cite{shang2017forecasting}, \cite{cerovecki2019functional}. In the first application, we apply the proposed tests to the cumulative intraday returns (CIDRs) of two stocks of  International Business Machines Corporation (IBM)  and McDonald’s Corporation (MCD) in  2007. Following \cite{gabrys2010tests}, we  define the CIDRs as 
$$
X_t(s)=100\{\log[P_t(s)/P_t(0)]\},\quad  s\in[0,1],  
$$
where $P_t(s)$ denote the stock price at rescaled trading period $s$, and $P_t(0)$ denote the market opening price.   The daily trading data is collected at 1-minute frequency with a total 390 data points per day. The data is then smoothed using 20 Fourier basis functions.  We consider the proposed CvM test and KS test, along with other competing methods, including \texttt{GK} at lag $h={1,3,5}$, \texttt{HHHR} with truncation lag $h={5,10,30,50}$, and \texttt{Zhang} with $b={1,5,10}$. The p-values of all tests are summarized in Table \ref{app:tab1}.
\begin{table}[H]
\caption{P-values for white noise testing in CIDRs of IBM and MCD in 2007. }
\centering
\label{app:tab1}
    \scalebox{0.85}{\begin{tabular}{ccccccccccccc}
\hline\hline
Test                    & CvM    & KS    & \texttt{GK}1 & \texttt{GK}3   & \texttt{GK}5   & \texttt{HHHR}5 & \texttt{HHHR}10  & \texttt{HHHR}30   & \texttt{HHHR}50   & \texttt{Zhang}1    & \texttt{Zhang}5 & \texttt{Zhang}10   \\ \hline
IBM                     & 0.008 & 0.010 & 0.003 & 0.008 & 0.004 & 0.000& 0.000& 0.002 & 0.100 & 0.040 & 0.010& 0.027 \\ \hline
\multicolumn{1}{l}{MCD} & 0.858 & 0.938 & 0.416 & 0.044 & 0.098 & 0.785 &0.207 & 0.424 & 0.153 & 0.552 & 0.301 & 0.344\\ \hline\hline
\end{tabular}}
\end{table}
From \Cref{app:tab1}, we find that all tests except for \texttt{HHHR}50 reject $H_0$ at 5\% level  for IBM stock, and  all except for \texttt{GK}3 fail to reject $H_0$ for MCD.  This finding is consistent with  \cite{zhang2016white} using 5-minute data and cubic B-splines basis functions. {   Here, we note  that the constructed price process could be contaminated by microstructure noise \citep{ait2009high} and we conjecture that their presence could compromise power due to diminished signal-to-noise ratio. We leave this for future research.}



\subsection{Mortality Data}
Human Mortality Database (\url{https://www.mortality.org/Home/Index}) has provided the scientific researchers a  high-quality harmonized mortality and population estimates.  For example, Figure \ref{fig:app} below gives a visual demonstration of the female (of age between 20 and 105) mortality  distribution time series in U.K. from 1970 to 2014. 

The yearly age-at-death distribution for a given country can be naturally viewed as a random element in the space of univariate distributions, which has been analyzed in many studies regarding modelling aspects for non-Euclidean valued random objects, see, e.g. \cite{petersen2019frechet},  \cite{dubeymuller2020}.
\begin{figure}[h]
\centering
\includegraphics[width=.78\textwidth]{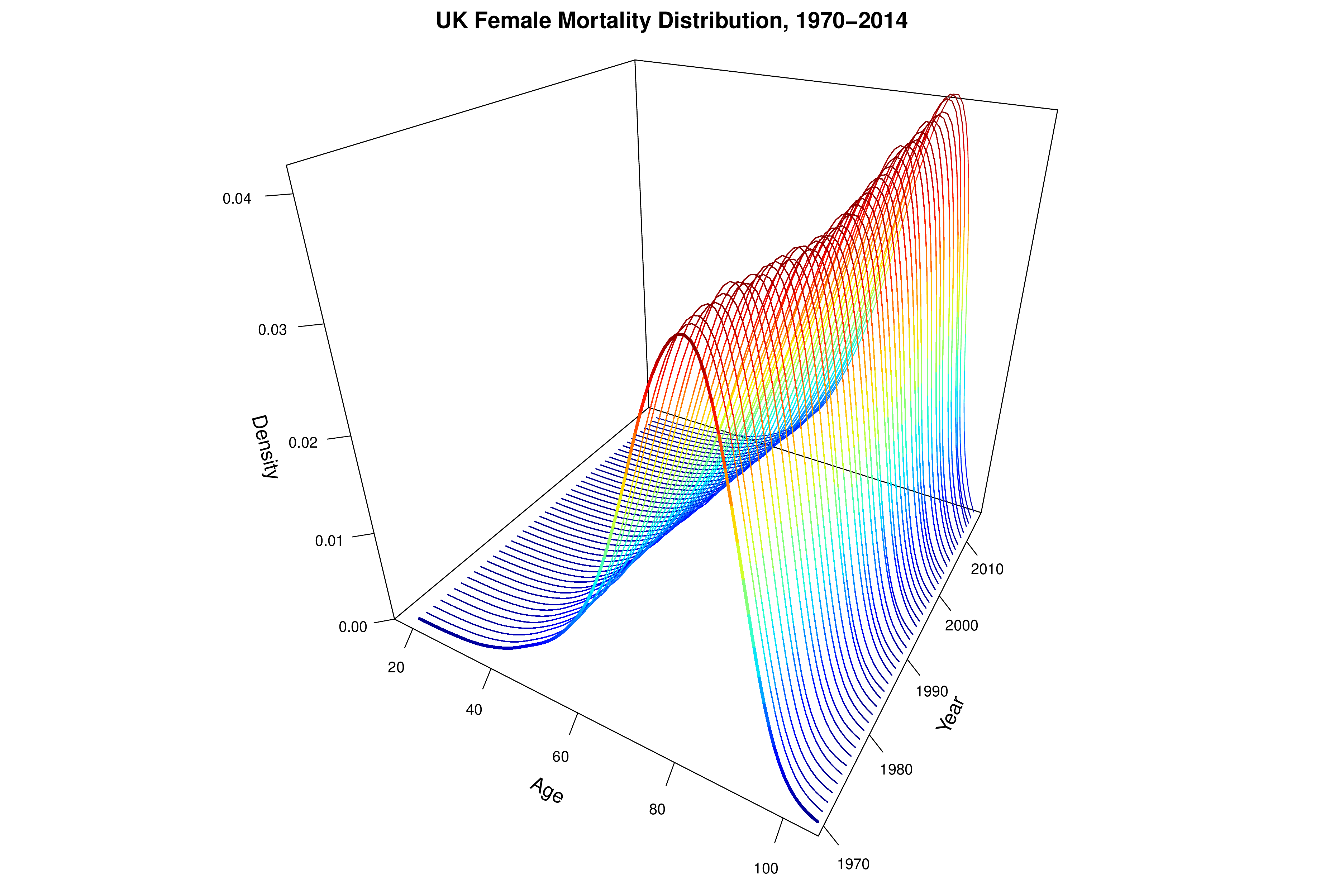}
\caption{UK female yearly age-at-death distribution, 1970-2014}\label{fig:app}
\end{figure}
However, many of them implicitly assume temporal independence for the sequence of random distributions, which could be doubtful. We  now formally test for this assumption.   

We analyze the yearly female mortality data in  27 European developed countries from 1970 to 2014. After applying our proposed CvM test and KS test with Wasserstein-1 and Wasserstein-2 metrics, we find that all  tests for all the countries reject $H_0$ at 5\% level. This provides  strong evidence of temporal dependence in mortality distribution time series. Therefore, more caution should be taken when it comes to the modelling of mortality data, as temporal dependence may not be negligible.


\section{Conclusion}\label{Sec:con}
In this paper, we have proposed a new method of testing serial independence for object-valued time series. It builds upon the distance covariance in metric spaces, and  captures all pairwise dependence. It is  fully nonparametric,  easy to implement with no tuning parameters, and is broadly applicable to many different data types as long as they live in a metric space of strong negative type. Our numerical results demonstrate accurate size for both permutation and bootstrap based tests and very competitive power performance even  in conventional Euclidean and Hilbert spaces.

This paper is concerned with serial dependence testing of object-valued time series, which is typically conducted before any specific model is fitted. {There is a recent  literature on modelling temporal dependence in object-valued  time series,  e.g. distributional autoregression in \cite{zhangJTSA2022}, \cite{zhu2021autoregressive}, \cite{ghodrati2023distributional}, and spherical autoregression in \cite{zhu2023spherical}, among others. }  It would be interesting  to investigate how to extend the test in this paper to  model diagnostic checking. This requires the researchers to properly define model residuals, which is a challenging task for object-valued data in metric spaces. {In addition, we assume the object-valued data are fully observed in the paper whereas in real applications they are  usually generated by pre-processing Euclidean-valued data. In the realm of Hilbert space, \cite{zhang2016sparse} have demonstrated that pre-smoothing in functional data analysis can exhibit a phase transition phenomenon, depending on the resolution of discretized observations.
It would also be helpful to investigate how pre-processing would impact  the behavior of our test.   Finally,  we assume the distance metric  $d(\cdot,\cdot)$ is   given a priori but its choice can play a significant role in data analysis; it usually depends on  the underlying inference problem and the characteristics of the dataset. How to choose the distance metric adaptively is very important yet also very challenging. 
We leave these topics for future research.  }

\clearpage

\begin{appendices}
\appendixpage
The supplementary material is organized as follows. Appendix \ref{Sec:AddSimul} contains additional simulation studies of {the Euclidean time seires}. Appendix \ref{Sec:lemma} provides some auxiliary lemmas and their proofs, and Appendix \ref{Sec:proof} gives all the proofs of theorems in the main text.


\section{Additional Numerical Studies}\label{Sec:AddSimul}


{In this section, we carry out some additional simulation studies based on the Euclidean time series. Appendix \ref{Sec:Simul-Uni} includes the simulation results of the univariate time series whereas Appendix \ref{Sec:Simul-Bi} contains those of the bivariate time series. In Appendix \ref{Sec:Simul-Multi}, we further investigate the finite sample performance of the proposed test statistics against different sample sizes $n$ and dimension $p$. Throughout this section, all the empirical rejection rates are average over $M=2500$ Monte Carlo replicates following the same computing procedures used in \Cref{Sec:simu}.}

\subsection{Univariate time series}\label{Sec:Simul-Uni}

Let $\Omega=\br$ and $d(x,y)=|x-y|$ for any $x,y\in\br$. We follow the setups of \citep{fokianos2017consistent} to generate a time series $\{Y_t\}_{t=1}^{n} \in \br$. Specifically, we set $n=200$ and consider four data generating processes (DGPs): (i) IID: $Y_t = \epsilon_t$; (ii) NMA(2): $Y_t = \epsilon_t \epsilon_{t-1} \epsilon_{t-2}$; (iii) ARCH(2): $Y_t = \sigma_t \epsilon_t$, $\sigma_t^2 = 0.5 + 0.8Y_{t-1}^2 + 0.1Y_{t-2}^2$; (iv) TAR(1):
    $ Y_t 
    = \left\{
        \begin{array}{ll}
            -1.5 Y_{t-1} + \epsilon_t, & Y_{t-1} < 0  \\
            0.5 Y_{t-1} + \epsilon_t,  & Y_{t-1} \geq 0
        \end{array}
      \right.$,
where $\{\epsilon_t\} \stsim{i.i.d.} \cn(0,1)$ in each DGP.

Apart from that of the proposed test statistic, we also report the performance of the following methods {with a wide range of parameters} for comparison:
\begin{enumerate}[label=(\roman*)]
    \item the classic Box–Pierce test (denoted by \texttt{BP}) and Ljung–Box test (denoted by \texttt{LB}). Both \texttt{BP} and \texttt{LB} can be implemented using \texttt{R} function \texttt{Box.test} and we consider the number of lags {$K\in\{1,3,6,9,12,15\}$};
    
    \item the consistent testing for serial correlation proposed by \cite{hong1996consistent} (denoted by \texttt{Hong}) with a pre-specified kernel $k(\cdot)$ and a bandwidth parameter $p_n$.
    {In particular, we use the Daniell kernel (D), the Parzen kernel (P), the Bartlett kernel (B), and the QS kernel (Q). As for the selection of $p_n$, it is suggested by \cite{hong1996consistent} to select $p_n\in\{\lrfl{\log(n)}, \lrfl{3*n^{0.2}}, \lrfl{3*n^{0.3}}\}$, which corresponds to $p_n\in\{5,9,15\}$ with $n=200$. Here, we consider a larger set of $p_n$ with $p_n\in\{3,6,9,12,15\}$.}
    
    \item the automatic Portmanteau test in \cite{escanciano2009automatic} (denoted by \texttt{EL}) with an upper bound $d \in \{25,50\}$.
    {In \cite{escanciano2009automatic}, the recommended value of $q$ is 2.4. We consider a larger selection of $q \in \{1.8, 2.1, 2.4, 2.7, 3.0\}$.}
    
    \item the consistent testing for pairwise dependence by \cite{fokianos2017consistent} (denoted by \texttt{FP}) with the same kernel function $k(\cdot)$ and bandwidth parameter $p_n$ as those used for \texttt{Hong}. {In the original paper, the suggested selection of $p_n$ include $\{\lrfl{n^{0.2}}, \lrfl{n^{0.4}}, \lrfl{n^{0.6}}\}$, which corresponds to $\{3,9,25\}$. In our setting, we consider $p_n\in\{3,9,15,21,25\}$.}
\end{enumerate}

{
Table \ref{Tab:Uni-R2} summarizes the empirical rejection rates of all the methods under each DGP. Here, \texttt{H}-D3 stands for the test statistic \texttt{Hong} with the Daniell kernel and $p_n=3$, \texttt{EL25-2.4} stands for the test statistic \texttt{EL} with $d=25$ and $q=2.4$. Similarly, the other concatenations of initials represent the corresponding parameter combinations.
}

\begin{table}[!h]
    \caption{Empirical rejection rate of univariate time series when $n=200$}
    \label{Tab:Uni-R2}
    \centering
    \scalebox{0.62}{
    \begin{tabular}{c|c|c|c|c||c|c|c|c||c|c|c|c||c|c|c|c}
    \hline\hline
        Method & \multicolumn{4}{c||}{IID} & \multicolumn{4}{c||}{NMA(2)} & \multicolumn{4}{c||}{ARCH(2)} & \multicolumn{4}{c}{TAR(2)} \\ \hline
        \multirow{2}{*}{Proposed} & CvM-P & 0.052 & CvM-B & 0.054 & CvM-P & 1.000 & CvM-B & 0.997 & CvM-P & 0.765 & CvM-B & 0.582 & CvM-P & 0.995 & CvM-B & 0.994 \\
        & KS-P & 0.055 & KS-B & 0.056 & KS-P & 1.000 & KS-B & 0.995 & KS-P & 0.734 & KS-B & 0.564 & KS-P & 0.993 & KS-B & 0.991 \\ \hline
        \multirow{3}{*}{\texttt{BP}} & \texttt{BP}-1 & 0.045 & \texttt{BP}-9 & 0.045 & \texttt{BP}-1 & 0.416 & \texttt{BP}-9 & 0.272 & \texttt{BP}-1 & 0.336 & \texttt{BP}-9 & 0.358 & \texttt{BP}-1 & 0.051 & \texttt{BP}-9 & 0.053 \\
        & \texttt{BP}-3 & 0.047 & \texttt{BP}-12 & 0.039 & \texttt{BP}-3 & 0.374 & \texttt{BP}-12 & 0.227 & \texttt{BP}-3 & 0.426 & \texttt{BP}-12 & 0.327 & \texttt{BP}-3 & 0.055 & \texttt{BP}-12 & 0.056 \\
        & \texttt{BP}-6 & 0.047 & \texttt{BP}-15 & 0.038 & \texttt{BP}-6 & 0.297 & \texttt{BP}-15 & 0.204 & \texttt{BP}-6 & 0.402 & \texttt{BP}-15 & 0.293 & \texttt{BP}-6 & 0.050 & \texttt{BP}-15 & 0.045 \\ \hline
        \multirow{3}{*}{\texttt{LB}} & \texttt{LB}-1 & 0.049 & \texttt{LB}-9 & 0.051 & \texttt{LB}-1 & 0.419 & \texttt{LB}-9 & 0.281 & \texttt{LB}-1 & 0.339 & \texttt{LB}-9 & 0.374 & \texttt{LB}-1 & 0.051 & \texttt{LB}-9 & 0.059 \\
        & \texttt{LB}-3 & 0.050 & \texttt{LB}-12 & 0.050 & \texttt{LB}-3 & 0.386 & \texttt{LB}-12 & 0.246 & \texttt{LB}-3 & 0.435 & \texttt{LB}-12 & 0.343 & \texttt{LB}-3 & 0.062 & \texttt{LB}-12 & 0.072 \\
        & \texttt{LB}-6 & 0.050 & \texttt{LB}-15 & 0.051 & \texttt{LB}-6 & 0.308 & \texttt{LB}-15 & 0.229 & \texttt{LB}-6 & 0.415 & \texttt{LB}-15 & 0.318 & \texttt{LB}-6 & 0.055 & \texttt{LB}-15 & 0.065 \\ \hline
        \multirow{10}{*}{\texttt{H}} & \texttt{H}-D3 & 0.068 & \texttt{H}-P3 & 0.067 & \texttt{H}-D3 & 0.463 & \texttt{H}-P3& 0.475 & \texttt{H}-D3 & 0.411 & \texttt{H}-P3 & 0.438 & \texttt{H}-D3 & 0.074 & \texttt{H}-P3 & 0.078 \\
        & \texttt{H}-D6 & 0.067 & \texttt{H}-P6 & 0.069 & \texttt{H}-D6 & 0.453 & \texttt{H}-P6 & 0.452 & \texttt{H}-D6 & 0.476 & \texttt{H}-P6 & 0.494 & \texttt{H}-D6 & 0.077 & \texttt{H}-P6 & 0.078 \\
        & \texttt{H}-D9 & 0.067 & \texttt{H}-P9 & 0.066 & \texttt{H}-D9 & 0.432 & \texttt{H}-P9 & 0.414 & \texttt{H}-D9 & 0.492 & \texttt{H}-P9 & 0.491 & \texttt{H}-D9 & 0.078 & \texttt{H}-P9 & 0.080 \\
        & \texttt{H}-D12 & 0.066 & \texttt{H}-P12 & 0.069 & \texttt{H}-D12 & 0.391 & \texttt{H}-P12 & 0.374 & \texttt{H}-D12 & 0.475 & \texttt{H}-P12 & 0.465 & \texttt{H}-D12 & 0.074 & \texttt{H}-P12 & 0.074 \\
        & \texttt{H}-D15 & 0.071 & \texttt{H}-P15 & 0.069 & \texttt{H}-D15 & 0.358 & \texttt{H}-P15 & 0.350 & \texttt{H}-D15 & 0.455 & \texttt{H}-P15 & 0.452 & \texttt{H}-D15 & 0.077 & \texttt{H}-P15 & 0.076 \\ \cline{2-17}
        & \texttt{H}-Q3 & 0.067 & \texttt{H}-B3 & 0.068 & \texttt{H}-Q3 & 0.470 & \texttt{H}-B3 & 0.468 & \texttt{H}-Q3 & 0.410 & \texttt{H}-B3 & 0.401 & \texttt{H}-Q3 & 0.076 & \texttt{H}-B3 & 0.077 \\
        & \texttt{H}-Q6 & 0.066 & \texttt{H}-B6 & 0.070 & \texttt{H}-Q6 & 0.456 & \texttt{H}-B6 & 0.467 & \texttt{H}-Q6 & 0.483 & \texttt{H}-B6 & 0.471 & \texttt{H}-Q6 & 0.078 & \texttt{H}-B6 & 0.078 \\
        & \texttt{H}-Q9 & 0.068 & \texttt{H}-B9 & 0.070 & \texttt{H}-Q9 & 0.433 & \texttt{H}-B9 & 0.448 & \texttt{H}-Q9 & 0.493 & \texttt{H}-B9 & 0.498 & \texttt{H}-Q9 & 0.075 & \texttt{H}-B9 & 0.078 \\
        & \texttt{H}-Q12 & 0.069 & \texttt{H}-B12 & 0.069 & \texttt{H}-Q12 & 0.388 & \texttt{H}-B12 & 0.420 & \texttt{H}-Q12 & 0.478 & \texttt{H}-B12 & 0.492 & \texttt{H}-Q12 & 0.076 & \texttt{H}-B12 & 0.079 \\
        & \texttt{H}-Q15 & 0.073 & \texttt{H}-B15 & 0.070 & \texttt{H}-Q15 & 0.364 & \texttt{H}-B15 & 0.389 & \texttt{H}-Q15 & 0.455 & \texttt{H}-B15 & 0.472 & \texttt{H}-Q15 & 0.080 & \texttt{H}-B15 & 0.079 \\ \hline
        \multirow{10}{*}{\texttt{FP}} & \texttt{FP}-D3 & 0.052 & \texttt{FP}-P3 & 0.054 & \texttt{FP}-D3 & 1.000 & \texttt{FP}-P3 & 1.000 & \texttt{FP}-D3 & 0.899 & \texttt{FP}-P3 & 0.905 & \texttt{FP}-D3 & 0.998 & \texttt{FP}-P3 & 0.998 \\
        & \texttt{FP}-D9 & 0.042 & \texttt{FP}-P9 & 0.042 & \texttt{FP}-D9 & 0.966 & \texttt{FP}-P9 & 0.970 & \texttt{FP}-D9 & 0.793 & \texttt{FP}-P9 & 0.794 & \texttt{FP}-D9 & 0.938 & \texttt{FP}-P9 & 0.931 \\
        & \texttt{FP}-D15 & 0.035 & \texttt{FP}-P15 & 0.030 & \texttt{FP}-D15 & 0.729 & \texttt{FP}-P15 & 0.790 & \texttt{FP}-D15 & 0.618 & \texttt{FP}-P15 & 0.654 & \texttt{FP}-D15 & 0.698 & \texttt{FP}-P15 & 0.752 \\
        & \texttt{FP}-D21 & 0.036 & \texttt{FP}-P21 & 0.030 & \texttt{FP}-D21 & 0.347 & \texttt{FP}-P21 & 0.521 & \texttt{FP}-D21 & 0.428 & \texttt{FP}-P21 & 0.518 & \texttt{FP}-D21 & 0.422 & \texttt{FP}-P21 & 0.564 \\
        & \texttt{FP}-D25 & 0.034 & \texttt{FP}-P25 & 0.030 & \texttt{FP}-D25 & 0.278 & \texttt{FP}-P25 & 0.390 & \texttt{FP}-D25 & 0.396 & \texttt{FP}-P25 & 0.443 & \texttt{FP}-D25 & 0.355 & \texttt{FP}-P25 & 0.446 \\ \cline{2-17}
        & \texttt{FP}-Q3 & 0.054 & \texttt{FP}-B3 & 0.055 & \texttt{FP}-Q3 & 1.000 & \texttt{FP}-B3 & 1.000 & \texttt{FP}-Q3 & 0.904 & \texttt{FP}-B3 & 0.904 & \texttt{FP}-Q3 & 0.999 & \texttt{FP}-B3 & 0.999 \\
        & \texttt{FP}-Q9 & 0.046 & \texttt{FP}-B9 & 0.046 & \texttt{FP}-Q9 & 0.981 & \texttt{FP}-B9 & 0.991 & \texttt{FP}-Q9 & 0.816 & \texttt{FP}-B9 & 0.839 & \texttt{FP}-Q9 & 0.946 & \texttt{FP}-B9 & 0.972 \\
        & \texttt{FP}-Q15 & 0.037 & \texttt{FP}-B15 & 0.036 & \texttt{FP}-Q15 & 0.850 & \texttt{FP}-B15 & 0.935 & \texttt{FP}-Q15 & 0.690 & \texttt{FP}-B15 & 0.753 & \texttt{FP}-Q15 & 0.809 & \texttt{FP}-B15 & 0.906 \\
        & \texttt{FP}-Q21 & 0.033 & \texttt{FP}-B21 & 0.029 & \texttt{FP}-Q21 & 0.606 & \texttt{FP}-B21 & 0.806 & \texttt{FP}-Q21 & 0.557 & \texttt{FP}-B21 & 0.653 & \texttt{FP}-Q21 & 0.618 & \texttt{FP}-B21 & 0.772 \\
        & \texttt{FP}-Q25 & 0.033 & \texttt{FP}-B25 & 0.032 & \texttt{FP}-Q25 & 0.474 & \texttt{FP}-B25 & 0.690 & \texttt{FP}-Q25 & 0.482 & \texttt{FP}-B25 & 0.578 & \texttt{FP}-Q25 & 0.515 & \texttt{FP}-B25 & 0.682 \\ \hline
        \multirow{7}{*}{\texttt{EL}} & \texttt{EL}25-1.2 & 0.163 & \texttt{EL}-50-1.2& 0.209 & \texttt{EL}25-1.2 & 0.127 & \texttt{EL}50-1.2 & 0.170 & \texttt{EL}25-1.2 & 0.111 & \texttt{EL}50-1.2 & 0.163 & \texttt{EL}25-1.2 & 0.180 & \texttt{EL}50-1.2 & 0.240 \\
        & \texttt{EL}25-1.5 & 0.120 & \texttt{EL}50-1.5 & 0.152 & \texttt{EL}25-1.5 & 0.080 & \texttt{EL}50-1.5 & 0.104 & \texttt{EL}25-1.5 & 0.072 & \texttt{EL}50-1.5 & 0.108 & \texttt{EL}25-1.5 & 0.124 & \texttt{EL}50-1.5 & 0.168 \\
        & \texttt{EL}25-1.8 & 0.090 & \texttt{EL}50-1.8 & 0.113 & \texttt{EL}25-1.8 & 0.070 & \texttt{EL}50-1.8 & 0.078 & \texttt{EL}25-1.8 & 0.054 & \texttt{EL}50-1.8 & 0.066 & \texttt{EL}25-1.8 & 0.093 & \texttt{EL}50-1.8 & 0.119 \\
        & \texttt{EL}25-2.1 & 0.078 & \texttt{EL}50-2.1 & 0.090 & \texttt{EL}25-2.1 & 0.066 & \texttt{EL}50-2.1 & 0.068 & \texttt{EL}25-2.1 & 0.045 & \texttt{EL}50-2.1 & 0.051 & \texttt{EL}25-2.1 & 0.074 & \texttt{EL}50-2.1 & 0.088 \\
        & \texttt{EL}25-2.4 & 0.070 & \texttt{EL}50-2.4 & 0.076 & \texttt{EL}25-2.4 & 0.064 & \texttt{EL}50-2.4 & 0.064 & \texttt{EL}25-2.4 & 0.043 & \texttt{EL}50-2.4 & 0.046 & \texttt{EL}25-2.4 & 0.066 & \texttt{EL}50-2.4 & 0.073 \\
        & \texttt{EL}25-2.7 & 0.069 & \texttt{EL}50-2.7 & 0.071 & \texttt{EL}25-2.7 & 0.063 & \texttt{EL}50-2.7 & 0.063 & \texttt{EL}25-2.7 & 0.041 & \texttt{EL}50-2.7 & 0.042 & \texttt{EL}25-2.7 & 0.063 & \texttt{EL}50-2.7 & 0.067 \\
        & \texttt{EL}25-3.0 & 0.068 & \texttt{EL}50-3.0 & 0.069 & \texttt{EL}25-3.0 & 0.063 & \texttt{EL}50-3.0 & 0.063 & \texttt{EL}25-3.0 & 0.040 & \texttt{EL}50-3.0 & 0.041 & \texttt{EL}25-3.0 & 0.063 & \texttt{EL}50-3.0 & 0.065 \\
    \hline\hline
    \end{tabular}}
\end{table}

{Under the null, our method together with  \texttt{LB} both achieve accurate size. For \texttt{BP} and \texttt{FP}, the size appears conservative when the truncation number/bandwidth becomes large. 
We also observe some degree of over-rejection with \texttt{H} and \texttt{EL} for the range of tuning parameters under examination. The power of our test against the alternative depends on the specific DGP. In particular, the proposed test is among the most powerful tests against the TAR(2) model. Note that both NMA(2) and ARCH(2) models have serial dependence but no serial correlation, and our method has very high power against NMA(2) model but exhibits slight power loss against the ARCH(2) model when comparing to the most powerful test in this case (i.e., \texttt{FP} with $p_n=3$). This observation is consistent with that in \Cref{Sec:Simul-Func}.

As shown from the simulation results, the performance of all the competing methods depends on the choice of tuning parameters. It is still possible that our test is outperformed by one of these  methods with carefully selected tuning parameters under specific DGP. For example, \texttt{FP} achieves a very accurate size and the highest power but only when specific parameters are chosen. Although empirical recommendations of tuning parameter selection are provided for each comparison method, there is no theoretical or practical guarantee, making  parameter selection difficult in practice. In comparison, the proposed test is  tuning-free and is convenient to implement, which is a natural advantage in real-world applications.}

\subsection{Bivariate time series}\label{Sec:Simul-Bi}

Consider $\Omega=\br^2$ with the Euclidean distance $d(x,y)=|x-y|_2$, and we generate $\{\epsilon_t\} \stsim{i.i.d.} \cn_2(0,\Sigma)$ with $\Sigma = \mat{1}{\rho}{\rho}{1}$ and $\rho\in\{0, 0.4, 0.7\}$. We fix $n=200$ and generate $\{Y_t\}_{t=1}^{n}$ by: (i) IID: $Y_t = \epsilon_t$; (ii) NMA(2): $Y_{t,i} = \epsilon_{t,i}\epsilon_{t-1,i}\epsilon_{t-2,i}$; (iii) ARCH(2): $Y_{t,i} = h_{t,i}^{1/2}\epsilon_{t,i}$, where
\begin{equation*}
    \cvec{h_{t,1}}{h_{t,2}} 
  = \cvec{0.003}{0.005} 
  + \mat{0.2}{0.1}{0.1}{0.3} \cvec{Y_{t-1,1}^2}{Y_{t-1,2}^2} 
  + \mat{0.4}{0.05}{0.05}{0.5} \cvec{h_{t-1,1}}{h_{t-1,2}}.
\end{equation*}
and (iv) MAR(2): $Y_t = \mat{0.04}{-0.1}{0.11}{0.5}Y_{t-1}+\epsilon_t$.

The multivariate Ljung–Box test statistic \texttt{mLB} proposed in \cite{hosking1980multivariate} is used for comparison and we select the number of lags $h$ from $\{1,3,6,9\}$. Additionally, we apply the multivariate testings \texttt{mADCV} and \texttt{mADCF} in \cite{fokianos2018testing} using the same kernel functions and bandwidth parameters as mentioned in \Cref{Sec:Simul-Cov}.

\begin{table}[!h]
    \caption{Empirical rejection rate of bivariate time series when $n=200$}
    \label{Tab:Bi-2}
    \centering
    \scalebox{0.65}{
    \begin{tabular}{ccc|ccc|ccc|ccc|ccc}
    \hline\hline
        \multicolumn{3}{c|}{\multirow{2}{*}{Method}} & \multicolumn{3}{c|}{IID} & \multicolumn{3}{c|}{NMA(2)} & \multicolumn{3}{c|}{ARCH(2)} & \multicolumn{3}{c}{MAR(2)} \\ \cline{4-15}
        & & & $\rho=0$ & $\rho=0.4$ & $\rho=0.7$ & $\rho=0$ & $\rho=0.4$ & $\rho=0.7$ & $\rho=0$ & $\rho=0.4$ & $\rho=0.7$ & $\rho=0$ & $\rho=0.4$ & $\rho=0.7$ \\ \hline
        \multicolumn{1}{c|}{\multirow{4}{*}{Proposed}} & \multicolumn{1}{c|}{\multirow{2}{*}{\textsc{CvM}}} & Permt & 0.053 & 0.052 & 0.056 & 0.997 & 0.999 & 1.000 & 0.348 & 0.346 & 0.348 & 1.000 & 1.000 & 1.000 \\ \cline{3-15}
        \multicolumn{1}{c|}{} & \multicolumn{1}{c|}{} & Boot & 0.057 & 0.053 & 0.056 & 0.978 & 0.989 & 1.000 & 0.261 & 0.240 & 0.242 & 1.000 & 1.000 & 1.000 \\ \cline{2-15}
        \multicolumn{1}{c|}{} & \multicolumn{1}{c|}{\multirow{2}{*}{\textsc{KS}}} & Permt & 0.051 & 0.053 & 0.056 & 0.996 & 0.997 & 1.000 & 0.342 & 0.319 & 0.335 & 1.000 & 1.000 & 1.000 \\ \cline{3-15}
        \multicolumn{1}{c|}{} & \multicolumn{1}{c|}{} & Boot & 0.051 & 0.058 & 0.060 & 0.971 & 0.984 & 0.997 & 0.224 & 0.230 & 0.234 & 1.000 & 1.000 & 1.000 \\ \hline
        \multicolumn{2}{c|}{\multirow{4}{*}{\texttt{mLB}}} & $h=1$ & 0.115 & 0.034 & 0.016 & 0.522 & 0.470 & 0.395 & 0.139 & 0.066 & 0.042 & 0.157 & 0.020 & 0.012 \\ \cline{3-15}
        \multicolumn{2}{c|}{} & $h=3$ & 0.172 & 0.025 & 0.004 & 0.522 & 0.468 & 0.338 & 0.167 & 0.046 & 0.035 & 0.160 & 0.020 & 0.005 \\ \cline{3-15}
        \multicolumn{2}{c|}{} & $h=6$ & 0.194 & 0.009 & 0.001 & 0.467 & 0.417 & 0.270 & 0.130 & 0.031 & 0.017 & 0.150 & 0.008 & 0.000 \\ \cline{3-15}
        \multicolumn{2}{c|}{} & $h=9$ & 0.213 & 0.007 & 0.001 & 0.442 & 0.403 & 0.229 & 0.106 & 0.021 & 0.009 & 0.146 & 0.004 & 0.000 \\ \hline
        \multicolumn{1}{c|}{\multirow{12}{*}{\texttt{mADCV}}} & \multicolumn{1}{c|}{\multirow{3}{*}{TC}} & $\lambda=1/10$ & 0.047 & 0.038 & 0.038 & 0.779 & 0.756 & 0.682 & 0.242 & 0.256 & 0.259 & 0.997 & 1.000 & 1.000 \\ \cline{3-15}
        \multicolumn{1}{c|}{} & \multicolumn{1}{c|}{} & $\lambda=2/10$ & 0.038 & 0.044 & 0.054 & 0.732 & 0.714 & 0.637 & 0.244 & 0.250 & 0.255 & 0.991 & 0.999 & 1.000 \\ \cline{3-15}
        \multicolumn{1}{c|}{} & \multicolumn{1}{c|}{} & $\lambda=3/10$ & 0.056 & 0.049 & 0.043 & 0.672 & 0.647 & 0.594 & 0.223 & 0.239 & 0.236 & 0.972 & 0.995 & 0.999 \\ \cline{2-15}
        \multicolumn{1}{c|}{} & \multicolumn{1}{c|}{\multirow{3}{*}{DAN}} & $\lambda=1/10$ & 0.047 & 0.045 & 0.048 & 0.869 & 0.846 & 0.762 & 0.233 & 0.237 & 0.250 & 1.000 & 1.000 & 1.000 \\ \cline{3-15}
        \multicolumn{1}{c|}{} & \multicolumn{1}{c|}{} & $\lambda=2/10$ & 0.056 & 0.041 & 0.049 & 0.873 & 0.833 & 0.768 & 0.271 & 0.268 & 0.279 & 1.000 & 1.000 & 1.000 \\ \cline{3-15}
        \multicolumn{1}{c|}{} & \multicolumn{1}{c|}{} & $\lambda=3/10$ & 0.047 & 0.046 & 0.051 & 0.844 & 0.811 & 0.742 & 0.272 & 0.285 & 0.291 & 0.999 & 1.000 & 1.000 \\ \cline{2-15}
        \multicolumn{1}{c|}{} & \multicolumn{1}{c|}{\multirow{3}{*}{PAR}} & $\lambda=1/10$ & 0.051 & 0.043 & 0.042 & 0.866 & 0.831 & 0.755 & 0.239 & 0.256 & 0.245 & 1.000 & 1.000 & 1.000 \\ \cline{3-15}
        \multicolumn{1}{c|}{} & \multicolumn{1}{c|}{} & $\lambda=2/10$ & 0.050 & 0.038 & 0.045 & 0.855 & 0.822 & 0.744 & 0.269 & 0.274 & 0.291 & 0.999 & 1.000 & 1.000 \\ \cline{3-15}
        \multicolumn{1}{c|}{} & \multicolumn{1}{c|}{} & $\lambda=3/10$ & 0.055 & 0.045 & 0.044 & 0.825 & 0.800 & 0.721 & 0.269 & 0.280 & 0.279 & 0.999 & 1.000 & 1.000 \\ \cline{2-15}
        \multicolumn{1}{c|}{} & \multicolumn{1}{c|}{\multirow{3}{*}{BAR}} & $\lambda=1/10$ & 0.051 & 0.044 & 0.049 & 0.871 & 0.839 & 0.763 & 0.220 & 0.225 & 0.226 & 1.000 & 1.000 & 1.000 \\ \cline{3-15}
        \multicolumn{1}{c|}{} & \multicolumn{1}{c|}{} & $\lambda=2/10$ & 0.052 & 0.039 & 0.037 & 0.869 & 0.839 & 0.758 & 0.255 & 0.262 & 0.274 & 1.000 & 1.000 & 1.000 \\ \cline{3-15}
        \multicolumn{1}{c|}{} & \multicolumn{1}{c|}{} & $\lambda=3/10$ & 0.053 & 0.046 & 0.049 & 0.859 & 0.830 & 0.751 & 0.274 & 0.278 & 0.288 & 0.999 & 1.000 & 1.000 \\ \hline
        \multicolumn{1}{c|}{\multirow{12}{*}{\texttt{mADCF}}} & \multicolumn{1}{c|}{\multirow{3}{*}{TC}} & $\lambda=1/10$ & 0.046 & 0.043 & 0.045 & 0.782 & 0.746 & 0.677 & 0.232 & 0.260 & 0.258 & 0.997 & 1.000 & 1.000 \\ \cline{3-15}
        \multicolumn{1}{c|}{} & \multicolumn{1}{c|}{} & $\lambda=2/10$ & 0.045 & 0.042 & 0.053 & 0.710 & 0.694 & 0.626 & 0.226 & 0.256 & 0.250 & 0.987 & 1.000 & 0.999 \\ \cline{3-15}
        \multicolumn{1}{c|}{} & \multicolumn{1}{c|}{} & $\lambda=3/10$ & 0.064 & 0.046 & 0.046 & 0.639 & 0.638 & 0.577 & 0.222 & 0.235 & 0.236 & 0.964 & 0.993 & 0.999 \\ \cline{2-15}
        \multicolumn{1}{c|}{} & \multicolumn{1}{c|}{\multirow{3}{*}{DAN}} & $\lambda=1/10$ & 0.053 & 0.042 & 0.045 & 0.873 & 0.845 & 0.749 & 0.226 & 0.217 & 0.239 & 1.000 & 1.000 & 1.000 \\ \cline{3-15}
        \multicolumn{1}{c|}{} & \multicolumn{1}{c|}{} & $\lambda=2/10$ & 0.055 & 0.036 & 0.046 & 0.863 & 0.823 & 0.744 & 0.265 & 0.260 & 0.266 & 1.000 & 1.000 & 1.000 \\ \cline{3-15}
        \multicolumn{1}{c|}{} & \multicolumn{1}{c|}{} & $\lambda=3/10$ & 0.058 & 0.044 & 0.048 & 0.832 & 0.816 & 0.727 & 0.264 & 0.279 & 0.288 & 0.998 & 1.000 & 1.000 \\ \cline{2-15}
        \multicolumn{1}{c|}{} & \multicolumn{1}{c|}{\multirow{3}{*}{PAR}} & $\lambda=1/10$ & 0.052 & 0.037 & 0.044 & 0.862 & 0.841 & 0.760 & 0.235 & 0.233 & 0.250 & 1.000 & 1.000 & 1.000 \\ \cline{3-15}
        \multicolumn{1}{c|}{} & \multicolumn{1}{c|}{} & $\lambda=2/10$ & 0.058 & 0.041 & 0.043 & 0.860 & 0.830 & 0.750 & 0.257 & 0.266 & 0.282 & 0.999 & 1.000 & 1.000 \\ \cline{3-15}
        \multicolumn{1}{c|}{} & \multicolumn{1}{c|}{} & $\lambda=3/10$ & 0.057 & 0.045 & 0.053 & 0.825 & 0.800 & 0.715 & 0.262 & 0.277 & 0.286 & 0.998 & 1.000 & 1.000 \\ \cline{2-15}
        \multicolumn{1}{c|}{} & \multicolumn{1}{c|}{\multirow{3}{*}{BAR}} & $\lambda=1/10$ & 0.052 & 0.046 & 0.051 & 0.868 & 0.839 & 0.753 & 0.217 & 0.211 & 0.224 & 1.000 & 1.000 & 1.000 \\ \cline{3-15}
        \multicolumn{1}{c|}{} & \multicolumn{1}{c|}{} & $\lambda=2/10$ & 0.053 & 0.042 & 0.049 & 0.868 & 0.837 & 0.757 & 0.252 & 0.249 & 0.273 & 1.000 & 1.000 & 1.000 \\ \cline{3-15}
        \multicolumn{1}{c|}{} & \multicolumn{1}{c|}{} & $\lambda=3/10$ & 0.054 & 0.043 & 0.050 & 0.860 & 0.829 & 0.736 & 0.268 & 0.272 & 0.278 & 0.999 & 1.000 & 1.000 \\
    \hline\hline
    \end{tabular}}
\end{table}

All the simulation results are reported in Table \ref{Tab:Bi-2}. Similar to the observations in Appendix \ref{Sec:Simul-Uni}, \textsc{KS} and \textsc{CvM} have comparable size accuracy and power across the table, 
{and for both types of statistics, \texttt{Permt} yields higher power than \texttt{Boot} on the serially-uncorrelated time series NMA(2) and ARCH(2).} Additionally, both test statistics seem robust to the componentwise dependence within the data, implying that the proposed test can handle various dependence structures. 

By contrast,  \texttt{mLB} exhibits noticable size distortion and substantial power loss.
As for \texttt{mADCV} and \texttt{mADCF}, both tests are accurate in size but have slight power loss when compared to our proposed test statistic, {especially when using the truncated kernel. The choice of the bandwidth also has some impact on the finite sample performance of \texttt{mADCV} and \texttt{mADCF}. In particular, both test statistics have higher power with a smaller $\lambda$ under NMA(2) whereas with a larger $\lambda$ under ARCH(2). In addition, the strengthening componentwise dependence of the NMA(2) model leads to a more severe power loss of \texttt{mADCV} and \texttt{mADCF}.}

\subsection{Multivariate time series}\label{Sec:Simul-Multi}

{Lastly, we mimic Table 1 in \cite{zhou2012measuring} to investigate the impact of $n$ and $p$ on the power of our test. To this end, We generate a $p$-dimensinoal time series from a VAR(1) model, i.e.
\begin{equation*}
    Y_t = \rho I_p Y_{t-1} + \varepsilon_t \in \br^p, \qquad t=1,\cdots,n,
\end{equation*}
where $\{\varepsilon_t\}_{t=1}^{n} \stsim{iid} \cn_p(0,I_p)$ is generated from the standard normal distribution in $\br^p$. Here, we consider $n\in\{100,250\}$, $p\in\{2,5,10,20,30\}$ and we fix $\rho=0.2$. The comparison methods are the same as in Appendix \ref{Sec:Simul-Bi}.

According to the numerical results in \Cref{Tab:Bi-R1}, the power of all the methods increases significantly as $n$ increases. As the dimension $p$ increases, the rejection rates of the proposed methods and \texttt{mLB} both increase. In particular, when $p$ is small, the proposed test could be outperformed by \texttt{mLB} using carefully selected parameters. However, under the moderate-dimensional setting, our method has the highest power, regardless of the sample size.

As $p$ increases, we observe noticeable power loss of \texttt{mADCV} and \texttt{mADCF}, which is consistent with our observation in \Cref{Sec:Simul-Cov}. We conjecture this is due to deteriorated long-run covariance estimation when dimension  increases. This is a new finding as the simulation studies in \cite{fokianos2018testing} are limited to the two-dimensional case. }


\begin{table}[!h]
    \caption{Empirical rejection rate of multivariate time series with different $n$ and $p$}
    \label{Tab:Bi-R1}
    \centering
    \scalebox{0.8}{
    \begin{tabular}{c|c|c||ccccc|ccccc}
    \hline\hline
        \multicolumn{3}{c||}{\multirow{2}{*}{Method}} & \multicolumn{5}{c|}{$n=100$} & \multicolumn{5}{c}{$n=250$} \\ \cline{4-13}
        \multicolumn{3}{c||}{} & $p=2$ & $p=5$ & $p=10$ & $p=20$ & $p=30$ & $p=2$ & $p=5$ & $p=10$ & $p=20$ & $p=30$ \\ \hline
        \multirow{4}{*}{Proposed} & \multirow{2}{*}{\textsc{CvM}} & Permt & 0.488 & 0.606 & 0.749 & 0.899 & 0.970 & 0.929 & 0.994 & 0.998 & 1.000 & 1.000 \\
        & & Boot & 0.485 & 0.604 & 0.755 & 0.898 & 0.970 & 0.932 & 0.992 & 0.998 & 1.000 & 1.000 \\ \cline{2-13}
        & \multirow{2}{*}{\textsc{KS}} & Permt & 0.465 & 0.572 & 0.705 & 0.863 & 0.968 & 0.911 & 0.988 & 0.998 & 0.999 & 1.000 \\
        & & Boot & 0.465 & 0.579 & 0.708 & 0.862 & 0.966 & 0.912 & 0.990 & 0.998 & 0.999 & 1.000 \\ \hline
        \multicolumn{2}{c|}{\multirow{4}{*}{\texttt{mLB}}} & $h=1$ & 0.616 & 0.729 & 0.773 & 0.835 & 0.872 & 0.930 & 0.966 & 0.977 & 0.970 & 0.978 \\
        \multicolumn{2}{c|}{} & $h=3$ & 0.523 & 0.674 & 0.726 & 0.823 & 0.881 & 0.858 & 0.926 & 0.955 & 0.953 & 0.966 \\
        \multicolumn{2}{c|}{} & $h=6$ & 0.460 & 0.629 & 0.694 & 0.828 & 0.893 & 0.782 & 0.875 & 0.920 & 0.931 & 0.940 \\
        \multicolumn{2}{c|}{} & $h=9$ & 0.435 & 0.601 & 0.689 & 0.824 & 0.906 & 0.737 & 0.846 & 0.890 & 0.908 & 0.930 \\ \hline
        \multirow{12}{*}{\texttt{mADCV}} & \multirow{3}{*}{TC} & $\lambda=1/10$ & 0.214 & 0.105 & 0.010 & 0.000 & 0.000 & 0.620 & 0.639 & 0.462 & 0.088 & 0.007 \\
        & & $\lambda=2/10$ & 0.192 & 0.105 & 0.025 & 0.000 & 0.000 & 0.493 & 0.550 & 0.428 & 0.148 & 0.038 \\
        & & $\lambda=3/10$ & 0.191 & 0.128 & 0.035 & 0.002 & 0.000 & 0.433 & 0.497 & 0.462 & 0.274 & 0.137 \\ \cline{2-13}
        & \multirow{3}{*}{DAN} & $\lambda=1/10$ & 0.360 & 0.190 & 0.024 & 0.000 & 0.000 & 0.858 & 0.912 & 0.809 & 0.331 & 0.037 \\
        & & $\lambda=2/10$ & 0.305 & 0.171 & 0.031 & 0.000 & 0.000 & 0.760 & 0.814 & 0.722 & 0.343 & 0.095 \\
        & & $\lambda=3/10$ & 0.269 & 0.186 & 0.047 & 0.001 & 0.000 & 0.666 & 0.730 & 0.657 & 0.358 & 0.134 \\ \cline{2-13}
        & \multirow{3}{*}{PAR} & $\lambda=1/10$ & 0.346 & 0.180 & 0.021 & 0.000 & 0.000 & 0.833 & 0.893 & 0.768 & 0.270 & 0.024 \\
        & & $\lambda=2/10$ & 0.282 & 0.158 & 0.027 & 0.000 & 0.000 & 0.739 & 0.786 & 0.667 & 0.242 & 0.051 \\
        & & $\lambda=3/10$ & 0.257 & 0.172 & 0.041 & 0.000 & 0.000 & 0.636 & 0.711 & 0.629 & 0.322 & 0.119 \\ \cline{2-13}
        & \multirow{3}{*}{BAR} & $\lambda=1/10$ & 0.362 & 0.191 & 0.023 & 0.000 & 0.000 & 0.865 & 0.923 & 0.825 & 0.336 & 0.039 \\
        & & $\lambda=2/10$ & 0.318 & 0.176 & 0.029 & 0.000 & 0.000 & 0.794 & 0.859 & 0.766 & 0.319 & 0.061 \\
        & & $\lambda=3/10$ & 0.296 & 0.182 & 0.042 & 0.000 & 0.000 & 0.728 & 0.794 & 0.707 & 0.370 & 0.118 \\ \hline
        \multirow{12}{*}{\texttt{mADCF}} & \multirow{3}{*}{TC} & $\lambda=1/10$ & 0.208 & 0.106 & 0.012 & 0.000 & 0.000 & 0.614 & 0.630 & 0.439 & 0.074 & 0.007 \\
        & & $\lambda=2/10$ & 0.190 & 0.104 & 0.021 & 0.000 & 0.000 & 0.477 & 0.533 & 0.410 & 0.129 & 0.033 \\
        & & $\lambda=3/10$ & 0.195 & 0.125 & 0.034 & 0.002 & 0.000 & 0.422 & 0.494 & 0.483 & 0.294 & 0.164 \\ \cline{2-13}
        & \multirow{3}{*}{DAN} & $\lambda=1/10$ & 0.362 & 0.186 & 0.026 & 0.000 & 0.000 & 0.852 & 0.915 & 0.819 & 0.358 & 0.054 \\
        & & $\lambda=2/10$ & 0.293 & 0.163 & 0.030 & 0.000 & 0.000 & 0.763 & 0.820 & 0.721 & 0.337 & 0.099 \\
        & & $\lambda=3/10$ & 0.264 & 0.155 & 0.030 & 0.001 & 0.000 & 0.657 & 0.730 & 0.633 & 0.328 & 0.136 \\ \cline{2-13}
        & \multirow{3}{*}{PAR} & $\lambda=1/10$ & 0.338 & 0.174 & 0.021 & 0.000 & 0.000 & 0.832 & 0.886 & 0.772 & 0.269 & 0.033 \\
        & & $\lambda=2/10$ & 0.280 & 0.152 & 0.024 & 0.000 & 0.000 & 0.736 & 0.791 & 0.672 & 0.285 & 0.069 \\
        & & $\lambda=3/10$ & 0.252 & 0.155 & 0.034 & 0.001 & 0.000 & 0.623 & 0.705 & 0.618 & 0.332 & 0.132 \\ \cline{2-13}
        & \multirow{3}{*}{BAR} & $\lambda=1/10$ & 0.357 & 0.177 & 0.014 & 0.000 & 0.000 & 0.867 & 0.923 & 0.815 & 0.304 & 0.031 \\
        & & $\lambda=2/10$ & 0.317 & 0.167 & 0.023 & 0.000 & 0.000 & 0.794 & 0.847 & 0.730 & 0.274 & 0.054 \\
        & & $\lambda=3/10$ & 0.283 & 0.160 & 0.028 & 0.001 & 0.000 & 0.724 & 0.774 & 0.666 & 0.326 & 0.102 \\
    \hline\hline    
    \end{tabular}}    
\end{table}





\section{Auxiliary Lemmas}\label{Sec:lemma}

\begin{lem}\label{lem_hoeff}
For the  kernel $h$ in \eqref{kernel}, let $z_i=(x_i,y_i)$ and $Z_i=(X_i,Y_i)$ i.i.d., $i=1,\cdots,4$ and $$
h_c\left(z_1, \cdots, z_c\right)=\mathbb{E}\left\{h\left(z_1, \cdots, z_c, Z_{c+1}, \cdots, Z_4\right)\right\},
$$ 
where $c=1,2,3,4$, and define $\mathbb{E}h(Z_1,\cdots,Z_4)=\theta$.  Then,
\begin{flalign*}
h_1(z)=&\frac{1}{2}\left\{\mathbb{E} [d_{\nu}(x,X)d_{\nu}(y,Y)]+\theta\right\},\\
h_2(z_1,z_2)=&\frac{1}{6}\big\{d_{\nu}(x_1,x_2)d_{\nu}(y_1,y_2)+2\mathbb{E}[d_{\nu}(x_1,X)d_{\nu}(y_1,Y)]+2\mathbb{E}[d_{\nu}(x_2,X)d_{\nu}(y_2,Y)]\\&+\theta-\mathbb{E}[d_{\nu}(x_1,X)d_{\nu}(y_2,Y)]-\mathbb{E}[d_{\nu}(x_2,X)d_{\nu}(y_1,Y)]\big\},\\
h_3(z_1,z_2,z_3)=&\frac{1}{12}\big\{2\sum_{1\leq i_1<i_2\leq 3}d_{\nu}(x_{i_1},x_{i_2})d_{\nu}(y_{i_1},y_{i_2})+2 \sum_{1\leq i\leq 3}\mathbb{E}[d_{\nu}(x_i,X)d_{\nu}(y_i,Y)]\\&-\sum_{(i_1,i_2,i_3)}^{(1,2,3)}d_{\nu}(x_{i_1},x_{i_2})d_{\nu}(y_{i_1},y_{i_3})-\sum_{1\leq i_1\neq i_2\leq 3}\mathbb{E}[d_{\nu}(x_{i_1},X)d_{\nu}(y_{i_2},Y)]
\big\},\\
h_4(z_1,z_2,z_3,z_4) =&\frac{1}{4 !} \sum_{(i_1,i_2,i_3,i_4)}^{(1,2,3,4)}d_{\nu}(x_{i_1},x_{i_2})\left[d_{\nu}(y_{i_3},y_{i_4})+d_{\nu}(y_{i_1},y_{i_2})-2d_{\nu}(y_{i_1},y_{i_3})\right].
\end{flalign*}

Furthermore, let  $h^{(1)}(z)=h_1(z)-\theta$ and $h^{(c)}=h_c(z_1,\cdots,z_c)-\sum_{j=1}^{c-1}\sum_{(c,j)}h^{(j)}(z_{i_1},\cdots,z_{i_j})-\theta$, where   ${\sum}_{(c,j)}$ denotes the summation over all $j$-subsets $\{({i_1},\cdots,{i_j})\}$ of $\{1,\cdots,c\}$.  Then, 
\begin{equation}\label{hoeffding}
    U_n(h)=\sum_{c=0}^4{4\choose c}U_{n}(h^{(c)}),
\end{equation}
where $U_{n}(h^{(c)})$ is the $U$ statistic based on $h^{(c)}$.  

\end{lem}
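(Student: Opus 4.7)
The plan is to compute each conditional kernel $h_c$ by directly integrating out the random arguments of the symmetrized kernel $h$, exploiting the centering identities for $d_\nu$ defined in \eqref{dfn_d}: for any fixed $x \in \Omega$ and $X, X'$ i.i.d.\ from $\nu$, one has $\mathbb{E}[d_\nu(x, X)] = 0$ and $\mathbb{E}[d_\nu(X, X')] = 0$. These two identities drive essentially every simplification. The decomposition \eqref{hoeffding} then follows from the classical theory of symmetric U-statistics.

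A convenient preliminary step is to verify that the kernel $h$ admits the alternative representation given in the stated formula for $h_4$, i.e.\ the same symmetric average over $4!$ permutations but with every $d$ replaced by $d_\nu$. This identity follows from the substitution $d(x_a, x_b) = d_\nu(x_a, x_b) + d^{(1)}(x_a) + d^{(1)}(x_b) - D$ (and likewise for the $y$ coordinate), which expresses the difference $h(d) - h(d_\nu)$ as a finite linear combination of cross-terms involving $d^{(1)}$ or $D$. Each such cross-term cancels upon averaging over the $24$ permutations: the relevant mechanism is the combinatorial identity that the coefficients $1 + 1 - 2 = 0$ in the inner bracket $[d(y_{i_3},y_{i_4}) + d(y_{i_1},y_{i_2}) - 2 d(y_{i_1},y_{i_3})]$ force the $\mu(y_{\pi(i)})$ contributions arising from different positions to pair up and annihilate after summation.

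With the $d_\nu$-representation of $h$ in hand, computing $h_c$ for $c = 1, 2, 3$ reduces to integrating each summand of the symmetric sum and collecting the survivors. For each of the $24$ permutations, taking expectation over the $4-c$ random pairs $(X_j, Y_j)$ with $j > c$ produces one of three outcomes: (i) zero, when some $d_\nu$ factor contains an integrated variable whose paired $d_\nu$ factor does not share that index (so the centering identities apply); (ii) a pure residual $d_\nu(x_a, x_b) d_\nu(y_{a'}, y_{b'})$ on the fixed arguments; or (iii) an expression $\mathbb{E}[d_\nu(x_a, X) d_\nu(y_b, Y)]$ when two $d_\nu$ factors both involve the \emph{same} integrated pair $(X_j, Y_j)$. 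Enumerating which permutations produce each type of surviving term and normalizing by $4!$ yields the claimed formulas for $h_1$, $h_2$, $h_3$.

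Finally, \eqref{hoeffding} is the classical Hoeffding orthogonal decomposition for symmetric U-statistics of order $m = 4$. By the recursive construction, each $h^{(c)}$ is completely degenerate, i.e.\ $\mathbb{E}[h^{(c)}(Z_1,\ldots,Z_c) \mid Z_1,\ldots,Z_{c-1}] = 0$ a.s., and a routine induction yields $h_c = \theta + \sum_{j=1}^{c} \sum_{(c,j)} h^{(j)}$. Substituting this expression into $U_n(h)$ and invoking the combinatorial identity $\binom{n}{4}^{-1} \binom{n}{j} \binom{n-j}{4-j} = \binom{4}{j}$ reproduces \eqref{hoeffding}; see Chapter 1 of \cite{lee1990u}. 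The main obstacle throughout is the bookkeeping: the identity $h(d) = h(d_\nu)$ and the enumeration of surviving permutations in the derivations of $h_1, h_2, h_3$ require careful tracking across $24$ permutations. The individual cancellations are elementary, but the combinatorial accounting must be set up cleanly.
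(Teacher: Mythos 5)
Your proposal is correct and follows essentially the same direct-computation route that the paper relies on: the paper's own "proof" of the $h_c$ formulas is simply a citation to \cite{han2021generalized} and the supplement of \cite{zhang2018conditional}, where this exact calculation (replace $d$ by $d_\nu$ via the centering identity and enumerate the surviving terms under integration) is carried out, and the paper dismisses \eqref{hoeffding} as the standard Hoeffding decomposition, just as you do. Your sketch adds value by identifying the two cancellation mechanisms explicitly — that the $d^{(1)}$ and $D$ residuals vanish under the $24$-permutation average because of the $1+1-2=0$ coefficient structure, and that $\mathbb{E}[d_\nu(x,X)]=0$ kills unmatched factors — though a fully written-out proof would still need to carry through the bookkeeping over all $24$ permutations for each of $h_1,h_2,h_3$, which you rightly flag as the laborious part.
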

\begin{proof}
See Section 9 in \cite{han2021generalized} or Section 1.2 of the supplement of \cite{zhang2018conditional}.  \eqref{hoeffding} follows directly by Hoeffding decomposition.
\end{proof}

\begin{lem}\label{lem_expand}
Denote by $\{\lambda_{\ell}\}_{\ell=1}^{\infty}$, and $\{e_{\ell}\}_{\ell=1}^{\infty}$ as the eigenvalues and orthonormal eigenfunctions corresponding to \eqref{eq_eigen}, then under Assumptions \ref{ass_moment}, 
\begin{equation}\label{K_kernel}
    \mathcal{K}(z,z'):=d_{\nu}(x,x')d_{\nu}(y,y')=\sum_{\ell=1}^{\infty}\lambda_{\ell} e_{\ell}(z)e_{\ell}(z').
\end{equation}
where  the series converges absolutely and uniformly on $(z,z') \in \Omega^2\times \Omega^2$.
\end{lem}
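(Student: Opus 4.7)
The approach is to interpret $\mathcal{K}$ as the tensor square of the simpler single-variable kernel $d_\nu$, decompose $d_\nu$ first via spectral theory, and then lift the decomposition to $\mathcal{K}$ before invoking Mercer's theorem for the uniform mode of convergence.

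First I would analyze the integral operator $T$ on $L^2(\Omega,\nu)$ with kernel $d_\nu$. Assumption \ref{ass_moment} implies $\mathbb{E}[d^2(X,X')]<\infty$, hence $d_\nu \in L^2(\nu\otimes\nu)$ and $T$ is Hilbert--Schmidt and self-adjoint; moreover, $\int d_\nu(x,x)\,d\nu(x) = -D$ is finite so $T$ is trace-class. To see $T$ is negative semi-definite, fix $\phi \in L^2(\nu)$, set $\bar\phi = \int \phi\, d\nu$ and $\psi = \phi - \bar\phi$. Expanding $d_\nu(x,x') = d(x,x') - d^{(1)}(x) - d^{(1)}(x') + D$ and cancelling the terms involving $\bar\phi$ gives
\begin{equation*}
\langle T\phi,\phi\rangle = \iint_{\Omega^2} d(x,x')\,\psi(x)\psi(x')\,d\nu(x)\,d\nu(x').
\end{equation*}
Since $\int\psi\,d\nu=0$, the Jordan decomposition writes $\psi\,d\nu = c(\nu_1-\nu_2)$ with $\nu_1,\nu_2 \in M_1(\Omega)$ (integrability holds by Cauchy--Schwarz and Assumption \ref{ass_moment}), and appealing to \Cref{dfn_snt} then forces $\langle T\phi,\phi\rangle \leq 0$.

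The spectral theorem for compact self-adjoint operators supplies real eigenvalues $\alpha_1,\alpha_2,\ldots \leq 0$ with $\sum_i|\alpha_i| = D < \infty$ and orthonormal eigenfunctions $u_1,u_2,\ldots$ in $L^2(\nu)$. Since $\mathcal{K}$ is a tensor product kernel, the associated operator on $L^2(\nu\otimes\nu)$ factors as $T\otimes T$; its eigenvalues are the products $\alpha_i\alpha_j \geq 0$ with eigenfunctions $u_i(x)u_j(y)$. Relabelling this double-indexed family as $\{(\lambda_\ell,e_\ell)\}_{\ell\geq 1}$ yields exactly the eigensystem \eqref{eq_eigen}, with $\sum_\ell \lambda_\ell = D^2 < \infty$. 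Thus $\mathcal{K}$ is a continuous symmetric positive semi-definite kernel whose integral operator is trace-class.

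Finally I would apply Mercer's theorem to conclude $\mathcal{K}(z,z') = \sum_\ell \lambda_\ell e_\ell(z) e_\ell(z')$ with absolute and uniform convergence on $\Omega^2 \times \Omega^2$. I expect the main obstacle to be the non-compactness of $\Omega$: the classical Mercer theorem requires a compact base space, whereas here $\Omega$ is only separable. To address this I would invoke non-compact generalizations of Mercer's theorem for continuous trace-class PSD kernels on separable metric spaces (e.g.\ Sun, 2005; Ferreira and Menegatto, 2009), whose hypotheses are verified by the continuity of $d$, the square-integrability of $d_\nu$ under Assumption \ref{ass_moment}, and the trace-class property established above; this delivers the absolute and uniform convergence needed for the downstream use of \eqref{K_kernel} in the U-statistic Hoeffding decomposition.
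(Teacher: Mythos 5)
Your proposal is correct and lands on the same three pillars as the paper's proof: positive definiteness of $-d_\nu$ from strong negative type, positive definiteness of the product kernel $\mathcal{K}$, the moment bound $\mathbb{E}[\mathcal{K}^2(Z,Z')]<\infty$, and then Sun (2005) for the noncompact Mercer expansion. The difference is in how the positive-definiteness is obtained. The paper gets it ``for free'' at the kernel level by citing Lemma~12 and Proposition~18 of Sejdinovic et al.\ (that $-d_\nu$ is a symmetric p.d.\ kernel induced by a metric of strong negative type) and then Lemma~4.6 of Steinwart--Christmann for closure under products, and never spectrally decomposes $T$. You instead work at the operator level: you verify $\langle T\phi,\phi\rangle\le 0$ directly by centering $\phi$, writing $\psi\,d\nu$ as a scaled Jordan difference $c(\nu_1-\nu_2)$ of probability measures in $M_1(\Omega)$ (here you correctly need Cauchy--Schwarz together with $M_2$-integrability, supplied by Assumption~\ref{ass_moment}), and feeding this into \Cref{dfn_snt}; you then diagonalize $T$, tensor to get the eigensystem of $\mathcal{K}$ as $\{\alpha_i\alpha_j,\;u_i\otimes u_j\}$, and observe trace-class from $\int d_\nu(x,x)\,d\nu=-D$. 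That is more self-contained and explicitly identifies the $\{\lambda_\ell\}$ as products $\alpha_i\alpha_j$, which the paper never needs. One small caveat: Sun's Mercer theorem is stated for \emph{pointwise} positive definite (Mercer) kernels, while your argument delivers negative semidefiniteness of the integral operator $T$. For a continuous kernel and a measure of full support these are equivalent, and since the eigenequation \eqref{eq_eigen} only ever sees points in $\operatorname{supp}(\nu)^2$ this is harmless, but you should either flag this equivalence explicitly or, as the paper does, cite the Sejdinovic et al.\ result which gives the pointwise p.d.\ property directly. Once that is noted, your route and the paper's are logically interchangeable.
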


\begin{proof}
When $(\Omega,d)$ is of strong negative type, by Lemma 12 and Proposition 18 in \cite{sejdinovic2013equivalence}, we know that $-d_{\nu}(\cdot,\cdot)$ is a symmetric positive definite kernel induced by $d$. Therefore, we obtain the product kernel   $\mathcal{K}(\cdot,\cdot)$ \citep[ Lemma ~4.6 ]{steinwart2008support}. 
By Cauchy-Schwarz inequality and triangle inequality,
$$
 \mathbb{E} [\mathcal{K}^2(Z,Z')]\leq  C \mathbb{E}[d_{\nu}^4(X,X')] \leq C \mathbb{E}[d(\omega,X)+d(\omega,X')]^4<\infty.
$$
Then, by Proposition 1 and Theorem 2 in \cite{sun2005mercer}, the result follows.
\end{proof}

\begin{lem}\label{lem_mds}
Define $\mathcal{F}_{t}=\sigma(X_t,X_{t-1},\cdots)$, and $Z_{t}^{(k)}=(X_t,X_{t-k})$. Under $H_0$, for any fixed  $K$ and $M$, we have (i). $\left(\{e_{\ell}(Z_t^{(k)})\}_{\ell=1,\cdots,M; k=1,\cdots,K },\mathcal{F}_{t-1}\right)$ forms a 
sequence of martingale difference vectors; (ii). $\mathbb{E}[e_{\ell_1}(Z_t^{(k_1)})e_{\ell_2}(Z_t^{(k_2)})]=\mathbf{1}(\ell_1=\ell_2,k_1=k_2)$; (iii).  for $i_1<j_1, i_2<j_2$, $\mathbb{E}\left\{ e_{\ell_1}(Z_{i_1}^{(k_1)})e_{\ell_1}(Z_{j_1}^{(k_1)})e_{\ell_2}(Z_{i_2}^{(k_2)})e_{\ell_2}(Z_{j_2}^{(k_2)})\right\}=0$ except for the case when the index subset $\{i_1-k_1,i_1,j_1-k_1,j_1\}$ is identical to  $\{i_2-k_2,i_2,j_2-k_2,j_2\}$.
\end{lem}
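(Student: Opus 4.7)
The plan is to reduce everything to one analytic fact about the eigenfunctions---each $e_\ell$ has zero integral in each of its two arguments---and then combine this fact with the i.i.d.\ structure of $\{X_t\}$ under $H_0$. The zero-marginal property is obtained by integrating the eigen-equation $\lambda_\ell e_\ell(x,y)=\mathbb{E}[d_\nu(x,X)d_\nu(y,Y)e_\ell(X,Y)]$ against $\nu$ in the $x$-variable and using $\int d_\nu(x,x')\,d\nu(x)=0$, which is immediate from the definitions in \eqref{dfn_d}. Since $\lambda_\ell\neq 0$, this yields $\int e_\ell(x,y)\,d\nu(x)=0$ for every $y$, and by symmetry $\int e_\ell(x,y)\,d\nu(y)=0$ for every $x$.

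With this in hand, parts (i) and (ii) are essentially direct. For (i), $k\geq 1$ implies $X_{t-k}\in\mathcal{F}_{t-1}$, while $X_t$ is independent of $\mathcal{F}_{t-1}$ under $H_0$, so $\mathbb{E}[e_\ell(Z_t^{(k)})\mid\mathcal{F}_{t-1}]=\int e_\ell(x,X_{t-k})\,d\nu(x)=0$. For (ii) with $k_1=k_2=k$, the pair $(X_t,X_{t-k})$ is distributed as $\nu\otimes\nu$ under $H_0$, so the claim reduces to the $L^2(\nu\otimes\nu)$-orthonormality of $\{e_\ell\}$. For $k_1\neq k_2$, condition on $X_t$: the remaining $X_{t-k_1}$ and $X_{t-k_2}$ are independent of each other and of $X_t$, so the conditional expectation factors, and each factor vanishes by the zero-marginal property.

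Part (iii) is the main obstacle. The strategy is: since $\{X_t\}$ is i.i.d.\ under $H_0$, the expectation equals an integral against $\bigotimes_s \nu$ over the distinct indices $s$ appearing in the eight argument slots. If some index $s^\ast$ has multiplicity exactly $1$ in the combined multiset of argument-indices, then $X_{s^\ast}$ occurs as one argument of exactly one eigenfunction factor, and integrating $X_{s^\ast}$ first annihilates the product by the zero-marginal property. It therefore suffices to establish the combinatorial claim: if $S_1=\{i_1-k_1,i_1,j_1-k_1,j_1\}$ and $S_2=\{i_2-k_2,i_2,j_2-k_2,j_2\}$ are unequal as multisets, then some element of $S_1\cup S_2$ has combined multiplicity exactly $1$. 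Since $k_r\geq 1$ and $i_r<j_r$, each $S_r$ either consists of four distinct elements or has the degenerate form $\{a,a+k_r,a+k_r,a+2k_r\}$ (arising precisely when $j_r=i_r+k_r$). The contrapositive---every element of $S_1\cup S_2$ has combined multiplicity $\geq 2$---is handled by a short case analysis across the three configurations (four-distinct vs.\ four-distinct, four vs.\ three, three vs.\ three); in each configuration, the requirement that every multiplicity-$1$ element of one $S_r$ also lie in the other either produces a contradiction (ruling out that case) or forces $k_1=k_2$ and hence $S_1=S_2$, completing the proof.
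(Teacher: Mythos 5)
Your proof is correct, and it takes a genuinely different (and somewhat cleaner) route than the paper. The paper first establishes the martingale-difference property $\mathbb{E}[\lambda_\ell e_\ell(Z_t^{(k)})\mid\mathcal{F}_{t-1}]=0$ by conditioning on the past $\sigma$-field and using independence of $X_t$ from $\mathcal{F}_{t-1}$ together with $\mathbb{E}[d_\nu(x,X_t)]=0$, and then proves (ii) by a stationarity trick (replacing $X_{t-k_2}$ with $X_{t+1}$ and invoking the MDS property) and (iii) by a terse case analysis on paired indices. You instead extract a single analytic fact---the double zero-marginal property $\int e_\ell(x,y)\,d\nu(x)=0$ and $\int e_\ell(x,y)\,d\nu(y)=0$, obtained by integrating the eigen-equation against $\nu$ in either slot and using $\int d_\nu(\cdot,x')\,d\nu=0$---and then condition freely on any subset of the i.i.d.\ $X$'s rather than specifically on $\mathcal{F}_{t-1}$. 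This symmetrizes the argument: for (ii) with $k_1\neq k_2$ you condition on $X_t$ and annihilate both remaining factors rather than hunting for a maximal index, and for (iii) you can integrate out any multiplicity-one index, not just the largest one. What you gain is a cleanly isolated combinatorial lemma ($S_1\neq S_2$ as multisets $\Rightarrow$ some index has combined multiplicity one); your three-case check of it is complete and correct given that within any single factor $e_\ell(X_a,X_b)$ one always has $a>b$, so no single factor has a repeated index, and the degenerate configuration $j_r=i_r+k_r$ is the only way a set $S_r$ drops below four elements. The only cosmetic inaccuracy is the phrase ``by symmetry'' for the $y$-marginal: the kernel is symmetric under swapping $(x,y)\leftrightarrow(y,x)$ but individual eigenfunctions need not be; the correct justification is simply to repeat the integration argument in the $y$-variable, which works identically because $\int d_\nu(y,\cdot)\,d\nu(y)=0$.
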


\begin{proof}

(i). 
The M.D.S. claim follows by verifying for each $k$ and $\ell$. We first show that 
$$
\mathbb{E}[\mathcal{K}(z,Z_t^{(k)})|\mathcal{F}_{t-1}]=0,
$$
where $\mathcal{K}(z,z')$ is defined in \eqref{K_kernel} of \Cref{lem_expand}.

By direct calculation,
\begin{flalign*}
\mathbb{E}[\mathcal{K}(z,Z_t^{(k)})|\mathcal{F}_{t-1}]=& \mathbb{E}[d_{\nu}(x,X_t)d_{\nu}(y,X_{t-k})|\mathcal{F}_{t-1}]\\=&d_{\nu}(y,X_{t-k})\mathbb{E}[d_{\nu}(x,X_t)|\mathcal{F}_{t-1}]=d_{\nu}(y,X_{t-k})\mathbb{E}[d_{\nu}(x,X_t)]=0,
\end{flalign*}
where the second last equality holds by the independence between $X_t$ and $\mathcal{F}_{t-1}$, and the last by the fact that $\mathbb{E}[d_{\nu}(x,X_t)]=0$. 

By  \eqref{eq_eigen}, we have
$$
\lambda_{\ell}e_{\ell}(Z_t^{(k)})=\int \mathcal{K}(z,Z_t^{(k)}) e_\ell(z)\mathbb{P}_Z(\mathrm{d}z),
$$
taking conditional expectation w.r.t. $\mathcal{F}_{t-1}$ on both sides of the above equation we can obtain that 
\begin{equation}\label{eq_mds}
    \mathbb{E}[\lambda_{\ell}e_{\ell}(Z_t^{(k)})|\mathcal{F}_{t-1}]=\int \mathbb{E}[\mathcal{K}(z,Z_t^{(k)})|\mathcal{F}_{t-1}] e_\ell(z)\mathbb{P}_Z(\mathrm{d}z)=0.
\end{equation}
Note $\lambda_{\ell}$ is a constant, the result follows.

(ii). Under $H_0$, for any $k_1\neq k_2$, we have that 
\begin{flalign*}
\mathbb{E}[e_{\ell_1}(Z_t^{(k_1)})e_{\ell_2}(Z_t^{(k_2)})]=&\mathbb{E}[e_{\ell_1}(X_t,X_{t-k_1})e_{\ell_2}(X_t,X_{t-k_2})]\\=&\mathbb{E}[e_{\ell_1}(X_t,X_{t-k_1})e_{\ell_2}(X_t,X_{t+1})]\\
=&\mathbb{E}\{\mathbb{E}[e_{\ell_1}(X_t,X_{t-k_1})e_{\ell_2}(X_t,X_{t+1})|\mathcal{F}_{t}]\}
\\=&\mathbb{E}\{e_{\ell_1}(X_t,X_{t-k_1})\mathbb{E}[e_{\ell_2}(X_t,X_{t+1})|\mathcal{F}_{t}]\}=0
\end{flalign*}
where the second inequality holds by the fact that $\{X_t\}$ is i.i.d., and the third by law of iterated expectation and the last by \eqref{eq_mds}. For $k_1=k_2$ and $\ell_1\neq \ell_2$, we know that $\mathbb{E}[e_{\ell_1}(Z)e_{\ell_2}(Z)]=0$ by the orthogonality of eigenfunctions.

(iii). Note that 
\begin{flalign*}
&\mathbb{E}\left\{ e_{\ell_1}(Z_{i_1}^{(k_1)})e_{\ell_1}(Z_{j_1}^{(k_1)})e_{\ell_2}(Z_{i_2}^{(k_2)})e_{\ell_2}(Z_{j_2}^{(k_2)})\right\}
\\=&\mathbb{E}\left\{ e_{\ell_1}(X_{i_1},X_{i_1-k_1})e_{\ell_1}(X_{j_1},X_{j_1-k_1})e_{\ell_2}(X_{i_2},X_{i_2-k_2})e_{\ell_2}(X_{j_2},X_{j_2-k_2})\right\}.
\end{flalign*}
By similar arguments as (ii), we know that  if one element in $\{i_1-k_1,i_1,j_1-k_1,j_1, i_2-k_2,i_2,j_2-k_2,j_2\}$ is not paired, using the M.D.S. property we can claim the expectation is zero. In addition, when $i_1<j_1$ and $i_2<j_2$, we know that ${i_1-k_1}<{i_1},j_1-k_1<{j_1}$, and ${i_2-k_2}<{i_2},j_2-k_2<{j_2}$. This suggests we must have $i_1-k_1=i_2-k_2$, and $j_1=j_2$. When $k_1=k_2$, for the nonnull expectation, we must have $i_1=i_2$; and when $k_1\neq k_2$, we must have $i_2=j_1-k_1$ and $i_1=j_2-k_2$. In either case, the index subset $\{i_1-k_1,i_1,j_1-k_1,j_1\}$ is identical to  $\{i_2-k_2,i_2,j_2-k_2,j_2\}$.  

\end{proof}

\begin{lem}
\label{lem_exp}
Under $H_0$, $\mathbb{E}V_n(k)=O((n-k)^{-2})$.
\end{lem}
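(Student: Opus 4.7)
The plan is to exploit the $U$-statistic representation \eqref{Vn_U} together with the Hoeffding decomposition of Lemma \ref{lem_hoeff}, writing $V_n(k) = \theta + \sum_{c=1}^{4}\binom{4}{c}U_N(h^{(c)})$ with $N = n-k$. Under $H_0$ the $X_t$'s are i.i.d., so the two coordinates of $Z_t^{(k)} = (X_t, X_{t-k})$ are independent with common marginal $\nu$, giving $\theta = V(k) = 0$. The centering identity $\mathbb{E}_{X\sim\nu}[d_{\nu}(x,X)]=0$ then forces $h_1(z) = \frac{1}{2}\{\mathbb{E}[d_{\nu}(x,X)d_{\nu}(y,Y)] + \theta\} = 0$, so $h^{(1)}\equiv 0$. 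Consequently, taking expectations reduces the task to showing $\mathbb{E}[U_N(h^{(c)})] = O(N^{-2})$ for each $c\in\{2,3,4\}$.

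Under $H_0$, all single-expectation pieces in $h_2, h_3$ from Lemma \ref{lem_hoeff} vanish, yielding the simplified forms $h^{(2)}(z_1, z_2) = \frac{1}{6}d_{\nu}(x_1, x_2)d_{\nu}(y_1, y_2)$ and $h^{(3)}(z_1,z_2,z_3) = -\frac{1}{12}\sum_{(p_1,p_2,p_3)} d_{\nu}(x_{p_1},x_{p_2})d_{\nu}(y_{p_1},y_{p_3})$. For $c=2$, I would verify that $\mathbb{E}[h^{(2)}(Z_a^{(k)}, Z_b^{(k)})] = \frac{1}{6}\mathbb{E}[d_{\nu}(X_a, X_b)d_{\nu}(X_{a-k}, X_{b-k})] = 0$ for every $a\neq b$: when $\{a, b, a-k, b-k\}$ are four distinct indices the expectation factorizes into two zero-mean $d_{\nu}$ factors, and when $|a-b| = k$ (so one index is shared) conditioning on the shared $X$ gives a product of two zero conditional means. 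Hence $\mathbb{E}[U_N(h^{(2)})] = 0$. For $c=3$, each summand $\mathbb{E}[d_{\nu}(X_{q_1}, X_{q_2}) d_{\nu}(X_{q_1-k}, X_{q_3-k})]$ vanishes unless $\{q_1,q_2\} = \{q_1-k, q_3-k\}$ as unordered sets, which forces $q_2 = q_1-k$ and $q_3 = q_1+k$, i.e., $\{a,b,c\}$ is an arithmetic progression of common difference $k$ with $q_1$ its midpoint. There are only $O(N)$ such triples and the surviving value $\mathbb{E}[d_{\nu}(X_1, X_2)^2]$ is finite under Assumption \ref{ass_moment}, so $\mathbb{E}[U_N(h^{(3)})] = O(N)/\binom{N}{3} = O(N^{-2})$.

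The main obstacle is controlling $U_N(h^{(4)})$, which I plan to tackle via the full degeneracy of the fourth-order Hoeffding projection: $\int h^{(4)}(z_1, z_2, z_3, z)\, d\nu_Z(z) = 0$ for any fixed $(z_1, z_2, z_3)$, where $\nu_Z := \nu\otimes\nu$ is the $H_0$-marginal of $Z_t^{(k)}$. Call an index $a\in\{i_1,\ldots,i_4\}$ \emph{isolated} in the 4-tuple if $\{a-k, a+k\}\cap\{i_1,\ldots,i_4\} = \emptyset$. If some index is isolated, then the corresponding $Z_a^{(k)}$ is independent of the other three $Z_{i_b}^{(k)}$'s with marginal $\nu_Z$, so integrating out $Z_a^{(k)}$ first forces $\mathbb{E}[h^{(4)}(Z_{i_1}^{(k)},\ldots,Z_{i_4}^{(k)})] = 0$ by degeneracy. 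The only 4-tuples without any isolated index correspond to graphs on the four vertices---whose edges link indices at distance exactly $k$---in which every vertex has positive degree; a short graph-theoretic enumeration on four vertices (ruling out $K_{1,3}$, $4$-cycles, and triangles in the $k$-distance graph) leaves only (a) two disjoint $k$-pairs $\{(t_1, t_1+k), (t_2, t_2+k)\}$ with $\{t_1, t_1+k\}\cap\{t_2, t_2+k\} = \emptyset$ ($O(N^2)$ tuples), or (b) an arithmetic progression $\{t, t+k, t+2k, t+3k\}$ of length four ($O(N)$ tuples). Since $|\mathbb{E}[h^{(4)}]| = O(1)$ on each such tuple by Assumption \ref{ass_moment}, $\mathbb{E}[U_N(h^{(4)})] = O(N^2)/\binom{N}{4} = O(N^{-2})$, and summing the three contributions gives $\mathbb{E}[V_n(k)] = O((n-k)^{-2})$.
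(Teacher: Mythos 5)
Your proof is correct and follows essentially the same route as the paper: expand $V_n(k)$ via the Hoeffding decomposition of Lemma~\ref{lem_hoeff}, observe that $\theta=0$ and $h^{(1)}\equiv 0$ under $H_0$, and then count the summands whose expectations survive the $k$-dependence among the $Z_i^{(k)}$'s, finding $O(n-k)$ surviving tuples for $c=3$ and $O((n-k)^2)$ for $c=4$. The only difference is one of completeness: the paper disposes of the $c=4$ case with ``similar arguments,'' while you carry it out explicitly via the isolated-index / degeneracy argument together with the graph enumeration on four vertices, which is the right and careful way to make that remark rigorous.
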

\begin{proof}
In view of \Cref{lem_expand} and \Cref{lem_mds} (ii), we first claim that  except for the case that $(i_1,i_2)=(j_1,j_2)$ or $(i_1,i_2)=(j_2,j_1)$,
\begin{equation}\label{prod_dnu}
    \mathbb{E}[d_{\nu}(X_{i_1},X_{i_2})d_{\nu}(X_{j_1},X_{j_2})]=0.
\end{equation}

By Hoeffding decomposition in \eqref{hoeffding}, under $H_0$, we have $V(k)=0$, $h_1=0$ and $\mathbb{E}h_2=0$,  $h^{(1)}=0$ and $\mathbb{E}h^{(2)}=0$. Therefore it suffices to consider $\mathbb{E}[h_c(Z_{i_1}^{(k)},\cdots,Z_{i_c}^{(k)})]$ for  $c=3,4,$ and  $k+1\leq i_1<\cdots,i_c\leq n$.  

For $c=3$, for $s<u<v$, we have
\begin{flalign}\label{h3}
\begin{split}
h_3(z_s,z_u,z_v)=&\frac{1}{12}\Big\{2\sum_{i_1<i_2}^{(s,u,v)}d_{\nu}(x_{i_1},x_{i_2})d_{\nu}(y_{i_1},y_{i_2})+2 \sum_{i}^{(s,u,v)}\mathbb{E}[d_{\nu}(x_i,X)d_{\nu}(y_i,Y)]\\&-\sum_{(i_1,i_2,i_3)}^{(s,u,v)}d_{\nu}(x_{i_1},x_{i_2})d_{\nu}(y_{i_1},y_{i_3})-\sum_{ i_1\neq i_2}^{(s,u,v)}\mathbb{E}[d_{\nu}(x_{i_1},X)d_{\nu}(y_{i_2},Y)]
\Big\}\\
=&h_{31}+h_{32}+h_{33}+h_{34}.
\end{split}
\end{flalign}
Clearly, by \eqref{prod_dnu}, $\mathbb{E}h_{31}=\mathbb{E}h_{32}=\mathbb{E}h_{34}=0$.  Recall for $z_{\ell}=Z_{\ell}^{(k)},$ $\ell\in(s,u,v)$, we have 
$$
h_{33}=-\frac{1}{12} \sum_{(i_1,i_2,i_3)}^{(s,u,v)}d_{\nu}(X_{i_1},X_{i_2})d_{\nu}(X_{i_1-k},X_{i_3-k}).
$$
where the nonnull expecation only appears in terms satisfying $(i_3-k,i_1-k)=(i_1,i_2)$, suggesting $s=u-k=v-2k$. Note there are at most $O(n-k)$ such $(s,u,v)$-tuples  that can satisfy the constraint. This implies that in the summands of U-statistics $U_{n,k}(h^{(3)})$, there are only $O(n-k)$ terms that have nonnull expectation, hence $\mathbb{E}U_{n,k}(h^{(3)})=O((n-k)^{-2})$.  Similar arguments also applies to $h_{4}$.

\end{proof}

\begin{lem}\label{lem_h0residual}
Under $H_0$, suppose \Cref{ass_moment} holds. Then, for $U_{n,k}(h^{(c)}), c=3,4,$ defined in \eqref{hoeffding}, with any fixed $k$, 
$$
\mathrm{Var}\left(U_{n,k}(h^{(c)})\right)\leq C(n-k)^{-3}, 
$$
for some uniform constant $C>0$ independent of $n$ and $k$.
\end{lem}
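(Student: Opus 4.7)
The plan is to expand the degenerate kernel $h^{(c)}$ as a polynomial in products of $d_\nu$'s, use an eigenfunction decomposition of the negative definite kernel $-d_\nu$, and then combinatorially count the index patterns that contribute to the variance. First, using Lemma \ref{lem_hoeff} together with the null vanishings $\theta=V(k)=0$, $h_1\equiv 0$, $\mathbb{E} h_2=0$ and $\mathbb{E}[d_\nu(x,X)d_\nu(y,Y)]=0$ (the last following from $X\indep Y\sim\nu$ under $H_0$ and $\mathbb{E} d_\nu(\cdot,X)=0$), the Hoeffding components $h^{(3)}$ and $h^{(4)}$ collapse into explicit polynomials in products of $d_\nu$'s evaluated at the coordinates of $z_i=(X_i,X_{i-k})$. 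For example, a direct computation yields
$$h^{(3)}(z_1,z_2,z_3)=-\tfrac{1}{12}\sum_{\pi\in S_3}d_\nu(x_{\pi(1)},x_{\pi(2)})d_\nu(y_{\pi(1)},y_{\pi(3)}),$$
and an analogous explicit formula holds for $h^{(4)}$.

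Since $(\Omega,d)$ is of strong negative type and $\nu\in M_4(\Omega)$, the centered kernel $-d_\nu$ is square integrable and positive definite on $L^2(\nu)$, so it admits a spectral expansion $-d_\nu(x,x')=\sum_\ell\mu_\ell f_\ell(x)f_\ell(x')$ with orthonormal $\{f_\ell\}$ satisfying $\mathbb{E} f_\ell(X)=0$ (the analog of Lemma \ref{lem_expand} applied to $-d_\nu$ on $\Omega$). Substituting this into every $d_\nu$ factor of the product $h^{(c)}(Z_I^{(k)})h^{(c)}(Z_J^{(k)})$ converts it into an $L^1$-convergent multiple series whose generic summand is a product of $f_{\ell_r}(X_{s_r})$ with $X$-indices $s_r\in(I\cup(I-k))\cup(J\cup(J-k))$. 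Under $H_0$ the $\{X_s\}$ are i.i.d.\ with $\mathbb{E} f_\ell(X_s)=0$, so such a summand has nonzero expectation only if every distinct $X$-index appears an even number of times in the combined multiset and the $f$-indices paired at the same $X_s$ coincide; this is precisely the matching rule behind Lemma \ref{lem_mds}(iii). H\"older's inequality and $\nu\in M_4(\Omega)$ bound each surviving expectation uniformly by a constant.

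It then remains to count non-vanishing pairs in
$$\mathrm{Var}(U_{n,k}(h^{(c)}))=\binom{n-k}{c}^{-2}\sum_{I,J}\mathbb{E}\bigl[h^{(c)}(Z_I^{(k)})h^{(c)}(Z_J^{(k)})\bigr]+o\bigl((n-k)^{-3}\bigr),$$
where the remainder absorbs $(\mathbb{E} U_{n,k}(h^{(c)}))^2=O((n-k)^{-4})$ by the argument in the proof of Lemma \ref{lem_exp}. For any admissible matching pattern the pairing constraints determine at least three of the $2c$ indices in $(I,J)$ from the remaining ones, leaving at most $2c-3$ free indices and hence $O((n-k)^{2c-3})$ nonzero pairs; combined with $\binom{n-k}{c}^{-2}=\Theta((n-k)^{-2c})$ this gives the desired $O((n-k)^{-3})$. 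For $c=3$ this matches the usual i.i.d.\ rate, while for $c=4$ we lose exactly one factor relative to the i.i.d.\ rate $O((n-k)^{-4})$, reflecting ``cross-shift'' matches like $i_a=j_b-k$ or $i_a-k=j_b$ that are unavailable without the temporal coupling in $Z_t^{(k)}$.

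The main obstacle is the case $c=4$: admissible shift-matching patterns are numerous, and one must verify that the complete degeneracy of $h^{(4)}$ on i.i.d.\ arguments combined with the pairing rule eliminates every pattern that would otherwise leave more than five free indices in $(I,J)$. A careful enumeration of the four types of pairings $i_a=j_b$, $i_a-k=j_b$, $i_a=j_b-k$, $i_a-k=j_b-k$ (together with pairings internal to $I$ or $J$) is the routine but tedious part; uniformity of $C$ in $k$ is automatic, since all matching equations are linear shifts by $\pm k$ and the counts depend only on $n-k$, not on $k$ itself.
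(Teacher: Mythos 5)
Your approach is genuinely different from the paper's. The paper decomposes
$U_{n,k}(h^{(c)})$ into a ``well-separated'' U-statistic $U_{n,k}(h^{(c)};>k)$, built only from $c$-tuples whose consecutive gaps exceed $k$, plus a remainder. Since the $Z_i^{(k)}$ with spacing $>k$ are i.i.d., standard Hoeffding theory gives $\mathrm{Var}(U_{n,k}(h^{(c)};>k)) \le C(n-k)^{-c}$, and Janson's (2021) Lemma~4.4 for $m$-dependent U-statistics controls the remainder by $C(n-k)^{-3}$; Cauchy--Schwarz then finishes. Your route tries to obtain the $O((n-k)^{-3})$ bound directly by spectrally expanding $-d_\nu$ and enumerating shift-matching patterns, deliberately avoiding the external $m$-dependent U-statistic machinery.

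The gap is that the central claim --- that every non-vanishing pair $(I,J)$ leaves at most $2c-3$ free indices --- is asserted, not established, and you explicitly flag the $c=4$ enumeration as ``routine but tedious'' without carrying it out. This is not a cosmetic omission: the enumeration is exactly where the $(n-k)^{-3}$ rate is generated, and it is not routine. Two concrete difficulties: (i) the degeneracy of $h^{(4)}$ gives only ``no super-index $i_a$ or $j_b$ has \emph{both} of $\{i_a,i_a{-}k\}$ (resp.\ $\{j_b,j_b{-}k\}$) unmatched'' --- a graph-connectivity condition, not a direct count of free coordinates --- and converting that to ``at most $2c-3$ free indices'' requires checking all combinations of within-$I$, within-$J$, and cross $I$--$J$ shift collisions ($i_a=j_b$, $i_a\pm k=j_b$, $i_a-k=j_b-k$, $i_a=i_b-k$, etc.); (ii) your stated formula already fails at $c=2$ (it would give $2c-3=1$ free index, whereas the true count is $2$), which shows the bound is not a one-line consequence of any simple invariant and that a per-$c$ case analysis is unavoidable. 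A subsidiary but real issue is the interchange of expectation with the $4$-fold multiple spectral series: Lemma~\ref{lem_expand} provides an expansion for $\mathcal K = d_\nu\otimes d_\nu$ under $M_4$, but you invoke a separate Mercer-type expansion of $-d_\nu$ on $\Omega$ and then substitute it into a \emph{product of four} $d_\nu$-factors; justifying absolute/$L^1$ convergence of that $4$-fold sum under $M_4(\Omega)$ is not automatic and is not addressed. As written, the argument reduces the lemma to an unproved combinatorial count, which is precisely the work the paper's citation of Janson (2021, Lemma~4.4) is designed to replace.
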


\begin{proof}
Note that $U_{n,k}(h^{(c)})$ is a U-statistic of order $c$ based on samples $Z_i^{(k)}=(X_i,X_{i-k})$, $k+1\leq i\leq n$. This implies that $\{Z_i^{(k)}\}$ is {\it $k$-dependent}. Following the treatment in \cite{janson2021asymptotic}, we define 
\begin{equation*}
    U_{n,k}(h^{(c)};>k)={n-k\choose c}^{-1}\sum_{\substack{k+1 \leqslant i_1<\cdots<i_{c}<n \\ i_{j+1}-i_j>k}} h^{(c)}\left(Z^{(k)}_{i_{1}}, \ldots, Z^{(k)}_{i_{c}}\right)
\end{equation*}
such that in above equation, $Z_{i_1}^{(k)},\cdots,Z_{i_c}^{(k)}$  are independent of each other. This implies that  $U_{n,k}(h^{(c)};>k)$ are formed based on independent copies of $Z_i^{(k)}$. By standard arguments in Hoeffding decomposion, see e.g., Theorem 3 in Chapter 1.6 of \cite{lee1990u}, we know that $\mathrm{Var}\left(U_{n,k}(h^{(c)};>k)\right)\leq C(n-k)^{-c}$.
Furthermore, by Lemma 4.4 in \cite{janson2021asymptotic}, when $\mathbb{E}h^2(\cdot)<\infty$ (which is ensure by Assumption \ref{ass_moment}),  we have that
$\mathrm{Var}\left(U_{n,k}(h^{(c)})-U_{n,k}(h^{(c)};>k)\right)\leq C(n-k)^{-3}$.  Hence, by Cauchy-Schwarz inequality,   we know that $$
\mathrm{Var}\left(U_{n,k}(h^{(c)})\right)\leq 2\max\left\{\mathrm{Var}\left(U_{n,k}(h^{(c)})-U_{n,k}(h^{(c)};>k)\right),\mathrm{Var}\left(U_{n,k}(h^{(c)};>k))\right)\right\}\leq 2C(n-k)^{-3}.
$$

\end{proof}


\section{Technical Proofs}

\noindent\textbf{Proof of Theorem \ref{thm_fix}}\label{Sec:proof}

(i). Leading term.

Under $H_0$, we have $V_X(k)=0$, $h_1(z)=0$ and $h^{(2)}(z,z')=h_2(z,z')=d_{\nu}(x,x')d_{\nu}(y,y')/6=\mathcal{K}(z,z')/6$,  with the kernel $\mathcal{K}(z,z')$ defined in \eqref{K_kernel}. Hence, by Hoeffding decomposition in Lemma \ref{lem_hoeff},
\begin{flalign*}
    \notag V_n(k) = & {n-k\choose 2}^{-1}\sum_{i=k+1}^{n}\sum_{j=i+1}^n \mathcal{K}(Z_i^{(k)},Z_j^{(k)})+\sum_{c=3}^4{4\choose c}U_{n,k}(h^{(c)}).
\end{flalign*}
By \Cref{lem_exp} and \Cref{lem_h0residual}, we know that $\mathbb{E}[U_{n,k}(h^{(c)})]^2=O((n-k)^{-3})$ for $c=3,4$, hence by Markov inequality, 
\begin{flalign}\label{Vtilde}
 V_n(k) = & {n-k\choose 2}^{-1}\sum_{i=k+1}^{n}\sum_{j=i+1}^n \mathcal{K}(Z_i^{(k)},Z_j^{(k)})+o_p(n^{-1}):= \tilde{V}_n(k)+o_p(n^{-1}).
\end{flalign}
Since $K$ is fixed, in the following proof, with minor abuse of notation, we can approximately consider 
$$
\tilde{V}_n(k)= {n-K\choose 2}^{-1}\sum_{i=K+1}^{n}\sum_{j=i+1}^n \mathcal{K}(Z_i^{(k)},Z_j^{(k)}).
$$

(ii). Approximation for $\mathcal{K}(z,z')$.

By Lemma \ref{lem_expand}, let
$$
\mathcal{K}^{(M)}(z,z')=\sum_{\ell=1}^{M}\lambda_{\ell} e_{\ell}(z)e_{\ell}(z')\to \mathcal{K}(z,z'),\quad \mbox{ as  } M\to\infty.
$$
Let 
\begin{flalign}
     \label{VM}\tilde{V}_n^{(M)}(k) =&{n-K\choose 2}^{-1} \sum_{i=K+1}^n\sum_{j=i+1}^n\mathcal{K}^{(M)}(Z_i^{(k)},Z_j^{(k)})\\\notag=&{n-K\choose 2}^{-1} \sum_{i=K+1}^n\sum_{j=i+1}^n\sum_{\ell=1}^{M}\lambda_{\ell} e_{\ell}(Z_i^{(k)})e_{\ell}(Z_j^{(k)})
\end{flalign}
Then, we have that, 
\begin{flalign}\label{Vk1}
\begin{split}
& n^2\mathbb{E}[\tilde{V}_{n}(k)-\tilde{V}_{n}^{(M)}(k)]^2 \\= & n^2 {n-K\choose 2}^{-2}\mathbb{E}\left[ \sum_{i=K+1}^n\sum_{j=i+1}^n\sum_{\ell=M+1}^{\infty}\lambda_{\ell} e_{\ell}(Z_i^{(k)})e_{\ell}(Z_j^{(k)})\right]^2 \\
 =&n^2{n-K\choose 2}^{-2}\mathbb{E}\left[ \sum_{i=K+1}^n\sum_{j=i+1}^n\sum_{\ell_1,\ell_2=M+1}^{\infty}\lambda_{\ell_1}\lambda_{\ell_2}  e^2_{\ell_1}(Z_i^{(k)})e^2_{\ell_2}(Z_j^{(k)})\right]
 \\\leq &C \sum_{\ell_1,\ell_2=M+1}^{\infty}\lambda_{\ell_1}\lambda_{\ell_2}\to 0,\quad \mbox{as } M\to\infty.
\end{split}
\end{flalign}
where the second equality holds by  Lemma \ref{lem_mds} (iii), and the inequality holds by Cauchy-Schwarz inequality $\mathbb{E}e^2_{\ell_1}(Z_i^{(k)})e^2_{\ell_2}(Z_j^{(k)})\leq [\mathbb{E}e^4_{\ell_1}(Z)]^{1/2}[\mathbb{E}e^4_{\ell_2}(Z)]^{1/2}\leq \sup_{\ell}\mathbb{E}[e_{\ell}^4(Z)]<\infty$, which is ensured by $\mathbb{E}(\mathcal{K}^2(Z,Z))<\infty$. The final convergence is ensured by $\sum_{\ell=1}^{\infty}\lambda_{\ell}=\mathbb{E}\mathcal{K}(Z,Z)<\infty$.

(iii). Joint convergence of $n\left\{\tilde{V}_{n}^{(M)}(k)\right\}_{k=1}^K$.

Note by \eqref{VM}, we have
\begin{flalign*}
n\tilde{V}_{n}^{(M)}(k)
=& \frac{n}{n-K-1}\sum_{\ell=1}^{M} \lambda_{\ell} \left\{\left[\frac{1}{\sqrt{n-K}}\sum_{i=K+1}^n e_{\ell}(Z_{i}^{(k)})\right]^2 -\frac{1}{{n-K}}\sum_{i=K+1}^n e_{\ell}^2(Z_{i}^{(k)})\right\}
\end{flalign*}

In view of \Cref{lem_mds} (i), we know that  $\left(\{e_{\ell}(Z_t^{(k)}\}_{\ell=1,\cdots,M;k=1,\cdots,K}),\mathcal{F}_{t-1}\right)$ forms a sequence of martingale difference for any fixed $M$ and $K$.     Hence $\frac{1}{{n-K}}\sum_{i=K+1}^n e_{\ell}^2(Z_{i}^{(k)})\to_p \mathbb{E}e_{\ell}^2(Z)=1$ by the weak law of large numbers, and
$$
\left\{\frac{1}{\sqrt{n-K}}\sum_{i=K+1}^n e_{\ell}(Z_{i}^{(k)})\right\}_{\ell=1,\cdots, M; k=1\cdots,K}\to_d \{G^{(k)}_{\ell}\}_{\ell=1,\cdots, M; k=1\cdots,K} 
$$
by CLT for martingale difference sequences and Cram\'er Wold device, where $\{G^{(k)}_{\ell}\}_{\ell=1,\cdots, M; k=1\cdots,K}$ is a sequence of i.i.d. standard normal distribution in view of \Cref{lem_mds} (ii).



Then continuous mapping theorem implies that 
$$
n\left(\tilde{V}_{n}^{(M)}(k)\right)_{k=1}^K\to_d \sum_{\ell=1}^M \lambda_{\ell} \{[G_{\ell}^{(k)}]^2-1\}.
$$

The final result then follows by letting $M\to\infty$ in view of \eqref{Vk1}.

\vspace{5mm}
\noindent\textbf{Proof of Theorem \ref{thm_spec}}

Denote 
\begin{equation}\label{Sn_decomp}
   S_n(\zeta)= \sum_{k=1}^{K}(n-k)V_n(k)\Psi_k(\zeta)+\sum_{k=K+1}^{n-4}(n-k)V_n(k)\Psi_k(\zeta):= S_n^K(\zeta)+R_n^K(\zeta),
\end{equation}
and for $\{\xi_k\}$ defined in Theorem \ref{thm_fix},
$$
S^K(\zeta)= \sum_{k=1}^{K}\xi_k\Psi_k(\zeta).
$$

By Proposition 6.3.9 of \cite{brockwell1991time}, to show the weak convergence of $S_n(\zeta)$, it suffices to show that 
(i) for each $K$, $S_n^K(\zeta)\Rightarrow S^K(\zeta)$ as $n\to\infty$; (ii) $S^K(\zeta)\Rightarrow S(\zeta)$ as $K\to\infty$; (iii) for any $\epsilon>0$, $\lim_{K\to\infty}\lim_{n\to\infty }\mathbb{P}(\|R_{n}^K\|>\epsilon)=0$.

(i). Note that  (i) follows by showing (a) the finite dimensional convergence of  $\{S_n^K(\zeta_i)\}_{i=1}^M$ for arbitrary finite $M$, $(\zeta_1,\zeta_2,\cdots,\zeta_M)$ with $\zeta_i\in[0,\pi]$; (b)  the tightness of  $S_n^K(\zeta)$.

First, as $K$ and $M$ are finite, the proof of (a) follows directly from Theorem \ref{thm_fix} and continuous mapping theorem. 

Second, for (b), we note that $ S_{n}^K(\zeta)= \sum_{k=1}^{K}(n-k){V}_n(k)\Psi_k(\zeta)$, and when $K$ is fixed, we only need to show the tightness of $(n-k){V}_n(k)\Psi_k(\zeta)$ for each $k$. 
Fix a complete orthonormal basis in $\mathbb{H}$, denoted by $e_1,e_2,\cdots$. By Lemma 7.1 in \cite{panaretos2013fourier}, it suffices to show that 
$$
\limsup_n \sum_{j=1}^{\infty}\mathbb{E}|\langle (n-k){V}_n(k)\Psi_k,e_j\rangle|^2 <\infty.
$$
In fact, for any $h\in \mathbb{H}$, denote $W_h(k)=\int_{[0,\pi]}h(\zeta)\Psi_k(\zeta)\mathrm{d}\zeta$.  Then 
\begin{flalign*}
    \mathbb{E}|\langle (n-k){V}_n(k),h\rangle|^2
=\mathbb{E}[(n-k){V}_n(k)]^2W_h^2(k)\leq CW_h^2(k),
\end{flalign*}
where we note that under Assumption \ref{ass_moment},  for any $n\geq k+4$, we have $\mathbb{E}[(n-k){V}_n(k)]^2\leq C<\infty.$
Hence, 
$$
\sum_{j=1}^{\infty}\mathbb{E}|\langle (n-k){V}_n(k)\Psi_k,e_j\rangle|^2 \leq C \sum_{j=1}^{\infty}W_{e_j}^2(k)=C\|\Psi_k\|^2<\infty.
$$




Hence, (b) holds, and this completes the proof for (i).

(ii) Holds trivially.

(iii) Recall $$
    R_{n}^K(\zeta)= \sum_{k=K+1}^{n-4}(n-k){V}_n(k)\Psi_k(\zeta).
$$
By Chebyshev inequality,  we have 
\begin{flalign*}
    &\mathbb{P}(\|R_{n,1}^K\|>\epsilon)\leq \epsilon^{-2}\mathbb{E}\|R_{n}^K\|^2= \epsilon^{-2} 
    \mathbb{E}\left[\sum_{k=K+1}^{n-4}(n-k)^2[{V}_n(k)]^2 \|\Psi_k\|^2\right]\leq 4C\epsilon^{-2}K^{-1},
\end{flalign*}
where the equality holds by noting that $\langle \Psi_{k_1},\Psi_{k_2}\rangle=0$ if $k_1\neq k_2$ and the last inequality holds using $\|\Psi_k\|^2\leq k^{-2}$.

Hence, (iii) follows by letting $K\to\infty$. This completes the proof.

\qed 
\vspace{5mm}

\noindent\textbf{Proof of  \Cref{thm_consis}}

 It suffices to show that $V_n(k)\to_p V(k)$. 
Recall the alternative representation in \eqref{kernel}, in view of   Theorem 1 (iii) in \cite{arcones1998law}, it suffices to show that for any $i,j,q,r$, for some $0<\delta\leq 1$, 
$$
\sup_{1\leq i,j,q,r<\infty} \mathbb{E}[|h(Z_i^{(k)},Z_j^{(k)},Z_q^{(k)},Z_r^{(k)})|(\log^+|h(Z_i^{(k)},Z_j^{(k)},Z_q^{(k)},Z_r^{(k)})|)^{(1+\delta)}]<\infty.
$$
For any $x\in\mathbb{R}$, by the elementary inequality that $\log(1+x)\leq x$ for $x\geq0$, we have 
$$
(\log^+|x|)^{1+\delta}=\left[\frac{1+\delta}{\delta}\log^+|x|^{\delta/(1+\delta)}\right]^{(1+\delta)}\leq \left[\frac{1+\delta}{\delta}\log(1+|x|^{\delta/(1+\delta)})\right]^{(1+\delta)}\leq  \left[\frac{1+\delta}{\delta}|x|^{\delta/(1+\delta)}\right]^{(1+\delta)}.
$$
Therefore, it suffices to show that for some $\delta>0$,
$$
\sup_{1\leq i,j,q,r<\infty} \mathbb{E}[|h(Z_i^{(k)},Z_j^{(k)},Z_q^{(k)},Z_r^{(k)})|^{1+\delta}]<\infty.
$$
In fact,
\begin{flalign}
\notag\sup_{i,j,q,r}E\{|h(Z_i^{(k)},Z_j^{(k)},Z_q^{(k)},Z_r^{(k)})|^{1+\delta}\}\leq &C \sup_{(i_1,i_2,i_3,i_4)}\left(E\left|a_{i_1 i_2} b_{i_3i_4}\right|^{1+\delta}+E\left|a_{i_1i_2} b_{i_1i_2}\right|^{1+\delta}+2E\left|a_{i_1i_2} b_{i_1i_3}\right|^{1+\delta}\right)\\\notag
\leq & C \sup_{i_1,i_2}Ed^{2+2\delta}(X_{i_1},X_{i_2})
\\\label{bound_h}\leq &  C \sup_{i_1,i_2} \mathbb{E}[|d(X_{i_1},\omega)+d(X_{i_2},\omega)|^{2+2\delta}]<\infty.
\end{flalign}
where the first inequality holds  by Minkowski inequality, the second by H\"older inequality, and the third by triangle inequality.

\qed 
\vspace{5mm}

\noindent\textbf{Proof of \Cref{thm_boot_fix}}

(i).  Leading term by  $ \tilde{V}_n^{*}(k) ={n-k\choose 2}^{-1} \sum_{i=k+1}^n\sum_{j=i+1}^n\mathcal{K}(Z_i^{(k)},Z_j^{(k)})w_i(k)w_j(k)$.

Recall  $\mathcal{K}(Z_i^{(k)},Z_j^{(k)})=d_{\nu}(X_i,X_{j})d_{\nu}(X_{i-k},X_{j-k}),$ with $\mathcal{K}(z,z')$ defined in \eqref{K_kernel}. It is clear that
$$\mathbb{E}^*\left[ \frac{1}{n-k-3}\sum_{k+1\leq i\neq j\leq n}\left\{\tilde{a}_{ij}\tilde{b}_{ij}-\mathcal{K}(Z_i^{(k)},Z_j^{(k)})\right\}w_i(k)w_j(k)\right]=0,$$ so we first want to show that 
\begin{flalign}\label{boot_leading}
\begin{split}
       &\mathrm{Var}^*\left[ \frac{1}{n-k-3}\sum_{k+1\leq i\neq j\leq n}\left\{\tilde{a}_{ij}\tilde{b}_{ij}-\mathcal{K}(Z_i^{(k)},Z_j^{(k)})\right\}w_i(k)w_j(k)\right]\\=&\frac{2}{(n-k-3)^2}\sum_{k+1\leq i\neq j\leq n}\left\{\tilde{a}_{ij}\tilde{b}_{ij}-\mathcal{K}(Z_i^{(k)},Z_j^{(k)})\right\}^2\to_p 0, 
\end{split}
\end{flalign}
where the equality holds using the joint independence of $\{w_i(k)\}$.

By Cauchy-Schwarz and H\"older  inequality, we have that \begin{flalign*}
   &\sum_{k+1\leq i\neq j\leq n}\left\{\tilde{a}_{ij}\tilde{b}_{ij}-\mathcal{K}(Z_i^{(k)},Z_j^{(k)})\right\}^2\\
   \leq &   2\sum_{k+1\leq i\neq j\leq n}[\tilde{a}_{ij}-d_{\nu}(X_i,X_{j})]^2\tilde{b}^2_{ij} +2\sum_{k+1\leq i\neq j\leq n}[\tilde{b}_{ij}-d_{\nu}(X_{i-k},X_{j-k})]^2d^2_{\nu}(X_i,X_{j})\\
   \leq &   4\sum_{k+1\leq i\neq j\leq n}[\tilde{a}_{ij}-d_{\nu}(X_i,X_{j})]^2[\tilde{b}_{ij}-d_{\nu}(X_{i-k},X_{j-k})]^2 \\
   &+ 4\sum_{k+1\leq i\neq j\leq n}[\tilde{a}_{ij}-d_{\nu}(X_i,X_{j})]^2d^2_{\nu}(X_{i-k},X_{j-k}) \\&+2\sum_{k+1\leq i\neq j\leq n}[\tilde{b}_{ij}-d_{\nu}(X_{i-k},X_{j-k})]^2d^2_{\nu}(X_i,X_{j})\\
   \leq & 4\left(\sum_{k+1\leq i\neq j\leq n}[\tilde{a}_{ij}-d_{\nu}(X_i,X_{j})]^4\right)^{1/2}\left(\sum_{k+1\leq i\neq j\leq n}[\tilde{b}_{ij}-d_{\nu}(X_{i-k},X_{j-k})]^4\right)^{1/2} 
   \\&+4\left(\sum_{k+1\leq i\neq j\leq n}[\tilde{a}_{ij}-d_{\nu}(X_i,X_{j})]^4\right)^{1/2}\left(\sum_{k+1\leq i\neq j\leq n}d^4_{\nu}(X_{i-k},X_{j-k})\right)^{1/2} 
   \\&+2\left(\sum_{k+1\leq i\neq j\leq n}[\tilde{b}_{ij}-d_{\nu}(X_{i-k},X_{j-k})]^4\right)^{1/2} \left(\sum_{k+1\leq i\neq j\leq n}d^4_{\nu}(X_i,X_{j})\right)^{1/2}.
\end{flalign*}
Under Assumption \ref{ass_moment}, it is clear that $\frac{1}{(n-k-3)^2}\sum_{k+1\leq i\neq j\leq n}d^4_{\nu}(X_i,X_{j})$ and $\frac{1}{(n-k-3)^2}\sum_{k+1\leq i\neq j\leq n}d^4_{\nu}(X_{i-k},X_{j-k})$ are bounded in probability. This implies we only need to show 
\begin{equation}\label{eq_sufU}
  \frac{1}{(n-k-3)^2}\sum_{k+1\leq i\neq j\leq n}[\tilde{a}_{ij}-d_{\nu}(X_i,X_{j})]^4\to_p 0
\end{equation}
as $$
 \quad \frac{1}{(n-k-3)^2}\sum_{k+1\leq i\neq j\leq n}[\tilde{b}_{ij}-d_{\nu}(X_{i-k},X_{j-k})]^4\to_p 0$$ is similar.
By Minkowski inequality, we have 
\begin{flalign*}
    &\mathbb{E}[\tilde{a}_{ij}-d_{\nu}(X_i,X_{j})]^4 \\\leq& C\left\{\mathbb{E}\left[\frac{\sum_{t=k+1}^n d(X_t,X_i)}{n-k-2}-D^{(1)}(X_i)\right]^4+\mathbb{E}\left[\frac{\sum_{k+1\leq t,t'\leq n} d(X_t,X_{t'})}{(n-k-1)^2}-D\right]^4 \right\}.
\end{flalign*}

For the first term,  under $H_0$, we note that  
\begin{flalign*}
  \mathbb{E}\prod_{s=1}^4 [d(X_{i_s},X_i)-D^{(1)}(X_i)]= \mathbb{E}\left\{\mathbb{E}\left(\prod_{s=1}^4 [d(X_{i_s},X_i)-D^{(1)}(X_i)]\big|X_i\right)\right\}=0
\end{flalign*}
for distinct 4-tuples $(i_1,i_2,i_3,i_4)$. This implies that the first term is at most of order $O(n^{-1})$ in view of Assumption \ref{ass_moment}.  Similarly, the second term is at most of order $O(n^{-1})$. Hence $\mathbb{E}[\tilde{a}_{ij}-d_{\nu}(X_i,X_{j})]^4 \to 0$, and by Markov  inequality, \eqref{eq_sufU} holds. Therefore \eqref{boot_leading} holds.

(ii). Approximation.

Denote \begin{equation}\label{tildeV_boot}
    \tilde{V}_n^{X,(M)*}(k) ={n-k\choose 2}^{-1} \sum_{i=k+1}^n\sum_{j=i+1}^n\mathcal{K}^{(M)}(Z_i^{(k)},Z_j^{(k)})w_i(k)w_j(k).
\end{equation}
We want to show that as $M\to\infty$,
\begin{equation}\label{boot_approximation}
    \mathbb{E}^*\left[n|\tilde{V}_n^{X,(M)*}(k)-\tilde{V}_n^{*}(k)|\right]^2\to_p 0 
\end{equation}
Recall $\mathcal{K}^{(M)}(z,z')=\sum_{\ell=1}^M \lambda_{\ell}e_{\ell}(z)e_{\ell}(z')$, we have
\begin{flalign*}
    &\mathbb{E}^*\left[n{n-k \choose 2}^{-1}\sum_{k+1\leq i<j\leq n}\left\{\mathcal{K}^{(M)}(Z_i^{(k)},Z_j^{(k)})-\mathcal{K}(Z_i^{(k)},Z_j^{(k)})\right\}w_i(k)w_j(k)\right]^2  \\
 =&n^2{n-k\choose 2}^{-2}\sum_{k+1\leq i<j\leq n}\left[ \sum_{\ell=M+1}^\infty \lambda_{\ell}e_{\ell}(Z_i^{(k)})e_{\ell}(Z_j^{(k)})\right]^2  
 \\=& {n-k\choose 2}^{-1}\sum_{\substack{k+1\leq i<j\leq n\\ j>i+k}}\left[ \sum_{\ell=M+1}^\infty \lambda_{\ell}e_{\ell}(Z_i^{(k)})e_{\ell}(Z_j^{(k)})\right]^2 \\& +{n-k\choose 2}^{-1}\sum_{\substack{k+1\leq i<j\leq n\\ j\leq i+k}}\left[ \sum_{\ell=M+1}^\infty \lambda_{\ell}e_{\ell}(Z_i^{(k)})e_{\ell}(Z_j^{(k)})\right]^2+o_p(1)\\
 \to_p&\mathbb{E}\left[ \sum_{\ell=M+1}^\infty \lambda_{\ell}e_{\ell}(Z)e_{\ell}(Z')\right]^2=\sum_{\ell=M+1}^{\infty}\lambda_{\ell}^2
\end{flalign*}
where we note that $Z_i^{(k)}$ and $Z_j^{(k)}$  are independent for $|i-j|>k$, and that there are at most $kn$ terms in the summation of $\sum_{\substack{k+1\leq i<j\leq n\\ j\leq i+k}}$.

Hence, \eqref{boot_approximation} follows by that $\sum_{\ell=M+1}^{\infty}\lambda_{\ell}^2\to0$ as $M\to\infty$.

(iii). Joint Convergence of $\{(n-k)\tilde{V}_n^{X,(M)*}(k)\}_{k=1}^K\to_{d^*} \{\sum_{\ell=1}^{M}\lambda_{\ell}[G_{\ell}^{(k)}]^2-1\}_{k=1}^K$ in probability.

By continuous mapping theorem, it suffices to show that, in probability,
\begin{flalign*}
   & \frac{1}{n-k}\sum_{i=k+1}^n w_i^2(k) e_{\ell}^2(Z_i^{(k)})\to_{p^*}1,
\\
&\left\{\frac{1}{\sqrt{n-k}}\sum_{i=k+1}^n w_i(k) e_{\ell}(Z_i^{(k)})\right\}_{\ell=1,\cdots,M; k=1,\cdots,K}\to_{d^*} \left\{G_{\ell}^{(k)}\right\}_{\ell=1,\cdots,M; k=1,\cdots,K}.
\end{flalign*}

Note that  $\mathbb{E}^*[\frac{1}{n-k}\sum_{i=k+1}^n w_i^2(k) e_{\ell}^2(Z_i^{(k)})]=\frac{1}{n-k}\sum_{i=k+1}^n e_{\ell}^2(Z_i^{(k)})\to_p 1$, and 
$
\mathrm{Var}^*[\frac{1}{n-k}\sum_{i=k+1}^n w_i^2 (k)e_{\ell}^2(Z_i^{(k)})]=\frac{1}{(n-k)^2}\sum_{i=k+1}^n e_{\ell}^4(Z_i^{(k)}) \mathrm{Var}(w_i^2(k))\to_p 0
$
by weak law of large numbers for $k-$dependent sequences and Slutsky's theorem. Hence, by Chebyshev inequality, $$
\frac{1}{n-k}\sum_{i=k+1}^n w_i^2(k) e_{\ell}^2(Z_i^{(k)})\to_{p^*}1,\quad \mbox{in probability}.$$

Next, note that for any fixed $\ell$ and $k$,  $\sum_{i=k+1}^n [(n-k)^{-1/2}e_{\ell}(Z_i^{(k)})]^2\to_p 1$ and
$$\frac{\sum_{i=k+1}^n [(n-k)^{-1/2}e_{\ell}(Z_i^{(k)})]^4}{\{\sum_{i=k+1}^n [(n-k)^{-1/2}e_{\ell}(Z_i^{(k)})]^2\}^2}\to_p 0,$$
which implies that Lyapunov central limit theorem holds in probability.

Furthermore, for any fixed $\ell_1, \ell_2$ and $k_1\leq k_2$, using the independence between $\mathbf{w}(k_1)$ and $\mathbf{w}(k_2)$ for $k_1\neq k_2$, we have 
\begin{flalign*}
  &\mathrm{Cov}^*(\frac{1}{\sqrt{n-k_1}}\sum_{i=k+1}^n w_i(k_1)e_{\ell_1}(Z_i^{(k_1)}),\frac{1}{\sqrt{n-k_2}}\sum_{i=k+1}^n w_i(k_2) e_{\ell_2}(Z_i^{(k_2)}))\\=&\frac{1}{n-k_2}\sum_{i=k_2+1}^n e_{\ell_1}(Z_i^{(k_2)})e_{\ell_2}(Z_i^{(k_1)})\mathbf{1}(k_1=k_2)
  \to_p \mathbf{1}(\ell_1=\ell_2,k_1=k_2)
\end{flalign*}
in view of \Cref{lem_mds}(ii).

Therefore, by Cram\'er-Wold device, we obtain the joint convergence. 

Finally, in view of \eqref{boot_leading} and \eqref{boot_approximation}, the result follows.

\qed
\vspace{5mm}

\noindent\textbf{Proof of \Cref{thm_boot_test}}
 
The proof is similar to \Cref{thm_spec}. By continuous mapping theorem, it suffices to show that 
$$
\{S_n^*(\zeta)\}_{\zeta\in[0,\pi]}\Rightarrow^* \{S(\zeta)\}_{\zeta\in[0,\pi]},\quad \mbox{in probability},
$$
where $\Rightarrow^*$ represents the weak convergence in $L_2[0,\pi]$ under bootstrap asymptotics.

Denote 
\begin{equation}\label{Sn_decomp_boot}
   S_n^*(\zeta)= \sum_{k=1}^{K}(n-k)V_n^{*}(k)\Psi_k(\zeta)+\sum_{k=K+1}^{n-4}(n-k)V_n^{*}(k)\Psi_k(\zeta):= S_n^{K*}(\zeta)+R_n^{K*}(\zeta).
\end{equation}
To show the weak convergence of $S_n^*(\zeta)$ in probability, it suffices to show  in probability
(i). for each $K$, $S_n^{K*}(\zeta)\Rightarrow S^K(\zeta)$ as $n\to\infty$; (ii). $S^{K}(\zeta)\Rightarrow S(\zeta)$ as $K\to\infty$; (iii). for any $\epsilon>0$, $\lim_{K\to\infty}\lim_{n\to\infty }\mathbb{P}^*(\|R_{n}^{K*}\|>\epsilon)=0$.  

(i) We need to show that (a). $\{S_n^{K*}(\zeta_i)\}_{i=1}^M\to_d \{S^K(\zeta_i)\}_{i=1}^M$ in probability;  (b). $S_n^{K*}(\zeta)$ is asymptotically tight conditional on the sample. 



The proof of (a) follows directly from Theorem \ref{thm_boot_fix} and continuous mapping theorem. As for $(b)$, note that $K$ is fixed, we thus only need to show the tightness of $(n-k)V_n^{*}(k)\Psi_k(\cdot)$ for each $1\leq k\leq K$, which is easily ensured if $\mathbb{E}^*[(n-k)V_n^{*}(k)]^2$ is finite. In fact, 
\begin{equation}\label{V_square_boot}
    \mathbb{E}^*[(n-k)V_n^{*}(k)]^2= 2(n-k-3)^{-2}\sum_{k+1\leq i\neq j\leq n}[\tilde{a}_{ij}(k)\tilde{b}_{ij}(k)]^2,  
\end{equation}
and using Minkowski inequality,  \begin{flalign}\label{bound_a}
    \mathbb{E}[\tilde{a}_{ij}\tilde{b}_{ij}(k)]^2\leq \mathbb{E}[\tilde{a}^4_{ij}]^{1/2}\mathbb{E}[\tilde{b}^4_{ij}(k)]^{1/2}\leq CEd^4(\omega,X)<\infty. 
\end{flalign}
Therefore, $\mathbb{E}\mathbb{E}^*[(n-k)V_n^{*}(k)]^2<\infty$, which implies that in probability, $\mathbb{E}^*[(n-k)V_n^{*}(k)]^2$ is finite. And thus, (i) is proved.

(ii) is trivially satisfied.

(iii)   Note that under $\mathbb{P}^*$, $V_n^{*}(k_1)$ is independent of $V_n^{*}(k_2)$ for $k_1\neq k_2$. Therefore, by Chebyshev's inequality, 
$$
\mathbb{P}^*(\|R_{n}^{K*}\|>\epsilon)\leq \epsilon^{-2} \mathbb{E}^*\|R_{n}^{K*}\|^2= \epsilon^{-2} \sum_{k=K+1}^n[(n-k)V_n^{*}(k)]^2 \|\Psi_k\|^2.
$$
By similar arguments used in proving (i)(b) above, and that $\|\Psi_k\|^2\leq k^{-2}$, we have 
$\mathbb{P}^*(\|R_{n}^{K*}\|>\epsilon)=O_p(K^{-1})$. Therefore, letting $K\to\infty$ the results follows. 
\qed

\vspace{5mm}
\noindent\textbf{Proof of \Cref{thm_boot_consis}}
 By \Cref{thm_consis},  it suffices to show that $CvM^*_n=O^*_p(1)$  in probability, where $O_p^*(1)$ is analogous to $O_p(1)$ but for bootstrap sample asymptotics.

 Note that $CvM^*_n>0$, hence by Markov inequality, we only need to show $\mathbb{E}\mathbb{E}^*( \textsc{CvM}_n^*)=O(1)$. In fact, using the fact that  $\mathbb{E}^*V_n^{*}(k)=0$, and $\mathrm{Cov}^*(V_n^{*}(k),V_n^{*}(k'))=0$ for $k\neq k'$, we have 
\begin{flalign*}
     \mathbb{E}^*( \textsc{CvM}_n^*)=&\sum_{k=1}^{n-4}(n-k)^2\mathbb{E}^*({V}_n^{*}(k))^2\|\Psi_k\|^2
    \\=&2\sum_{k=1}^{n-4}(n-k-3)^{-2}\sum_{k+1\leq i\neq j\leq n}[\tilde{a}_{ij}(k)\tilde{b}_{ij}(k)]^2\|\Psi_k\|^2.
\end{flalign*}
By \eqref{bound_a}, we have
$
 \mathbb{E}\mathbb{E}^*( \textsc{CvM}_n^*)\leq
C \sum_{k=1}^{n-4} k^{-2}\leq C.
$
This implies that $ \mathbb{E}^*( \textsc{CvM}_n^*)$ is bounded in probability, the result follows.

\qed

\vspace{10pt}
\bibliographystyle{apalike}
\bibliography{arxiv_V2}

\end{appendices}
\end{document}